\documentclass{lmcs} %
\pdfoutput=1
\usepackage[utf8]{inputenc}

% LMCS Layouting Macros
\usepackage{lastpage}
\lmcsdoi{20}{2}{16}
\lmcsheading{}{\pageref{LastPage}}{}{}%
{Aug.~21,~2020}{Jun.~19,~2024}{}

\keywords{dependent type theory, presheaf models, modal type theory, homotopy type theory, parametricity, directed type theory, guarded type theory}

\usepackage{hyperref}
\usepackage{cleveref}
\usepackage{xypic}
\theoremstyle{plain} %

\allowdisplaybreaks

\newcommand{\renewtheorem}[1]{%
  \expandafter\let\csname #1\endcsname\relax
  \expandafter\let\csname c@#1\endcsname\relax
  \expandafter\let\csname end#1\endcsname\relax
  \newtheorem{#1}%
}

\theoremstyle{plain}

\renewtheorem{thm}{Theorem}[section]
\renewtheorem{cor}[thm]{Corollary}
\renewtheorem{lem}[thm]{Lemma}
\renewtheorem{prop}[thm]{Proposition}
\renewtheorem{asm}[thm]{Assumption}

\theoremstyle{definition}

\renewtheorem{rem}[thm]{Remark}
\renewtheorem{rems}[thm]{Remarks}
\renewtheorem{exa}[thm]{Example}
\renewtheorem{exas}[thm]{Examples}
\renewtheorem{nota}[thm]{Notation}
\renewtheorem{defi}[thm]{Definition}
\renewtheorem{conv}[thm]{Convention}
\renewtheorem{conj}[thm]{Conjecture}
\renewtheorem{prob}[thm]{Problem}
\renewtheorem{oprob}[thm]{Open Problem}
\renewtheorem{algo}[thm]{Algorithm}
\renewtheorem{obs}[thm]{Observation}
\renewtheorem{qu}[thm]{Question}
\renewtheorem{fact}[thm]{Fact}
\renewtheorem{pty}[thm]{Property}

\newenvironment{proposition}[0]{\begin{prop}}{\end{prop}}
\newenvironment{theorem}[0]{\begin{thm}}{\end{thm}}
\newenvironment{definition}[0]{\begin{defi}}{\end{defi}}
\newenvironment{example}[0]{\begin{exa}}{\end{exa}}
\newenvironment{remark}[0]{\begin{rem}}{\end{rem}}
\newenvironment{notation}[0]{\begin{nota}}{\end{nota}}
\newenvironment{corollary}[0]{\begin{cor}}{\end{cor}}

\usepackage{lmcs-macros}

\begin{document}

\title{Transpension: The Right Adjoint to the Pi-type}

\author[A.~Nuyts]{Andreas Nuyts\lmcsorcid{0000-0002-1571-5063}}	%
\thanks{Andreas Nuyts holds a Postdoctoral Fellowship from the Research Foundation - Flanders (FWO; 1247922N), and carried out most of this research holding a PhD Fellowship from the Research Foundation - Flanders (FWO; 1110817N). This research was partially conducted at Vrije Universiteit Brussel and funded by the Research Foundation - Flanders (FWO; G0G0519N). This research is partially funded by the Research Fund KU Leuven.}	%

\address{DistriNet, KU Leuven, Belgium}	%
\email{andreas.nuyts@kuleuven.be, dominique.devriese@kuleuven.be}  %

\author[D. Devriese]{Dominique Devriese\lmcsorcid{0000-0002-3862-6856}}%

\begin{abstract}
\noindent 
Presheaf models of dependent type theory have been successfully applied to model HoTT, parametricity, and directed, guarded and nominal type theory.
There has been considerable interest in internalizing aspects of these presheaf models, either to make the resulting language more expressive, or in order to carry out further reasoning internally, allowing greater abstraction and sometimes automated verification.
While the constructions of presheaf models largely follow a common pattern, approaches towards internalization do not.
Throughout the literature, various internal presheaf operators ($\amaze$, $\Phi/\name{extent}$, $\Psi/\Gel$, $\Glue$, $\Weld$, $\mill$, the strictness axiom and locally fresh names) can be found and little is known about their relative expressiveness.
Moreover, some of these require that variables whose type is a shape (representable presheaf, e.g.\ an interval) be used affinely.

We propose a novel type former, the transpension type, which is right adjoint to universal quantification over a shape.
Its structure resembles a dependent version of the suspension type in HoTT.
We give general typing rules and a presheaf semantics in terms of base category functors dubbed multipliers.
Structural rules for shape variables and certain aspects of the transpension type depend on characteristics of the multiplier.
We demonstrate how the transpension type and the strictness axiom can be combined to implement all and improve some of the aforementioned internalization operators (without formal claim in the case of locally fresh names).
\end{abstract}

\maketitle

\section{Introduction and Related Work} \label{sec:intro}

\subsection{The power of presheaves}
Presheaf semantics \cite{Hofmann97-presheaf-chapter,psh-universes} are an excellent tool for modelling relational preservation properties of (dependent) type theory. They have been applied to
parametricity (which is about preservation of relations) \cite{dtt-parametricity,moulin-param3,reldtt,paramdtt},
univalent type theory (preservation of equivalences) \cite{model-cubical,cubical-unifying,cubical,huber,univalence,orton-phd,orton-pitts-axioms},
directed type theory (preservation of morphisms),
guarded type theory (preservation of the stage of advancement of computation) \cite{clock-cat}
and even combinations thereof \cite{guarded-cubical,cavallo-harper-paramhott-journal,riehl-shulman-dtt,weaver-licata-dua}.%
\footnote{We omit models that are not explicitly structured as presheaf models~\cite{cartesian-cubical-tt,2dtt,north-dirdtt}.}
The presheaf models just cited almost all follow a common pattern: First one chooses a suitable base category $\catW$.
The presheaf category over $\catW$ is automatically a model of dependent type theory with the important basic type formers \cite{Hofmann97-presheaf-chapter} as well as a tower of universes \cite{psh-universes}.
Next, one identifies a suitable notion of fibrancy and replaces or supplements the existing type judgement $\Gamma \sez T \type$ with one that classifies fibrant types:
\begin{description}
	\item[HoTT] For homotopy type theory (HoTT, \cite{hottbook}), one considers Kan fibrant types, i.e.\  presheaves in which edges can be composed and inverted as in an $\infty$-groupoid.
	The precise definition may differ in different treatments.
	\item[Parametricity] For parametric type theory, one considers discrete types \cite{dtt-parametricity,cavallo-harper-paramhott-journal,reldtt,paramdtt}: essentially those that satisfy Reynolds' identity extension property \cite{reynolds} which states that homogeneously related objects are equal. This can be expressed by requiring that any non-dependent function $\IX \to A$ from the relational interval, is constant.
	\item[Directed] In directed type theory, one may want to consider Segal, covariant, discrete and Rezk types \cite{riehl-shulman-dtt} and possibly also Conduch\'e types \cite{conduche,robust}\cite[ex.\ 8.1.27]{nuyts-phd}.
	\item[Guarded] In guarded type theory, one considers clock-irrelevant types \cite{clock-cat}: types $A$ such that any non-dependent function $\clocksym \to A$ from the clock type, is constant.
	
	\item[Nominal] Nominal type theory \cite{nomdtt,freshmltt} can be modelled in the Schanuel topos \cite[\S 6.3]{nominal-sets}.
    This is the subcategory of nullary affine cubical sets (see \cref{ex:multip:affine-cubes} later on) that send pushouts in the base category to pullbacks in $\Set$.
    This ensures that if a cell depending on names $\accol{i, j, k}$ in fact only depends on $\accol{i, j}$ and in fact also only depends on $\accol{i, k}$, then it only depends on $\accol{i}$.
\end{description}
To the extent possible, one subsequently proves that the relevant notions of fibrancy are closed under basic type formers, so that we can restrict to fibrant types and still carry out most of the familiar type-theoretic reasoning and programming.
Special care is required for the universe $\uni{}$: it is generally straightforward to adapt the standard Hofmann-Streicher universe to classify only fibrant types, but the universe of fibrant types is in general not automatically fibrant itself.
\begin{description}
	\item[HoTT] In HoTT, the Hofmann-Streicher universe of Kan types is usually automatically Kan.
	\item[Parametricity] In earlier work on parametricity with Vezzosi \cite{paramdtt,reldtt}, we made the universe of discrete types discrete by modifying its presheaf structure and introduced a parametric modality in order to use that universe.
	In contrast, Atkey et al.\ \cite{dtt-parametricity} and Cavallo and Harper \cite{cavallo-harper-paramhott-journal} simply accept that their universes of discrete types are not discrete.
	\item[Directed] In directed type theory, one could expect, perhaps via a directed univalence result \cite{weaver-licata-dua}, that the universe of covariant types is Segal.
	\item[Guarded] In guarded type theory, Bizjak et al.\ \cite{bgcmb16} let the universe depend on a collection of in-scope clock variables lest the clock-indexed later modality $\rhd : \forall(\kappa : \clocksym).\uni{\Delta} \to \uni{\Delta}$ (where $\kappa \in \Delta$) be non-dependent and therefore constant (not clock-indexed) by clock-irrelevance of $\uni{} \to \uni{}$ \cite{clock-cat}.
\end{description}

\subsection{Internalizing the power of presheaves} \label{sec:internalization}
Purely metatheoretic results about type theory certainly have their value.
Pa\-ra\-me\-tri\-city, for instance, has originated and proven its value as a metatheoretic technique for reasoning about programs.
However, with dependent type theory being not only a programming language but also a logic, it is preferable to formulate results about it within the type system, rather than outside it.
We highlight two particular motivations for doing so: to enlarge the end user's toolbox, and to be able to prove internally that a type is fibrant.

\subsubsection*{Enlarging the end user's toolbox}
One motivation for internalizing metatheorems is to enlarge the toolbox of the end user of the proof assistant.
If this is the only goal, then we can prove the desired results in the model on pen and paper and then internalize them ad hoc with an axiom with or without computation rules.
\begin{description}
	\item[HoTT] Book HoTT \cite{hottbook} simply postulates the univalence axiom without computational behaviour, as justified e.g. by the model of Kan-fibrant simplicial sets \cite{univalence}.
	
	CCHM cubical type theory \cite{cubical} provides the $\Glue$ type, which comes with introduction, elimination, $\beta$- and $\eta$-rules and which turns the univalence axiom into a theorem with computational behaviour. It also contains CCHM-Kan-fibrancy of all types as an axiom, in the form of the CCHM-Kan composition operator, with decreed computational behaviour that is defined by induction on the type.

	\item[Parametricity] Bernardy, Coquand and Moulin \cite{moulin-param3,moulin} (henceforth: BCM) internalize their (unary, but generalizable to $k$-ary) cubical set model of parametricity using two combinators $\Phi$ and $\Psi$ \cite{moulin}, a.k.a.\ $\name{extent}$ and $\Gel$ \cite{cavallo-harper-paramhott-journal}. $\Phi$ internalizes the presheaf structure of the function type, and $\Psi$ that of the universe.

    The combinator $\Phi$ and at first sight also $\Psi$ require that the cubical set model lacks diagonals.
    Indeed, to construct a value over the primitive interval, $\Phi$ and $\Psi$ each take one argument for every endpoint and one argument for the edge as a whole.
    Nested use of these combinators, e.g. to create a square, will take $(k+1)^2$ arguments for $k^2$ vertices, $2k$ sides and $1$ square as a whole but none for specifying the diagonal.
    For this reason, BCM's type system enforces a form of \emph{affine} use of interval variables.
    Similarly, connections as in CCHM \cite{cubical} are ruled out.
    In the current paper, we will see that these requirements are not absolute for $\Psi$: there is apparently a very natural `automatic' way to define the behaviour on diagonals and connections where the $\Psi$-type is not explicitly specified by its arguments.

    In earlier work with Vezzosi \cite{paramdtt}, we have internalized parametricity instead using the $\Glue$ type \cite{cubical} and its dual $\Weld$.
    Later on, we added a primitive $\mill$ \cite{psh-charting-design-space} for swapping $\Weld$ and $\Pi(i : \IX)$.
    These operations are sound in presheaves over any base category where we can multiply with $\IX$ -- including cube categories with diagonals or connections -- and are (therefore) strictly less expressive than $\Phi$ which is not.
    Discreteness of all types was internalized as a non-computing \emph{path degeneracy} axiom.\footnote{It is worth noting that it was not possible to use affine interval variables in the setting of \cite{paramdtt}: The type system features parametric $\Pi$-types which are modelled as ordinary $\Pi$-types with non-discrete domain. Discreteness of the $\Pi$-type can be proven solely from discreteness of the codomain, simply by swapping interval variable and function argument. This is however not possible in the affine setting, where only variables introduced prior to an interval variable are taken to be fresh for that interval variable and the exchange rule with an interval variable only works one way.}

    \item[Directed] Weaver and Licata \cite{weaver-licata-dua} use a bicubical set model to show that directed HoTT \cite{riehl-shulman-dtt} can be soundly extended with a directed univalence \emph{axiom}.

	\item[Guarded] In guarded type theory \cite{clock-cat},
	one axiomatizes L\"ob induction and clock-irrelevance.
	
	\item[Nominal] One version of nominal type theory \cite{freshmltt} provides the locally fresh name abstraction $\nu(i : \IX)$ which can be used anywhere (i.e.\ the goal type remains the same after we abstract over a fresh name). The operation introduces a name but requires a body that is fresh for the name (i.e.\ we do not get to use it). This would be rather useless, were it not that we are allowed to \emph{capture} the fresh name (see \cref{sec:recover}).
\end{description}

\subsubsection*{Internalizing fibrancy proofs}
Another motivation to internalize aspects of presheaf categories, is for building parts of the model inside the type theory, thus abstracting away certain categorical details such as the very definition of presheaves, and for some type systems enabling automatic verification of these constructions.
Given the common pattern in models described in the previous section, it is particularly attractive to try and define fibrancy and prove results about it internally.

In the context of HoTT, Orton and Pitts \cite{orton-phd,orton-pitts-axioms} study CCHM-Kan-fibrancy \cite{cubical} in a type theory extended with a set of axioms, of which all but one serve to characterize the interval and the notion of cofibration. One axiom, \emph{strictness}, provides a type former $\Strict$ for strictifying partial isomorphisms, which exists in every presheaf category.
In order to construct a universe of fibrant types, Licata et al.\ postulate an ``amazing right adjoint'' $\IX \amaze \loch$ to the non-dependent path functor $\IX \to \loch$  \cite{internal-universes,orton-phd}, which indeed exists in presheaves over cartesian base categories if $\IX$ is representable.
Since $\IX \amaze \loch$ and its related axioms are global operations (only applicable to closed terms, unless you want to open Pandora's box as we do in the current paper), they keep everything sound by introducing a judgemental comonadic \emph{global} modality $\flat$.

Orton et al.'s formalization \cite{internal-universes,orton-phd,orton-pitts-axioms} is only what we call \emph{meta-internal}: the argument is internalized to \emph{some} type theory which still only serves as a metatheory of the type system of interest.
Ideally, we would also be able to define and prove fibrancy of types \emph{within} the type theory of interest, which we call \emph{auto-internal}. This has several advantages:
\begin{itemize}
	\item A general approach to auto-internalization of notions of fibrancy saves us from a proliferation of type systems, each with axiomatic internal fibrancy operations with hard-coded computational behaviour that proceeds by case analysis on the construction of the type. Proving fibrancy auto-internally will in general be more typesafe than hard-coding it in a language implementation that is often written in a simply-typed language such as Haskell and OCaml.
	\item Given an auto-internal implementation, we can still pretend that we have a meta-internal situation by restricting ourselves to a subset of the language. But we automatically get a two-level type theory \cite{hts,annenkov-2ltt}, where we have access to non-fibrant types from within. (This does not prove conservativity of two-level type theory over the object system.)
	\item In directed type theory, there are various relevant notions of fibrancy, many of which are not well preserved by basic type formers, so access to non-fibrant types may be a necessity to get any work done at all.
\end{itemize}
Auto-internal treatments exist of discrete types in parametricity \cite{cavallo-harper-paramhott-journal}, and discrete, fibrewise-Segal and Rezk types in directed type theory \cite{riehl-shulman-dtt}, but not yet for covariant, Segal or Kan fibrant types due to the need to consider paths in the context $\IX \to \Gamma$.

\subsection{The transpension type} \label{sec:quantifiers-as-modalities}

What is striking about the previous section is that, while most authors have been able to solve their own problems, a common approach is completely absent. We have encountered $\Phi$ and $\Psi$ \cite{moulin}, the amazing right adjoint $\amaze$ \cite{internal-universes}, $\Glue$ \cite{cubical,paramdtt}, $\Weld$ \cite{paramdtt}, $\mill$ \cite{psh-charting-design-space}, the strictness axiom \cite{orton-pitts-axioms} and locally fresh names \cite{freshmltt}. We have also seen that $\Phi$ and $\Psi$ presently require an affine base category, and that $\amaze$ presently requires the global modality $\flat$.

The goal of the current paper is to develop a smaller collection of internal primitives that impose few restrictions on the choice of base category and allow the internal construction of the aforementioned operators when sound.
To this end, we introduce the \textbf{transpension} type former $\transpshrt{i} : \Ty(\Gamma) \to \Ty(\Gamma, i : \IX)$ which in cartesian settings is right adjoint to $\Pi(i : \IX) : \Ty(\Gamma, i : \IX) \to \Ty(\Gamma)$ and is therefore not a quantifier binding $i$, but a coquantifier that \emph{depends} on it.
This same operation was already considered in topoi by Yetter \cite{yetter}, who named it $\nabla$.
Using the transpension and $\Strict$, we can construct $\Phi$ (when sound), $\Psi$, $\amaze$ and $\Glue$, and heuristically translate a subsystem of the nominal dependent type system FreshMLTT \cite{freshmltt} featuring variable capture and locally fresh names.
Given a type former for certain pushouts, we can also construct $\Weld$ and $\mill$.
The transpension coquantifier $\transplong{u : \IU} : \Ty(\Gamma) \to \Ty(\Gamma, u : \IU)$ is part of a sequence of adjoints $\pairshrt{u} \dashv \wknshrt{u} \dashv \funcshrt{u} \dashv \transpshrt{u}$, preceded by the $\Sigma$-type, weakening and the $\Pi$-type.
Adjointness of the first three is provable from the structural rules of type theory. 
However, it is not immediately clear how to add typing rules for a further adjoint.
Birkedal et al.\ \cite{dra} explain how to add a single modality that has a left adjoint in the semantics.
If we want to have two or more adjoint modalities internally, then we can use a multimodal type system such as \MTT{} \cite{mtt-journal,mtt}.
Each modality in \MTT{} needs a semantic left adjoint, so we can only internalize $\wknshrt{u}$, $\funcshrt{u}$ and $\transpshrt{u}$.
A drawback which we accept (as a challenge for future work), is that $\wknshrt{u}$ and $\funcshrt{u}$ become modalities which are a bit more awkward to deal with than ordinary weakening and $\Pi$-types.

A further complication is that the aforementioned modalities bind or depend on a variable, a phenomenon which is not supported by \MTT{}.
We solve this by grouping shape variables such as $u : \IU$ in a \textbf{shape context} which is not considered part of the type-theoretic context but instead serves as the \emph{mode} of the judgement.
This way, we are also rid of the requirement that all internal operations commute with shape substitution; in fact, the transpension generally does not (\cref{sec:ff:rules:discussion}, \cite{transpension-techreport}).

\subsection{Contributions}
Our central contribution is to reduce the plethora of interal presheaf operators in the literature to only a few operations. 
\begin{itemize}
	\item To this end, we formulate a type system \Msys{} featuring a \textbf{transpension type} $\transplong{u : \IU}$, right adjoint to $\funclong{u : \IU}$, with typing rules built on \emph{extensional} \MTT{} \cite{mtt-journal,mtt}.
    We explain how it is reminiscent of the suspension type from HoTT \cite{hottbook}.
	
	\item More generally, the transpension type can be right adjoint to any quantifier-like operation $\lollilong{u : \IU}$ which need neither respect the exchange rule, nor weakening or contraction. In this setting, we also introduce the \textbf{fresh weakening} coquantifier $\freshlong{u : \IU}$, which is left adjoint to $\lollilong{u : \IU}$ and therefore coincides with weakening $\wknlong{u : \IU}$ in cartesian settings.
	
	\item We provide a categorical semantics for $\transplong{u : \IU}$ in almost any presheaf category $\Psh(\catW)$ over base category $\catW$, for almost any representable object $\IU = \yoneda U$, $U \in \catW$.
    To accommodate non-cartesian variables, our system is not parametrized by a representable object $\IU = \yoneda U$, but by an arbitrary endofunctor $\loch \multip U$ on $\catW$: the \textbf{multiplier}.\footnote{In the technical report \cite{transpension-techreport}, we generalize multipliers beyond \emph{endo}functors.}
		We introduce \textbf{criteria} for characterizing the multiplier
		(\cref{def:multip})
		which we use as requirements for internal type theoretic features.
		We identify a complication for base categories (most notably in guarded and nominal type theory) that are not \emph{objectwise pointable}, and define dimensionally split morphisms (a generalization of split epimorphisms) in order to include those base categories.
		We exhibit relevant multipliers in base categories found in the literature (\cref{sec:examples}).

	\item We show that \textbf{all general presheaf internalization operators} that we are aware of -- viz. $\Phi$/$\name{extent}$
  (when sound),
  $\Psi$/$\Gel$ \cite{moulin,moulin-param3}, the amazing right adjoint $\amaze$ \cite{internal-universes}, $\Glue$ \cite{cubical,paramdtt}, $\Weld$ \cite{paramdtt}, $\mill$ \cite{psh-charting-design-space} and (with no formal claim) locally fresh names -- can be \textbf{recovered} from just the transpension type, the strictness axiom and pushouts along $\snd : \vfi \times A \to A$ where $\vfi : \Prop$ (see \cref{fig:recover} for a dependency graph).
  In the process, some of these operators can be \textbf{improved}: We generalize $\Psi$ from affine-like ($\top$-slice full) to arbitrary multipliers, including cartesian ones and we justify $\IU \amaze \loch$ without a global modality and get $\beta$- and $\eta$-rules for it.
  Moreover, since our system provides an operation $\locknovar{\transpshrt{u}}$ for quantifying contexts, we take a step towards auto-internalizing Orton et al.'s work \cite{internal-universes,orton-phd,orton-pitts-axioms}.
  When $\Phi$ is not sound (e.g. in settings with diagonals or connections), we suggest the internal notion of \textbf{transpensive} types to retain some of its power.
  Finally, a form of higher dimensional pattern matching is enabled by exposing $\lollilong{u : \IU}$ internally as a left adjoint.

	\item In a technical report \cite{transpension-techreport}, we investigate how the modalities introduced in this paper commute with each other, and with prior modalities (i.e.\  those already present before adding the transpension type). We also consider composite multipliers, and natural transformations between modalities (called 2-cells in \MTT{}) arising from natural transformations between multipliers.
\end{itemize}
While \MTT{} \cite{mtt-journal,mtt} satisfies canonicity and even normalization insofar as its modality system (called the \emph{mode theory}) does \cite{mtt-normalization},
we will instantiate \MTT{} on a mode theory for which we presently do not have a computational theory,
and we will extend it with some additional typing rules.
For this reason, we build on extensional \MTT{} \cite{mtt}, and defer canonicity and decidability of type-checking to future work.

\subsection{Overview of the paper}
In \cref{sec:ff}, we study in a simplified setting (a system called \FFsys) how the transpension type resembles the suspension type from HoTT and demonstrate how to put it in action.
In \cref{sec:mtt}, we give a brief overview of the typing rules of \MTT{} and decorate the \MTT{} syntax with \emph{left adjoint reminders}.
In \cref{sec:quantifiers-as-modalities-full}, we define the mode theory of \Msys{}, an instantiation of \MTT{}. This mode theory is extremely general and essentially contains all semantic adjoint pairs of a given domain and codomain as modalities between the corresponding modes. In the next two sections, we highlight a number of interesting modalities:
in \cref{sec:subst} we consider modalities related to shape substitutions, and in \cref{sec:substructural} we define and study multipliers and consider modalities arising from them, including the transpension modality.
In \cref{sec:comparison}, we pseudo-embed \FFsys{} in \Msys{} and generalize some of the results obtained for \FFsys{}.
In \cref{sec:add}, we supplement \Msys{} with a few specialized typing rules.
In \cref{sec:structure}, we investigate the structure of the transpension type in \Msys{}.
In \cref{sec:recover}, we explain how to recover known internal presheaf operators.
We conclude in \cref{sec:discussion}.

\section{First Steps: A Fully Faithful Transpension System (\FFsys{})} \label{sec:ff}
In this section, for purposes of demonstration, we present simplified typing rules for the transpension type which apply in a specific setting (as proven in \cref{sec:comparison}).
Using these, we will already be able to exhibit the transpension type as similar to a dependent version of the suspension type in HoTT \cite{hottbook}, and to prove internally that it is right adjoint to universal quantification.
Moreover, in order to showcase how the transpension type allows us to internalize the presheaf structure of other types, we will demonstrate a technique which we call higher-dimensional pattern matching and which has already been demonstrated by Pitts \cite{nominal-transp} in nominal type theory using locally fresh names \cite{freshmltt}.

\begin{figure}
	\small
	\vspace*{-2em}
	\figtitle{Linear/affine shape variables:}
	\begin{equation*}
		\inferencel{ff:ctx-shp}{
			\Gamma \ctx
		}{
			\Gamma, u : \IU \ctx
		}{}
		\qquad
		\inferencel{ff:ctx-shp:fmap}{
			\sigma : \Gamma \to \Gamma'
		}{
			(\sigma, u/u') : (\Gamma, u : \IU) \to (\Gamma', u' : \IU)
		}{}
		\qquad
		\inferencedeadl{\inferencelabel{ff:ctx-shp:wkn} (optional)}{
			\sigma : \Gamma \to \Gamma'
		}{
			\sigma : (\Gamma, u : \IU) \to \Gamma'
		}{}
	\end{equation*}
	
	\figskip
	
	\figtitle{Linear/affine function type:}
	\begin{equation*}
		\inferencel{ff:forall}{
			\Gamma, u : \IU \sez A \type
		}{
			\Gamma \sez \lollishrt{u} .A \type
		}{}
		\qquad
		\inferencel{ff:forall:intro}{
			\Gamma, u : \IU \sez a : A
		}{
			\Gamma \sez \lambda u.a : \lollishrt{u} .A
		}{}
		\qquad
		\inferencel{ff:forall:elim}{
			\Gamma \sez f : \lollishrt{u} . A \\
			\text{No shape vars in $\Delta$}
		}{
			\Gamma, u : \IU, \delta : \Delta \sez f\,u : A
		}{}
	\end{equation*}
	
	\figskip
	
	\figtitle{Telescope quantification:}
	\vspace*{-2ex}
	\begin{equation*}
		\inferencel{ff:ctx-forall}{
			\Gamma, u : \IU, \delta : \Delta \ctx \\
			\text{No shape vars in $\Delta$}
		}{
			\Gamma, \lollishrt{u} .(\delta : \Delta) \ctx
		}{}
		\quad
		\inferencel{ff:ctx-forall:fmap}{
			(\sigma, u/u', \tau/\delta') : (\Gamma, u : \IU, \delta : \Delta) \to (\Gamma', u' : \IU, \delta' : \Delta')
		}{
			(\sigma, \lambdabar u.\tau/\lambdabar u'.\delta') : (\Gamma, \lollishrt{u} .(\delta : \Delta)) \to (\Gamma', \lollishrt{u'} .(\delta' : \Delta'))
		}{}
	\end{equation*}
	
	\begin{equation*}
		\begin{array}{l c l}
		\inferencelabel{ff:ctx-forall:nil} & \quad &
		\inferencelabel{ff:ctx-forall:fmap:nil} \\
		(\Gamma, \lollishrt{u} .()) = \Gamma &&
		(\sigma, \lambdabar u.()/\lambdabar u'.()) = \sigma
		\end{array}
	\end{equation*}
	
	\figskip
	
	\figtitle{Telescope application}
	\vspace*{-2ex}
	\begin{equation*}
		\inferencel{ff:ctx-app}{
			\Gamma, u : \IU, \delta : \Delta \ctx
		}{
			(v/u, (\lambdabar u.\delta)\,v / \delta)
			 : (\Gamma, \lollishrt{u} .(\delta : \Delta), v : \IU)
			 \to (\Gamma, u : \IU, \delta : \Delta) \\
		}{}
	\end{equation*}
	
	\begin{equation*}
		\begin{array}{l}
		\inferencelabel{ff:ctx-app:nat}: \text{The following diagram commutes:} \\
		\xymatrix{
			(\Gamma, \lollishrt{u} .(\delta : \Delta), v : \IU)
				\ar[rr]^{(v/u, (\lambdabar u.\delta)\,v / \delta)}
				\ar[d]^{(\sigma, \lambdabar u.\tau/\lambdabar u'.\delta', v/v')}
			&&
			(\Gamma, u : \IU, \delta : \Delta)
				\ar[d]^{(\sigma, u/u', \tau/\delta')}
			\\
			(\Gamma', \lollishrt{u'} .(\delta' : \Delta'), v' : \IU)
				\ar[rr]_{(v'/u', (\lambdabar u'.\delta')\,v' / \delta')}
			&&
			(\Gamma', u' : \IU, \delta' : \Delta')
		}
		\end{array}
		\quad
		\begin{array}{l}
			\inferencelabel{ff:ctx-app:nil} \\
			(v/u, (\lambdabar u.())\,v / ()) = (v/u) \\
			\phantom{a} \\
			\inferencelabel{ff:ctx-forall:fmap:ctx-app} \\
			(\lambdabar v.(\lambdabar u.\delta)\,v/\lambdabar u.\delta) = \id_{(\Gamma, \lollishrt u.(\delta : \Delta))}
		\end{array}
	\end{equation*}
	
	\figskip
	
	\figtitle{Transpension type:}
	\vspace*{-2ex}
	\begin{equation*}
		\inferencel{ff:transp}{
			\Gamma, u : \IU, \delta : \Delta \ctx \\
			\Gamma, \lollisym \,u.(\delta : \Delta) \sez A \type
		}{
			\Gamma, u : \IU, \delta : \Delta \sez \transpshrt{u} \codot A \type
		}{}
		\qquad
		\inferencel{ff:transp:intro}{
			\Gamma, \lollisym \,u.(\delta : \Delta) \sez a : A
		}{
			\Gamma, u : \IU, \delta : \Delta \sez \meridshrt{u}\,a : \transpshrt{u} \codot A
		}{}
		\qquad
		\inferencel{ff:transp:elim}{
			\Gamma, u : \IU \sez t : \transpshrt{u} \codot A
		}{
			\Gamma \sez \unmeridsym(u.t) : A
		}{}
	\end{equation*}
	
	\begin{equation*}
		\inferencel{ff:transp:beta}{
			\Gamma \sez a : A
		}{
			\Gamma \sez \unmeridsym(u.\meridshrt{u}\,a) = a : A
		}{}
		\qquad
		\inferencel{ff:transp:eta}{
			\Gamma, u : \IU, \delta : \Delta \sez t : \transplong{u : \IU} \codot A
		}{
			\Gamma, u : \IU, \delta : \Delta \sez t = \\
			\qquad \meridshrt{u}\,(\unmeridsym(v.t[v/u, (\lambdabar u.\delta)\,v/\delta])) : \transpshrt{u} \codot A
		}{}
	\end{equation*}
	
	\begin{equation*}
		\hspace{-2em}
		\inferencel{ff:transp:nat}{
			\Gamma', \lollisym \,u'.(\delta' : \Delta') \sez A \type \\
			(\sigma, u/u', \tau/\delta') : \\
			\qquad (\Gamma, u : \IU, \delta : \Delta) \to (\Gamma', u' : \IU, \delta' : \Delta')
		}{
			\Gamma, u : \IU, \delta : \Delta \sez (\transpshrt{u'} \codot A)[\sigma, u/u', \tau/\delta'] = \\
			\qquad \transpshrt{u} \codot (A[\sigma, \lambdabar u.\tau/\lambdabar u'.\delta']) \type
		}{}
		\qquad
		\inferencel{ff:transp:intro:nat}{
			\Gamma', \lollisym \,u'.(\delta' : \Delta') \sez a : A \\
			(\sigma, u/u', \tau/\delta') : \\
			\qquad (\Gamma, u : \IU, \delta : \Delta) \to (\Gamma', u' : \IU, \delta' : \Delta')
		}{
			\Gamma, u : \IU, \delta : \Delta \sez (\meridshrt{u'} \codot a)[\sigma, u/u', \tau/\delta'] = \\
			\qquad \meridshrt{u} \codot (a[\sigma, \lambdabar u.\tau/\lambdabar u'.\delta'])
			: (\transpshrt{u'} \codot A)[\sigma, u/u', \tau/\delta']
		}{}
		\hspace{-2em}
	\end{equation*}
	\caption{Selection of typing rules for a fully faithful transpension type.}
	\label{fig:ff}
\end{figure}

\subsection{Typing rules} \label{sec:ff:rules}
To do this, we first present, in \cref{fig:ff}, typing rules for the transpension type in a specific setting:
a dependent type system with linear or affine shape variables $u : \IU$,
where shape variable contraction is forbidden.

\subsubsection{Linear/affine shape variables} \label{sec:ff:rules:vars}
Variables to the left of $u$ are understood to be fresh for $u$; variables introduced after $u$ may be substituted with terms depending on $u$.
In particular, we have no contraction $(w/u, w/v) : (w : \IU) \to (u, v : \IU)$, while exchange $(x : A, u : \IU) \to (u : \IU, x : A)$ only works in one direction.
This is enforced by the special substitution rule for shape variables (\ruleref{ff:ctx-shp:fmap}).
Weakening of shape variables is optionally allowed (\ruleref{ff:ctx-shp:wkn}).
The examples in which the type system will be put to use, are agnostic as to whether exchange of shape variables $(u : \IU, v : \IU) \to (v : \IU, u : \IU)$ is possible and models of both situations exist.

\subsubsection{Linear/affine function type} \label{sec:ff:rules:forall}
The system features a linear/affine function type $\lollishrt{u}.A$ over $\IU$, with unsurprising formation and introduction rules (\ruleref{ff:forall}, \ruleref{ff:forall:intro}).
The rule for $f\,u$ (\ruleref{ff:forall:elim}) requires that the function $f$ be fresh for $u$, i.e.\ that $f$ depend only on variables to the left of $u$ \cite{moulin-param3,moulin}. For simplicity, we require that $u$ is the last shape variable in the context.

\subsubsection{Transpension type} \label{sec:ff:rules:transp}
Additionally, the system contains a transpension type $\transpshrt{u} \codot A$ over $\IU$, with more unusual rules.
Similar to the \emph{introduction rule} of multimode type theory (\MTT) (\ruleref{wdra:intro} in \cref{fig:mtt:wdra}), the \emph{meridian} constructor of the transpension type (\ruleref{ff:transp:intro}) works by dependent transposition \cite{dra}\cite[\S 2.1.3]{reldtt-techreport}\cite[\S 5.1.3-5.2]{nuyts-phd}: a term of type $\transpshrt{u} \codot A$ in a given context, is equivalent to a term of type $A$ in the context obtained by applying the left adjoint to $\transpshrt{u}$ -- which is universal quantification over $u$ -- to the context.
However, the situation is a bit more subtle than in \MTT{}, in the sense that the left adjoint $\lollishrt u$ is itself a binder and therefore acts on objects already living in a context. For this reason, we consider the entire situation in a further context $\Gamma$. So we start from a context $\Gamma$, a telescope $\delta : \Delta$ (where $\delta$ denotes the vector of variables in the telescope) in context $\Gamma, u : \IU$ and a term $a : A$ that does not live in telescope $\Delta$ but in the universally quantified telescope $\lollishrt u .(\delta : \Delta)$ \cite{param-app,abstract-atomicity}, which extends $\Gamma$.
The resulting meridian $\meridshrt u \codot a : \transpshrt u \codot A$ then lives in telescope $\Delta$, which extends $\Gamma, u : \IU$.
Thus, we remark that both $\transpshrt{u} \codot A$ and $\meridshrt{u} \codot a$ depend on $u$, whereas $A$ and $a$ do not, so in a way the transpension lifts data to a higher dimension, turning points into $\IU$-cells.

The type formation rule \ruleref{ff:transp} is parallel to the modal type formation rule of \MTT{} (\ruleref{wdra} in \cref{fig:mtt:wdra}), which internalizes a (weak) dependent right adjoint; its premises are such that the introduction rule is well-typed.

The elimination rule \ruleref{ff:transp:elim} is equivalent to the existence of the function
\begin{equation*}
	\lambda f.\unmeridsym(u.f\,u) : (\lollishrt u . \transpshrt u \codot A) \to A,
\end{equation*}
which is essentially the co-unit of the adjunction; this differs from the elimination rule of \MTT{} (\ruleref{wdra:elim} in \cref{fig:mtt:wdra}) which works by pattern-matching, but is parallel to the projection function in \cref{thm:projmod}, as are the $\beta$- and $\eta$-rules which internalize the adjunction laws and to which we get back in \cref{sec:ff:rules:ctx-app}.
The elimination rule takes data again to a lower dimension: it turns a dependent $\IU$-cell in the transpension into a point in $A$.

\subsubsection{Admissibility of telescope rules} \label{sec:ff:rules:admissible}
The typing rules for the transpension type rely on the unusual notions of \textbf{telescope quantification and application}, which remain to be discussed.
Before doing so, we remark that one can take either of two viewpoints w.r.t.\ these rules.
One can take a syntactic viewpoint, viewing each of the typing rules concerned as a formal typing rule, i.e.\ as a constructor of our generalized algebraic syntax \cite{gat,gat-phd,tt-in-tt}.
Alternatively, it is possible to prove metatheoretically that each of these rules is admissible, by defining $\lollishrt u .(\delta : \Delta)$ as a telescope of the same length as $\Delta$, but where each variable's type is universally quantified over $u : \IU$.
This latter view is the one that inspires most of our notations, but we make a point of not violating the former possibility, because that one allows a pseudo-embedding\footnote{Pseudo, because \ruleref{ff:ctx-forall:nil} is only an isomorphism in the general system, but that would undidactically complicate notations in the current section.} of the current specialized system in the main system of this paper (\cref{sec:comparison}).

\subsubsection{Telescope quantification} \label{sec:ff:rules:ctx-forall}

Given a context $\Gamma, u : \IU, \delta : \Delta$ with no shape variables in $\Delta$, the rule \ruleref{ff:ctx-forall} creates a new context $\Gamma, \lollishrt u . (\delta : \Delta)$, which is just $\Gamma$ again if $\Delta$ has zero variables (\ruleref{ff:ctx-forall:nil}).
From the syntactic viewpoint, it would perhaps be cleaner to write something like $[\lollishrt u](\Gamma, u : \IU, \delta : \Delta)$, or more generally $[\lollishrt u]\Theta$ for any context $\Theta$ featuring $u : \IU$ as its last shape variable.
However, the notation we have chosen is \emph{possible} since every such context is of the form $\Theta = \Gamma, u : \IU, \delta : \Delta$ for some context $\Gamma$ and telescope $\Delta$, and moreover it is justified by the admissibility proof as well as in the following sense:
\begin{enumerate}
	\item The variables in $\Gamma$ can be accessed in context $[\lollishrt u](\Gamma, u : \IU, \delta : \Delta)$,
	\item For every variable $y : B$ in $\Delta$, we get a term of type $\lollishrt v.B[\sigma]$ in context $[\lollishrt u](\Gamma, u : \IU, \delta : \Delta)$ (for a suitable substitution $\sigma$).
\end{enumerate}
To see (1), we make use of the functoriality rule \ruleref{ff:ctx-forall:fmap},
which from the syntactic viewpoint we could more cleanly write as $[\lollilong{u/u'}] \rho : [\lollishrt u] \Theta \to [\lollishrt{u'}] \Theta'$ for $\rho : \Theta \to \Theta'$.
The alternative notation in the typing rule is again \emph{possible} since any such $\rho$ is of the form $\rho = (\sigma, u/u', \tau/\delta')$ for some $\sigma : \Gamma \to \Gamma'$ and well-typed vector of terms $\tau$, and justified for similar reasons as above.
By applying functoriality to the weakening substitution $(\Gamma, u : \IU, \delta : \Delta) \to (\Gamma, u : \IU)$, we get a substitution $(\Gamma, \lollishrt u .(\delta : \Delta)) \to (\Gamma, \lollishrt u .()) = \Gamma$, which we can use to ignore $\lollishrt u .(\delta : \Delta)$ altogether and thus get access to the variables in context $\Gamma$.

To see (2), we need telescope application.

The transpension type and the meridian constructor respect substitution (\ruleref{ff:transp:nat}, \ruleref{ff:transp:intro:nat}), and this can only be stated thanks to functoriality (\ruleref{ff:ctx-forall:fmap}).

\subsubsection{Telescope application} \label{sec:ff:rules:ctx-app}
In \cref{sec:ff:rules:transp} above, we noted that the formation and introduction rules of the transpension type are in line with those of the modal type in \MTT{} (\cref{fig:mtt:wdra}) and act by dependent transposition.
In fact, the same is true for the formation and introduction rules of the linear/affine function type, which is a dependent right adjoint to shape variable extension of contexts (\ruleref{ff:ctx-shp}).
In order for the types to be adjoints internally -- which requires that we can define unit and co-unit functions and that the adjunction laws are statable and satisfied -- we need their left adjoint operations to be adjoints, i.e.\ for any context $\Psi$ and any context $\Theta = (\Gamma, u : \IU, \Delta)$, we need substitutions $\Psi \to [\lollishrt{u}] \Theta = (\Gamma, \lollishrt u . (\delta : \Delta))$ to be equivalent to substitutions $(\Psi, u : \IU) \to \Theta = (\Gamma, u : \IU, \delta : \Delta)$ respecting $u$.

One way to ensure this is by providing natural unit and co-unit substitutions.
For the unit, we need substitutions $\Psi \to [\lollishrt u](\Psi, u : \IU) = (\Psi, \lollishrt u.()) = \Psi$, so we can take the identity.
In other words, the unit is given by \ruleref{ff:ctx-forall:nil}, with naturality given by \ruleref{ff:ctx-forall:fmap:nil}.

For the co-unit, we need substitutions $(\Gamma, \lollishrt u.(\delta : \Delta), v : \IU) \to (\Gamma, u : \IU, \Delta)$, which are given by \ruleref{ff:ctx-app} and made natural by \ruleref{ff:ctx-app:nat}.
Again, from the syntactic viewpoint it would be cleaner to write $\appsym_\Theta : ([\lollishrt u]\Theta, v : \IU) \to \Theta$.
However, intuitively, semantically and in the admissibility proof, what it does is applying the `function' $\lambdabar u.\delta : \lollishrt u.\Delta$ to $v : \IU$, which inspires the notation in the typing rule.

Since the unit is the identity, the adjunction laws simply require that whiskering the co-unit with either adjoint also yields the identity.
The fact that $\appsym_{(\Gamma, u : \IU)} = \id_{(\Gamma, u : \IU)}$ is exactly what is asserted by \ruleref{ff:ctx-app:nil}.
The fact that $[\lollilong{v/u}]\appsym_{(\Gamma, u : \IU, \delta : \Delta)} = \id_{(\Gamma, \lollishrt u .(\delta : \Delta))}$ is exactly what is asserted by \ruleref{ff:ctx-forall:fmap:ctx-app}.

With the unit and co-unit for the adjunction $(-, u : \IU) \dashv [\lollishrt u]$ on contexts in place, we can now state the $\beta$- and $\eta$-rules of the transpension type (\ruleref{ff:transp:beta}, \ruleref{ff:transp:eta}) parallel to those for the modal type in \MTT{} (\cref{thm:projmod}).

We now show (2) from above, i.e.\ assuming $\Delta$ lists a variable $y : B$, we seek to derive a term $\Gamma, \lollishrt u .(\delta : \Delta) \sez t : \lollishrt v .B[\sigma]$. This can be done using \ruleref{ff:ctx-app} as follows:
\begin{equation*}
	\inference{
	\inference{
		\Gamma, u : \IU, \delta : \Delta \sez y : B
	}{
		\Gamma, \lollishrt u .(\delta : \Delta), v : \IU \sez y[v/u, (\lambdabar u.\delta)\,v/\delta] : B[v/u, (\lambdabar u.\delta)\,v/\delta]
	}{}
	}{
		\Gamma, \lollishrt u .(\delta : \Delta) \sez \lambda v.(y[v/u, (\lambdabar u.\delta)\,v/\delta]) : \lollishrt v .(B[v/u, (\lambdabar u.\delta)\,v/\delta])
	}{}.
\end{equation*}
\begin{definition} \label{def:lambdabar}
	For any variable $y : B$ in telescope $\Delta$, we define $\Gamma, \lollishrt{u}.(\delta : \Delta) \sez \lambdabar u.y := \lambda v.(y[v/u, (\lambdabar u.\delta)\,v/\delta]) : \lollishrt v .(B[v/u, (\lambdabar u.\delta)\,v/\delta])$.
\end{definition}
\begin{proposition} \label{thm:subst-lambdabar}
	For any variable $y : B$ in telescope $\Delta$
	and any substitution $(\id_\Gamma, u/u, \tau/\delta) : (\Gamma, u : \IU, \delta' : \Delta') \to (\Gamma, u : \IU, \delta : \Delta)$, we have
	\[
		\Gamma \sez (\lambdabar u.y)[\lambdabar u.\tau/\lambdabar u.\delta] = \lambda u.(\tau_y[u/u, (\lambdabar u.\delta')\,u/\delta']) : \lollishrt{u}.B[\tau/\delta],
	\]
	where $\tau_y = y[\tau/\delta]$ is the component of the vector $\tau$ for variable $y$.
\end{proposition}
\begin{proof}
	We have
	\begin{align*}
		(\lambdabar u.y)[u/u, \tau/\delta]
		&= (\lambda u.y[u/u, (\lambdabar u.\delta)\,u/\delta])[\lambdabar u.\tau/\lambdabar u.\delta]
		& \text{(\cref{def:lambdabar})} \qquad \\
		&= \lambda u.(y[u/u, (\lambdabar u.\delta)\,u/\delta][\lambdabar u.\tau/\lambdabar u.\delta, u/u]) \\
		&= \lambda u.(y[u/u, \tau/\delta][u/u, (\lambdabar u.\delta')\,u/\delta'])
		& \text{(\ruleref{ff:ctx-app:nat})} \qquad \\
		&= \lambda u.(\tau_y[u/u, (\lambdabar u.\delta')\,u/\delta']). \tag*{\qedhere}
	\end{align*}
\end{proof}
\begin{corollary} \label{thm:subst-lambdabar-nil}
	For any variable $y : B$ in telescope $\Delta$
	and any substitution $(\id_\Gamma, u/u, \tau/\delta) : (\Gamma, u : \IU) \to (\Gamma, u : \IU, \delta : \Delta)$, we have $\Gamma \sez (\lambdabar u.y)[\lambdabar u.\tau/\lambdabar u.\delta] = \lambda u.\tau_y : \lollishrt{u}.B[\tau/\delta]$, where $\tau_y = y[\tau/\delta]$ is the component of the vector $\tau$ for variable $y$.
\end{corollary}
\begin{proof}
	This follows from \ruleref{ff:ctx-app:nil}.
\end{proof}

\subsubsection{Discussion} \label{sec:ff:rules:discussion}
The type system presented above is less general than the paper's main system \Msys{}.
In \cref{sec:ff:rules:ctx-app}, we saw that the unit of the adjunction on contexts is invertible.
This is equivalent to the left adjoint $(-, u : \IU) : \Ctx \to \Ctx/(u : \IU)$ being fully faithful \cite{nlab:coreflective}, and the requirement on presheaf models to support the typing rules in the current section (with \ruleref{ff:ctx-forall:nil} an isomorphism) is exactly that: the \emph{multiplier} functor interpreting $(-, u : \IU)$ has to be fully faithful w.r.t.\ the slice category over $(u : \IU)$.

By uniqueness of the adjoint, we can also conclude that the co-unit of the adjunction $\lollishrt u \dashv \transpshrt u$ is invertible,\footnote{In fact, this is exactly what the $\beta$- and $\eta$-rules of the transpension type say (when $\Delta$ is empty).} which is equivalent to the right adjoint $\transpshrt u$ being fully faithful \cite{nlab:reflective}, whence the section title.

However, the current typing rules become unusable in a more general setting, as well as in more specific settings where we may start adding operations that we need in important applications.
First, we have no story for substitutions which exist in cubical type systems such as endpoints $(0/i) : \Gamma \to (\Gamma, i : \IX)$ \cite{moulin-param3,model-cubical,cubical} or connections $(j \wedge k/i) : (\Gamma, j, k : \IX) \to (\Gamma, i : \IX)$ \cite{cubical}, as there is no formation rule for $\transpshrt{0} \codot A$ or $\transplong{j \wedge k} \codot A$.
Secondly, in non-fully-faithful generalizations featuring the contraction rule for shape variables, the transpension is not stable under substitution of the shape variables preceding $u$, so in those settings the way we internalized the transpension type here was too na\"ive.%
\footnote{Indeed, write $\wknshrt{u}$ for the operation of cartesian weakening over a shape variable $u : \IU$, which is an example of a substitution involving shape variables. If in general $\wknshrt{u} \circ \transpshrt{v} \cong \transpshrt{v} \circ \wknshrt{u}$, then by uniqueness of the left adjoint we would find that $\funcshrt{v} \circ \pairshrt{u} \cong \pairshrt{u} \circ \funcshrt{v}$. This is clearly false for cartesian shapes such as the interval $\IX$ in HoTT. For more information on how the transpension type commutes with other operations, see the technical report \cite{transpension-techreport}.}
In order to obtain a type system that does not fail in the presence of endpoints, connections or shape variable contraction, in the rest of the paper we will rely on \MTT{}, which we briefly summarize in \cref{sec:mtt}.

\subsection{Poles} \label{sec:ff:poles}
We can still try to get a grasp on $\transpshrt{0} \codot A$ in cubical type systems, however.
In general we have $T[0/i] \cong (\lollishrt{i} . (\idtp \IX i 0) \to T)$.
Assuming $T = \transpshrt{i} \codot A$ and $\name{oneIsNotZero} : (\idtp \IX 1 0) \to \Empty$, the latter type is inhabited by
\begin{equation*}
	\pole_0 := \lambda i . \lambda e.\meridshrt{i}\,\paren{
		\case{ \paren{
			\name{oneIsNotZero}~\paren{(\lambdabar i.e)~1}
		} }{}
	} : \lollishrt{i} . (\idtp \IX i 0) \to \transpshrt{i} \codot A
\end{equation*}
where $\lambdabar i.e$ has type $\lollishrt{i}.(\idtp \IX i 0)$.
Moreover, using the $\eta$-rules for functions and the transpension type and a (provable propositional) $\eta$-rule for $\Empty$, we can show that this is the only element.
Thus we see that the transpension type essentially consists of one meridian $(i : \IX) \to \transpshrt{i} \codot T$ for every $t : T$, and that these meridians are all equal to $\pole_0$ at $i = 0$ and analogously to $\pole_1$ at $i = 1$.
This makes the transpension type quite reminiscent of a dependent version of the suspension type from HoTT \cite{hottbook}, although the quantification of the context in the formation and construction rules is obviously a distinction.

\subsection{Internal transposition} \label{sec:ff:transposition}
We can internally show that the following types are isomorphic:%
\footnote{This statement of internal transposition is not parallel to the general \MTT{} one (\cref{thm:transpose}). The current variation is provable from the general result because the right adjoint $\transpshrt u$ is fully faithful.}
\begin{equation*}
	(\lollilong{u : \IU}.A) \to B
	\quad \cong \quad
	\lollilong{u : \IU}.(A \to \transpshrt{u} \codot B).
\end{equation*}
Indeed, given $f : (\lollilong{u : \IU}.A) \to B$, we can define $g : \lollilong{u : \IU}.(A \to \transpshrt{u} \codot B)$ by
\[
	g\,u\,a = \meridshrt{u}\,(f\,(\lambdabar u.a)).
\]
Conversely, given $g : \lollilong{u : \IU}.(A \to \transpshrt{u} \codot B)$, we can define $f : (\lollilong{u : \IU}.A) \to B$ by
\[
	f\,\hat a = \unmeridsym\paren{
		u.g\,u\,(\hat a\,u)
	}.
\]
These constructions are mutually inverse. Indeed, plugging the definition of $f$ into that of $g$, we find in context $(\Gamma, u : \IU, a : A)$:
\begin{align*}
	\meridshrt u \codot \paren{  \unmeridsym(u.g\,u\,((\lambdabar u.a)\,u))  }
	&= \meridshrt u \codot \paren{  \unmeridsym(u.(g\,u\,a)[u/u, (\lambdabar u.(a))\,u/(a)])  } = g\,u\,a
\end{align*}
using the $\eta$-rule of the transpension type. Conversely, plugging $g$ into $f$, we find in context $(\Gamma, \hat a : \lollishrt u . A)$:
\begin{align*}
	&\mathrel{\phantom{=}} \unmeridsym\paren{  u.\enspace(\meridshrt{u} \codot (f\,(\lambdabar u.a)))[u/u, \hat a\,u/a]\enspace  } \\
	&= \unmeridsym\paren{  u.(\meridshrt{u} \codot (\enspace(f\,(\lambdabar u.a))[\lambdabar u.(\hat a\,u)/\lambdabar u.(a)]\enspace))  }
	& \text{(\ruleref{ff:transp:intro:nat})} \\
	&= \unmeridsym\paren{  u.(\meridshrt{u} \codot (\enspace(f\,(\lambda u.\hat a\,u))\enspace))  }
	& \text{(\cref{thm:subst-lambdabar-nil})} \\
	&= f\,(\lambda u.(\hat a\,u)) = f\,\hat a.
	& \text{(\ruleref{ff:transp:beta})}
\end{align*}

\subsection{Higher-dimensional pattern matching} \label{sec:ff:hdpm}
Now that we know internally that $\lollishrt{u}$ is a left adjoint (with internal right adjoint $\transpshrt{u}$), we can proceed to conclude that it preserves colimits, e.g.\ we can show $i : (\lollishrt{u}.A \uplus B) \cong (\lollishrt{u}.A) \uplus (\lollishrt{u}.B)$.
The map to the left is trivially defined by case analysis.
The map to the right is equivalent by transposition to a function $\lollishrt{u}.(A \uplus B \to \transpshrt{u} \codot ((\lollishrt{v}.A[v/u]) \uplus (\lollishrt{v}.B[v/u])))$.
This is in turn constructed by case analysis from the transpositions of the coproduct's constructors $\inl$ and $\inr$.

By straightforward application of \cref{sec:ff:transposition}, the transpositions of the constructors are:
\begin{align*}
	\lambda u.\lambda a.\meridshrt{u} \codot (\inl\,(\lambdabar u.a))
	&: \lollishrt{u}.\paren{ A \to \transpshrt{u} \codot ((\lollishrt{v}.A[v/u]) \uplus (\lollishrt{v}.B[v/u])) } \\
	\lambda u.\lambda b.\meridshrt{u} \codot (\inr\,(\lambdabar u.b))
	&: \lollishrt{u}.\paren{ B \to \transpshrt{u} \codot ((\lollishrt{v}.A[v/u]) \uplus (\lollishrt{v}.B[v/u])) }
\end{align*}
Pasting these together, we get
\begin{align*}
	&\lambda u.\lambda c.\case c {
		\inl\,a &\mapsto& \meridshrt{u} \codot (\inl\,(\lambdabar u.a)) \\
		\inr\,b &\mapsto& \meridshrt{u} \codot (\inr\,(\lambdabar u.b))
	} \\
	&: \lollishrt{u}.\paren{ A \uplus B \to \transpshrt{u} \codot ((\lollishrt{v}.A[v/u]) \uplus (\lollishrt{v}.B[v/u])) }.
\end{align*}
Transposing again as in \cref{sec:ff:transposition}, we find
\begin{align*}
	&i : (\lollishrt{u}.A \uplus B) \to (\lollishrt{u}.A) \uplus (\lollishrt{u}.B) \\
	&i\,\hat c = \unmeridsym\,\paren{  u.\case{\hat c\,u}{
		\inl\,a &\mapsto& \meridshrt{u} \codot (\inl\,(\lambdabar u.a)) \\
		\inr\,b &\mapsto& \meridshrt{u} \codot (\inr\,(\lambdabar u.b))
	}  }.
\end{align*}
Let us consider our categorically motivated creation from a more type-theoretical perspective.
We obtain an argument $\hat c : \lollishrt{u}.A \uplus B$ which we would like to pattern match on, in order to create an element of type $(\lollishrt{u}.A) \uplus (\lollishrt{u}.B)$.
Of course we cannot pattern match on a function, so we call the $\unmeridsym$ constructor which brings $u : \IU$ in scope and changes the goal to $\transpshrt{u} \codot ((\lollishrt{v}.A[v/u]) \uplus (\lollishrt{v}.B[v/u]))$.
We can then reduce $\hat c$ by one dimension by applying it to $u$, allowing a case analysis.
The first case brings in scope $a : A$ (and the second case will be analogous), so we are in context $(\Gamma, \hat c : \lollishrt{u}.A \uplus B, u : \IU, a : A)$.
We then use the meridian constructor, which again removes $u$ from scope, turns $a : A$ into a function $\lambdabar u.a : \lollishrt{u} . A$ and again reduces the goal to $(\lollishrt{u}.A) \uplus (\lollishrt{u}.B)$, so that $\inl$ completes the proof.
We have essentially pattern matched on a higher-dimensional object!

Let us now check that $i$ is indeed inverse to the trivial implementation of $i\inv$. %
We have:
\begin{align*}
	&\mathrel{\phantom{=}} (i \circ i\inv)(\inl\,\hat a)
	= i(\lambda u.\inl\,(\hat a\,u)) \\
	&= \unmeridsym \paren{u.
		\enspace
		\paren{\meridshrt{u}\,(\inl\,(\lambdabar u.a))}[u/u, \hat a\,u/a]
		\enspace
	} \\
	&= \unmeridsym \paren{u.\paren{
		\meridshrt{u}\,(\inl\enspace(\lambdabar u.a)[\lambdabar u.\hat a\,u/\lambdabar u.a]\enspace)
	}} 
	& \text{(\ruleref{ff:transp:intro:nat})} \\
	&= \unmeridsym \paren{u.\paren{
		\meridshrt{u}\,(\inl\,(\lambda u.\hat a\,u))
	}}
	& \text{(\cref{thm:subst-lambdabar-nil})} \\
	&= \inl\,\hat a.
	& \text{(\ruleref{ff:transp:beta})}
\end{align*}
and similar for $(i \circ i\inv)(\inr\,\hat b)$.
Using the technique of higher dimensional pattern matching just developed, we can prove the other equation also by pattern matching! By similar steps as before, we have:
\begin{align*}
	(i\inv \circ i)(\lambda u.\inl\,(\hat a\,u))
	&= i\inv(\unmeridsym(u.\enspace(\meridshrt{u} \codot (\inl\,(\lambdabar u.a)))[u/u, \hat a\,u/a]\enspace)) \\
	&= i\inv(\unmeridsym(u.(\meridshrt{u} \codot (\inl\,\hat a)))) = i\inv(\inl\,\hat a) = \lambda u.\inl\,(\hat a\,u),
\end{align*}
and a similar result for $(i\inv \circ i)(\lambda u.\inr\,(\hat b\,u))$.

\section{Multimode Type Theory} \label{sec:mtt}
As announced, we will rely on the extensional version of Gratzer et al.'s multimode and multimodal dependent type system \MTT{} \cite{mtt-journal,mtt} in order to frame the transpension and its left adjoints as modal operators.
We refer to the original work for details, but give a brief overview in the current section.
In \cref{sec:mtt:left}, we decorate the usual \MTT{} notation with reminders of the modalities' semantic left adjoints, which are syntactically obscured by the lock notation.

\subsection{The mode theory} \label{sec:mtt:mode-theory}
\MTT{} is parametrized by a \emph{mode theory}, which is a strict 2-category whose objects, morphisms and 2-cells we will refer to as \textbf{modes}, \textbf{modalities} and, well, \textbf{2-cells} respectively. Semantically, every mode $p$ will correspond to an entire model of dependent type theory $\interp p$. A modality $\ismod \mu p q$ will consist of a functor $\interp{\locknovar \mu} : \interp q \to \interp p$ acting on contexts and substitutions, and an operation $\interp{\modshade \mu}$ that is almost a dependent right adjoint (DRA \cite{dra}) to $\interp{\locknovar \mu}$; for all our purposes it will be an actual DRA and even one arising from a weak CwF morphism \cite[lemma 17]{dra}\cite{reldtt-techreport}. A 2-cell $\iskey \alpha \mu \nu$ is interpreted as a natural transformation $\interp{\key \alpha} : \interp{\locknovar \nu} \to \interp{\locknovar \mu}$ and hence also gives rise to an appropriate transformation $\interp{\keyshade \alpha} : \interp{\modshade \mu} \to \interp{\modshade \nu}$.

\subsection{Judgement forms} \label{sec:mtt:jud}
The judgement forms of \MTT{} are listed in \cref{fig:mtt:jud}.
All forms are annotated with a mode $p$ which specifies in what category they are to be interpreted.
Every judgement form also has a corresponding equality judgement, which is respected by everything as the typing rules are to be read as a specification of a generalized algebraic theory (GAT \cite{gat,tt-in-tt}).
The statements $p \mode$ and $\ismod \mu p q$ and $\iskey \alpha \mu \nu$ are simply requirements about the mode theory. This means we give no syntax or equality rules for modalities and 2-cells: these are fixed by the choice of mode theory.
\begin{figure}
	\small
	\figtitle{Judgement forms:}
	\begin{align*}
		&p \sep \Gamma \ctx
		&\text{$\Gamma$ is a context at mode $p$,} \\
		&p \sep \sigma : \Delta \to \Gamma
		&\text{$\sigma$ is a simultaneous substitution from $\Delta$ to $\Gamma$ at mode $p$,} \\
		&p \sep \Gamma \sez T \type
		&\text{$T$ is a type in context $\Gamma$ at mode $p$,} \\
		&p \sep \Gamma \sez t : T
		&\text{$t$ has type $T$ in context $\Gamma$ at mode $p$.}
	\end{align*}
	\caption{Judgement forms of \MTT{} \cite{mtt-journal}.}
	\label{fig:mtt:jud}
\end{figure}

\subsection{Typing rules} \label{sec:mtt:typing}
The typing rules are listed in \cref{fig:mtt:dtt,fig:mtt:struct,fig:mtt:wdra,fig:mtt:pi}
and discussed below.

\subsubsection{The type theory at each mode}
\begin{figure}
	\figtitle{The type theory at each mode:}\\
	Basic rules of dependent type theory (including all desired types) at each mode $q$, e.g.:
	\begin{equation*}
		\inferencel{ctx-empty}{
			q \mode
		}{
			q \sep \cdot \ctx
		}{}
		\qquad
		\inferencel{ctx-ext}{
			q \sep \Gamma \sez T \type
		}{
			q \sep \Gamma, x : T \ctx
		}{}
		\qquad
		\inferencel{ctx-ext:intro}{
			q \sep \sigma : \Delta \to \Gamma
			\quad
			q \sep \Delta \sez t : T[\sigma]
		}{
			q \sep (\sigma, t/x) : \Delta \to (\Gamma, x : T) \\
			\infwhere \tau = (\wknvar x \circ \tau, x[\tau]/x) \\
			\infnowhere (\sigma, t/x) \circ \rho = (\sigma \circ \rho, t[\rho]/x)
		}{}
	\end{equation*}
	\begin{equation*}
		\inferencel{ctx-ext:wkn}{
			q \sep \Gamma \ctx \qquad
			q \sep \Gamma \sez T \type
		}{
			q \sep \wknvar x : (\Gamma, x : T) \to \Gamma \\
			\infwhere \wknvar x \circ (\sigma, t/x) = \sigma
		}{}
		\quad
		\inferencel{ctx-ext:var}{
			q \sep \Gamma \ctx \qquad
			q \sep \Gamma \sez T \type
		}{
			q \sep \Gamma, x : T \sez x : T[\wknvar x] \\
			\infwhere x[\sigma, t/x] = t
		}{}
		\quad
		\inferencel{sigma}{
			q \sep \Gamma \sez A \type \\
			q \sep \Gamma, x : A \sez B \type
		}{
			q \sep \Gamma \sez (x : A) \times B \type
		}{}
	\end{equation*}
	\begin{equation*}
		\inferencel{uni}{
			q \sep \Gamma \ctx \quad
			\ell \in \IN
		}{
			q \sep \Gamma \sez \unimode{\ell}{q} \type_{\ell+1}
		}{}
		\qquad
		\inferencel{uni:elim}{
			q \sep \Gamma \sez t : \unimode \ell q
		}{
			q \sep \Gamma \sez \El(t) \type_\ell \\
			\infwhere \El(\tycode T) = T
		}{}
		\qquad
		\inferencel{uni:intro}{
			q \sep \Gamma \sez T \type_\ell
		}{
			q \sep \Gamma \sez \tycode T : \unimode \ell q \\
			\infwhere \tycode{\El(t)} = t
		}{}
	\end{equation*}
	\caption{\MTT{} includes all rules of ordinary DTT at each mode.}
	\label{fig:mtt:dtt}
\end{figure}
Since every mode corresponds to a model of all of dependent type theory (DTT), we start by importing \textbf{all the usual typing rules of DTT}, to be applied in \MTT{} at any given fixed mode.
Some examples of such rules are given in \cref{fig:mtt:dtt}, where we have consciously included rules for non-modal context extension, even though these will be generalized to modal rules later on.
One reason to do so is that other rules of DTT, such as \ref{rule:sigma}, depend on these and therefore cannot be imported without.
Another is that this way, we have a warm-up towards the modal rules and in particular we can make a point about de Bruijn indices.
Although variables in \cref{fig:mtt:dtt} are named, the rules \ref{rule:ctx-ext:wkn} and \ref{rule:ctx-ext:var} effectively enforce a \textbf{de Bruijn} discipline, where we can only name the last variable in the context and have to weaken explicitly if it is deeper down, e.g.
\[
	x : A, y : B, z : C \sez x[\wknvar y][\wknvar z] : A[\wknvar x][\wknvar y][\wknvar z].
\]
We take the viewpoint\footnote{as is done in the MTT technical report \cite{mtt}; the paper \cite{mtt-journal} speaks from a more implementation-oriented perspective.} that the official system is unnamed and uses this substitution-based de Bruijn discipline to refer to variables.
In order to improve human communication, we will name variables anyway and use the resulting redundancy to leave weakening substitutions implicit unambiguously.
This allows for the following unofficial admissible `rule'
\[
	\inferencel{ctx-ext:var:lookup}{
		q \sep \Gamma, x : T, \Delta \ctx
	}{
		q \sep \Gamma, x : T, \Delta \sez x : T
	}{}.
\]
Furthermore, we use other common notational conventions such as writing $(t/x)$ instead of $(\idsub_\Gamma, t/x) : \Gamma \to (\Gamma, x : T)$.

We assume that DTT has a \textbf{universe} \`a la Coquand with mutually inverse encoding and decoding operations (which we will henceforth suppress), and we ignore cumulativity-related hassle, referring to Gratzer et al.\ \cite{mtt-journal} for details.

\subsubsection{Modal types, part 1}
\begin{figure}
	\begin{equation*}
		\begin{array}{l}
			\inferencel{wdra}{
				\ismod \mu p q \\
				p \sep \Gamma, \lock{\tf m}{\mu} \sez A \type_\ell
			}{
				q \sep \Gamma \sez \Modify{\tf m} \mu A \type_\ell
			}{}
			\infnewline
			\inferencel{wdra:intro}{
				\ismod \mu p q \\
				p \sep \Gamma, \lock{\tf m}{\mu} \sez a : A
			}{
				q \sep \Gamma \sez \modify{\tf m}{\mu} a : \Modify{\tf m} \mu A
			}{}
		\end{array}
		\quad
		\begin{array}{l}
			\inferencel{wdra:elim}{
				\ismod \mu p q \quad
				\ismod \nu q r \\
				q \sep \Gamma, \lock{\tf n}{\nu} \sez \hat a : \Modify{\tf m} \mu A \\
				r \sep \Gamma, \ctxmod{\tf n}{\nu}{\hat x}{\Modify{\tf m} \mu A} \sez C \type \\
				r \sep \Gamma, \ctxmod{\tf n \twhisk \tf m}{\nu \circ \mu}{x}{A} \sez c : C[\varsub{\tf n}{\modify{\tf m} \mu x \vartrans{}{\tf n \twhisk \tf m}}{\hat x}]
			}{
				r \sep \Gamma \sez \letmodify{\tf n}{\nu}{\tf m}{\mu}{x}{\hat a} c : C[\varsub{\tf n}{\hat a}{\hat x}] \\
				\infwhere \letmodify{\tf n}{\nu}{\tf m}{\mu}{x}{\modify{\tf m} \mu a} c = c[\varsub{\tf n \twhisk \tf m} a x]
			}{}
		\end{array}
	\end{equation*}
	\caption{Typing rules for \MTT's modal types (weak DRAs) \cite{mtt-journal}\cite[fig.\ 5.6]{nuyts-phd}.}
	\label{fig:mtt:wdra}
\end{figure}
Before proceeding to the \MTT{}-specific structural rules, let us first have a look at the formation and introduction rules \ref{rule:wdra} and \ref{rule:wdra:intro} of \textbf{modal types} $\Modifynovar{\mu}{A}$ in \cref{fig:mtt:wdra}.
These are not unlike the formation and introduction rules of the transpension type in \cref{fig:ff} and work by transposition: we apply the left adjoint of the modality $\modshade \mu$ (in the form of a lock) to the premise's context. As such, they behave like DRAs, but their elimination rule \ref{rule:wdra:elim} (which we consider later) is weaker, so we call them weak DRAs.

\subsubsection{Structural rules}
\begin{figure}
	\figtitle{Context locking:}
	
	\begin{flushleft}
		Note: We write $(\sigma, \locknovar \mu)$ as shorthand for $(\sigma, \keytyped{\id}{\mu}{\mu})$, an instance of \ruleref{lock:fmap}.
	\end{flushleft}
	\small
	\begin{equation*}
		\inferencel{lock}{
			q \sep \Gamma \ctx \quad
			\ismod \mu p q
		}{
			p \sep \Gamma, \lock{\tf m}{\mu} \ctx \\
        			\infwhere \Gamma = (\Gamma, \lock{\ttriv}{\id}) \\
				\infnowhere (\Gamma, \lock{\tf n}{\nu}, \lock{\tf m}{\mu}) = (\Gamma, \lock{\tf{n} \twhisk \tf{m}}{\nu \circ \mu})
		}{}
		\qquad
		\inferencel{lock:fmap}{
			q \sep \sigma : \Gamma \to \Delta \qquad
			\ismod{\mu, \nu}{p}{q} \qquad
			\iskey{\alpha}{\mu}{\nu}
		}{
			p \sep (\sigma, \ttrans{\alpha}{\tf m}{\tf n}) : (\Gamma, \lock{\tf n} \nu) \to (\Delta, \lock{\tf m} \mu) \\
			\infwhere \sigma = (\sigma, \keytyped{\id}{\id}{\id})
			\quad (\sigma, \ttrans{\alpha'}{\tf m'}{\tf n'}, \ttrans{\alpha}{\tf m}{\tf n}) = (\sigma, \ttrans{(\alpha' \whisk \alpha)}{\tf m' \twhisk \tf m}{\tf n' \twhisk \tf n}) \\
			\infnowhere \id = (\id, \keytyped{\id}{\mu}{\mu})
			\quad (\sigma, \ttrans{\alpha}{\tf m}{\tf n}) \circ (\tau, \ttrans{\beta}{\tf n}{\tf o}) = (\sigma \circ \tau, \ttrans{(\beta \circ \alpha)}{\tf m}{\tf o})
		}{}
	\end{equation*}
	\normalsize
	
	\figskip
	
	\figtitle{Modal context extension:}
	
	\begin{flushleft}
	We consider the non-modal rule \ref{rule:ctx-ext} and its introduction, elimination and computation rules as a special case of \ref{rule:ctx-modext} for $p = q$ and $\modshade \mu = \modshade \id$.
	\end{flushleft}
	\begin{equation*}
		\inferencel{ctx-modext}{
			q \sep \Gamma \ctx \quad
			\ismod \mu p q \\
			p \sep \Gamma, \lock{\tf m}{\mu} \sez T \type
		}{
			q \sep \Gamma, \ctxmod{\tf m}{\mu}{x}{T} \ctx
		}{}
		\qquad
		\inferencel{ctx-modext:intro}{
			q \sep \sigma : \Delta \to \Gamma \qquad
			\ismod \mu p q \\
			p \sep \Delta, \lock{\tf m} \mu \sez t : T[\sigma, \locknovar \mu]
		}{
			q \sep (\sigma, \varsub{\tf m}{t}{x}) : \Delta \to (\Gamma, \ctxmod{\tf m} \mu x T) \\
			\infwhere \tau = (\wknvar x \circ \tau, \varsub{\tf m}{x[\tau, \locknovar \mu]}{x}) \\
			\infnowhere (\sigma, \varsub{\tf m}{t}{x}) \circ \rho = (\sigma \circ \rho, \varsub{\tf m}{t[\rho, \locknovar \mu]}{x})
		}{}
	\end{equation*}
	\begin{equation*}
		\quad
		\inferencel{ctx-modext:wkn}{
			q \sep \Gamma \ctx \qquad
			\ismod \mu p q \\
			p \sep \Gamma, \lock{\tf m} \mu \sez T \type
		}{
			q \sep \wknvar x : (\Gamma, \ctxmod{\tf m} \mu x T) \to \Gamma \\
			\infwhere \wknvar x \circ (\sigma, \varsub{\tf m}{t}{x}) = \sigma
		}{}
		\qquad
		\inferencel{ctx-modext:var}{
			q \sep \Gamma \ctx &
			\ismod \mu p q \\
			p \sep \Gamma, \lock{\tf m} \mu \sez T \type
		}{
			q \sep \Gamma, \ctxmod{\tf m}{\mu}{x}{T}, \lock{\tf m'} \mu \sez x\vartrans{}{\tf m'} : T[\wknvar x, \locknovar \mu] \\
			\infwhere \Delta, \locknovar \mu \sez x\vartrans{}{\tf m'} [\sigma, \varsub{\tf m'''}{t}{x}, \locknovar \mu] = t : T[\sigma, \locknovar \mu]
		}{}
	\end{equation*}
	\caption{Structural rules of \MTT{} \cite{mtt-journal}\cite[fig.\ 5.5]{nuyts-phd}.}
	\label{fig:mtt:struct}
\end{figure}
The structural rules of \MTT{} are listed in \cref{fig:mtt:struct}. \textbf{Context formation} starts with the empty context which exists at any mode, and proceeds by adding locks and variables.

Adding \textbf{locks} (\ref{rule:lock}) is strictly functorial: it preserves identity and composition of modalities. In fact, it is strictly 2-functorial: it also has an action on 2-cells (\ref{rule:lock:fmap}, producing substitutions between locked contexts) that preserves identity and composition of 2-cells. It is also strictly bifunctorial: we can combine a substitution and a 2-cell to a substitution between locked contexts. If the 2-cell is the identity, then we write $\locknovar \mu$ for $\keytyped{\id}{\mu}{\mu}$.

A \textbf{modal variable} $\ctxmod{\tf m}{\mu}{x}{T}$ introduced by \ref{rule:ctx-modext} is essentially the same as a non-modal variable $\hat x : \Modify{\tf m}{\mu}{T}$ (which in turn is shorthand for $\ctxmod{\ttriv}{\id}{\hat x}{\Modify{\tf m}{\mu}{T}}$), but the judgemental modal annotation allows direct access to a variable of type $A$ through the variable rule.
Hence, the type $T$ is checked the same way as it would be in $\Modify{\tf m}{\mu}{T}$.
Terms $t$ substituted for a modal variable $x$ are also checked in the locked context, as if we would be substituting $\modify{\tf m}{\mu}{t}$ instead.
The \textbf{variable} rule does not produce $x : \Modify{\tf m}{\mu}{T}$ but instead uses transposition to move the modality $\modshade \mu$ to the left in the form of a lock. As such, it can be seen as implicitly involving a co-unit.

By analogy with \ref{rule:ctx-ext:var:lookup}, we would like a more general unofficial variable `rule' that allows accessing a variable $x$ that is buried under a general telescope $\Delta$ rather than a single lock.
By \ref{rule:lock:fmap}, we can weaken under locks (which, like ordinary weakening, we will leave notationally implicit), so we can easily remove all variables from $\Delta$ and then apply strict functoriality of \ref{rule:lock} to fuse the remaining locks, obtaining a single lock $\lock{\ticks(\Delta)}{\locks(\Delta)}$, where the modality $\modshade{\locks(\Delta)}$ is defined as follows:
\begin{equation*}
		\locks(\cdot) = \id, \qquad
		\locks(\Delta, \lock{\tf m} \mu) = \locks(\Delta) \circ \mu, \qquad
		\locks(\Delta, \ctxmod{\tf m} \mu x T) = \locks(\Delta).
\end{equation*}
Then the only remaining thing we need is a 2-cell $\iskey \alpha \mu {\locks(\Delta)}$.
This leads to the following admissible variable `rule'
\[
	\inferencel{ctx-modext:var:lookup}{
		q \sep \Gamma \ctx &
		\ismod \mu p q \\
		p \sep \Gamma, \lock{\tf m} \mu \sez T \type &
		\iskey \alpha \mu {\locks(\Delta)}
	}{
		q \sep \Gamma, \ctxmod{\tf m}{\mu}{x}{T}, \Delta \sez x\vartrans{\alpha}{\ticks(\Delta)} : T[\idsub_\Gamma, \ttrans{\alpha}{\tf m}{\ticks(\Delta)}]
	}{}
\]
where $x\vartrans{\alpha}{\ticks(\Delta)}$ is defined as $x\vartrans{}{\tf m'}[\idsub_{(\Gamma, \ctxmod{\tf m}{\mu}{x}{T})}, \keytyped{\alpha}{\mu}{\locks(\Delta)}]$, leaving the necessary weakenings under locks implicit.
Substitution is then given by
\begin{align*}
	x \vartrans{\alpha}{\ticks(\Delta)} [\idsub_\Gamma, \varsub{\tf m}{a}{x}, \idsub_\Delta]
	&= x\vartrans{}{\tf m'}[\idsub_{(\Gamma, \ctxmod{\tf m}{\mu}{x}{T})}, \ttrans{\alpha}{\tf m'}{\ticks(\Delta)}] [\idsub_\Gamma, \varsub{\tf m}{a}{x}, \idsub_\Delta] \\
	&= x\vartrans{}{\tf m'} [\idsub_\Gamma, \varsub{\tf m}{a}{x}, \locknovar \mu] [\idsub_{\Gamma}, \ttrans{\alpha}{\tf m'}{\ticks(\Delta)}] \\
	&= a[\idsub_{\Gamma}, \ttrans{\alpha}{\tf m'}{\ticks(\Delta)}],
\end{align*}
or more briefly $x \varkey \alpha[a/x] = a[\key \alpha]$.

\subsubsection{Modal types, part 2}
\textbf{Modal elimination} (\ref{rule:wdra:elim}, \cref{fig:mtt:wdra}) uses a \textsf{let}-syntax to turn the modal type into a judgemental annotation on a variable.

\subsubsection{Modal function types}
Modal \textbf{function type} formation and introduction (\cref{fig:mtt:pi}) are by simple abstraction. Modal function application checks the argument in a locked context, just like modal variable substitution does.
\begin{figure}
	\begin{equation*}
			\inferencel{modpi}{
				p \sep \Gamma, \lock{\tf m} \mu \sez A \type_\ell \quad
				\ismod \mu p q \\
				q \sep \Gamma, \ctxmod{\tf m}{\mu}{x}{A} \sez B \type_\ell
			}{
				q \sep \Gamma \sez (\ctxmod{\tf m} \mu x A) \to B \type_\ell
			}{}
	\end{equation*}
	\begin{equation*}
			\inferencel{modpi:intro}{
				\ismod \mu p q \\
				q \sep \Gamma, \ctxmod{\tf m}{\mu}{x}{A} \sez b : B
			}{
				q \sep \Gamma \sez \lambda(\varmod\mu x).b : (\ctxmod{\tf m} \mu x A) \to B \\
				\infwhere \lambda(\varmod \mu x).(f \modappnovar{\mu} {x\vartrans{}{\tf m}}) = f
			}{}
			\qquad
			\inferencel{modpi:elim}{
				q \sep \Gamma \sez f : (\ctxmod{\tf m} \mu x A) \to B \\
				p \sep \Gamma, \lock{\tf m} \mu \sez a : A \quad
				\ismod \mu p q
			}{
				q \sep \Gamma \sez f \modappnovar{\mu} a : B[a/x] \\
				\infwhere (\lambda(\varmod \mu x).b) \modappnovar{\mu} a = b[\varsub{\tf m} a x]
			}{}
	\end{equation*}
	\caption{Typing rules for \MTT's modal $\Pi$-types \cite{mtt-journal}\cite[fig.\ 5.7]{nuyts-phd} (of which non-modal $\Pi$-types are a special case for $p = q$ and $\mu = \id$).}
	\label{fig:mtt:pi}
\end{figure}

\subsection{Left adjoint reminders} \label{sec:mtt:left}
In \MTT{}, we can on one hand apply a modality $\ismod \mu p q$ to a type $T$ to obtain a modal type $\Modify{\tf m}{\mu}{T}$, and on the other hand we can apply its left adjoint $\locknovar{\mu}$ to a context $\Gamma$ to obtain $\Gamma, \lock{\tf m}{\mu}$.
Type-theoretically, it is only sensible that the lock $\locknovar{\mu}$ mentions $\modshade \mu$ for $\modshade \mu$ is a premise of the \ref{rule:lock} rule.
From a semantic/categorical viewpoint however, the indirection of mentioning $\modshade \mu$ when you are actually talking about some left adjoint functor $K = \interp{\locknovar\mu}$ to $M = \interp{\modshade \mu}$ can really get in the way of understanding what is going on.

For example, in \cref{sec:substructural}, we will come up with a modality $\modshade{\transpshrt{u}}$ for the transpension, and then the introduction rule $\ref{rule:wdra:intro}$ will take the form
\begin{equation*}
	\inference{
		p \sep \Delta, \lock{\tf t}{\transpshrt{u}} \sez t : T
	}{
		q \sep \Delta \sez \modify{\tf t}{\transpshrt{u}}{t} : \Modify{\tf t}{\transpshrt{u}}{T}
	}{}
\end{equation*}
for certain modes $p$ and $q$, which is quite reminiscent of the introduction rule of the transpension type in \FFsys{} (\cref{sec:ff}) \emph{if you bear in mind} that $\interp{\locknovar{\transpshrt{u}}}$ is essentially $\lollishrt{u}$.

Instead of littering this paper with remarks of the form `recall that $\interp{\locknovar \mu} = \ldots$', we will decorate locks, keys and variables with superscript reminders of the left adjoints to the modalities (and 2-cells mediating them) that are already in subscript.
Concretely, we will assign:
\begin{itemize}
	\item to every modality $\ismod \mu p q$ a \emph{left name} $\islmod \kappa q p$ (writing this succinctly as $\ismodadj \kappa \mu p q$),
	\item to every 2-cell $\iskey{\alpha}{\mu}{\mu'}$ a left name $\islkey{\omega}{\kappa'}{\kappa}$ (where $\modadj \kappa \mu$ and $\modadj{\kappa'}{\mu'}$; writing succinctly $\iskeyadj{\omega}{\alpha}{\kappa}{\mu}{\kappa'}{\mu'}$).
\end{itemize}
Of course if $\modadj{\kappa'}{\mu'}$ and $\modadj{\kappa}{\mu}$, then the composite will be $\modadj{\kappa \circ \kappa'}{\mu' \circ \mu}$.
If a modality $\modshade \mu$ has a left adjoint \emph{modality} $\modshade \nu$, then we will always use $\lmodshade \nu$ as the left name of $\modshade \mu$, and similar for 2-cells.
Then, we can write $\bilock \kappa \mu$ for $\locknovar \mu$, and $\bikeytyped{\omega}{\alpha}{\kappa}{\mu}{\kappa'}{\mu'}$ or just $\bikey{\omega}{\alpha}$ for $\keytyped{\alpha}{\mu}{\mu'}$, and $x \varbikeytyped{\omega}{\alpha}{\kappa}{\mu}{\kappa'}{\mu'}$ or just $x \varbikey \omega \alpha$ for $x \varkeytyped{\alpha}{\mu}{\mu'}$.
Note that we have
\begin{align}
(\Gamma, \bilock{\kappa_1}{\mu_1}, \bilock{\kappa_2}{\mu_2})
&= (\Gamma, \bilock{\kappa_2 \circ \kappa_1}{\mu_1 \circ \mu_2}),
&(\sigma, \bikey{\omega_1}{\alpha_1}, \bikey{\omega_2}{\alpha_2})
&= (\sigma, \bikey{\omega_2 \whisk \omega_1}{\alpha_1 \whisk \alpha_2}),
&a[\bikey{\omega}{\alpha}][\bikey{\psi}{\beta}]
&= a[\bikey{\omega \circ \psi}{\beta \circ \alpha}]. \label{eq:variance-compose}
\end{align}

\subsection{Results} \label{sec:mtt:results}
We highlight some results about \MTT{} that are relevant in the current paper.
\begin{proposition} \label{thm:modify-functorial}
	We have $\Modify{\ttriv}{\id}{A} \cong A$ and $\Modify{\tf n \twhisk \tf m}{\nu \circ \mu}{A} \cong \Modify{\tf n}{\nu}{\Modify{\tf m}{\mu}{A}}$.
\end{proposition}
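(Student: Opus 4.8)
The plan is to prove both isomorphisms by transposition, exploiting the fact (asserted just above) that $\interp{\locknovar{\mu}}$ is a left adjoint to $\interp{\modshade{\mu}}$ in the semantics, but working purely syntactically through the introduction and elimination rules of \cref{fig:mtt:wdra} together with the lock functoriality laws of \cref{fig:mtt:struct}. For each isomorphism I would exhibit maps in both directions as closed terms of the appropriate modal function types and then check that both composites are the identity using the $\beta$-rule for \ref{rule:wdra:elim}, the $\eta$-rule packaged into \ref{rule:wdra:intro} via its \texttt{where}-clause, and the strict functoriality equations $(\Gamma, \lock{\ttriv}{\id}) = \Gamma$ and $(\Gamma, \lock{\tf n}{\nu}, \lock{\tf m}{\mu}) = (\Gamma, \lock{\tf n \twhisk \tf m}{\nu \circ \mu})$.

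For the first part, $\Modify{\ttriv}{\id}{A} \cong A$: the forward map sends $\hat a : \Modify{\ttriv}{\id}{A}$ to $\letmodify{\ttriv}{\id}{\ttriv}{\id}{x}{\hat a}(x\vartrans{}{\ttriv})$, which typechecks because $\lock{\ttriv}{\id}$ is a no-op on the context so $x : A$ is available directly; the backward map sends $a : A$ to $\modify{\ttriv}{\id}{a}$, which typechecks for the same reason. One composite is the $\beta$-rule of \ref{rule:wdra:elim}; the other is the $\eta$-rule (the \texttt{where}-clause of \ref{rule:wdra:intro}). For the second part, $\Modify{\tf n \twhisk \tf m}{\nu \circ \mu}{A} \cong \Modify{\tf n}{\nu}{\Modify{\tf m}{\mu}{A}}$: the key observation is that a context extended by $\lock{\tf n}{\nu}$ and then $\lock{\tf m}{\mu}$ is, by strict functoriality of locking, literally the same context as one extended by $\lock{\tf n \twhisk \tf m}{\nu \circ \mu}$. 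Hence from $a : A$ in context $\Gamma, \lock{\tf n}{\nu}, \lock{\tf m}{\mu}$ we may form $\modify{\tf m}{\mu}{a} : \Modify{\tf m}{\mu}{A}$ in context $\Gamma, \lock{\tf n}{\nu}$ and then $\modify{\tf n}{\nu}{(\modify{\tf m}{\mu}{a})} : \Modify{\tf n}{\nu}{\Modify{\tf m}{\mu}{A}}$ in context $\Gamma$; conversely, given $\hat a : \Modify{\tf n \twhisk \tf m}{\nu \circ \mu}{A}$, two nested uses of \ref{rule:wdra:elim} (one at $\nu$, one at $\mu$, reassembled via $\nu \circ \mu$) recover a variable $x : A$ under the composite lock, which we repackage with $\modify{\tf n \twhisk \tf m}{\nu \circ \mu}{x}$. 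Again the two round-trips reduce to the $\beta$- and $\eta$-laws, modulo the bookkeeping that the lock $\lock{\tf n \twhisk \tf m}{\nu \circ \mu}$ splits as $\lock{\tf n}{\nu}, \lock{\tf m}{\mu}$.

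The main obstacle is not any conceptual difficulty but the careful management of ticks and the precise 2-cell annotations demanded by the variable rule \ref{rule:ctx-modext:var}: in the nested elimination for the second isomorphism one must check that the telescope-lock $\locks(\Delta)$ occurring in the premise matches $\nu \circ \mu$ on the nose and that the identity 2-cell witnesses $\iskey{1}{\nu \circ \mu}{\locks(\Delta)}$, and one must verify that the naturality/substitution equations attached to $\letmodify{\ldots}{}$ and to $x\vartrans{}{\ldots}$ are respected so that the constructed maps are genuine (stable, natural) terms rather than merely pointwise-defined. Since the typing rules are a GAT, the coherence conditions are exactly the displayed \texttt{where}-equations, so once the terms are written down the verification is a finite check; I would present it by first isolating the context equality $(\Gamma, \lock{\tf n}{\nu}, \lock{\tf m}{\mu}) = (\Gamma, \lock{\tf n \twhisk \tf m}{\nu \circ \mu})$ as the crux and then letting the adjunction-style calculation run. (Gratzer et al.\ prove exactly this in the original \MTT{} development, so an alternative is simply to cite it; here I sketch the argument for self-containedness.)
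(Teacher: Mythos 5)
The paper states this proposition without proof, importing it from Gratzer et al.'s \MTT{} development, so there is no ``paper's own proof'' to compare against letter by letter; your strategy (isolate strict lock functoriality as the crux, then run the adjunction-style transposition through $\modifysym$ and the $\mathsf{let}$-eliminator) is exactly the standard argument and is the right one. There is one genuine, if small, error in your write-up: you appeal to ``the $\eta$-rule packaged into \ref{rule:wdra:intro} via its \texttt{where}-clause,'' but \ref{rule:wdra:intro} in \cref{fig:mtt:wdra} carries no \texttt{where}-clause -- the only computation rule for the weak DRA is the $\beta$-law attached to \ref{rule:wdra:elim}, which is precisely why the paper calls these modal types \emph{weak}. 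The round-trip that starts from $\hat a : \Modify{\ttriv}{\id}{A}$ (respectively $\Modify{\tf n \twhisk \tf m}{\nu\circ\mu}{A}$) therefore does not close up by a built-in $\eta$-equation. You need to supply the $\eta$-law from somewhere: either invoke the extensionality of the ambient \MTT{} that the paper explicitly assumes (cf.\ the parenthetical ``thanks to extensionality'' in \cref{thm:projmod}), or derive a propositional $\eta$ via the dependent form of \ref{rule:wdra:elim}, eliminating $\hat a$ with the motive $C(\hat x) :\equiv \bigl(\modifysym(\letmodifysym\,\hat x\,\ldots) = \hat x\bigr)$ so that $\hat a$ becomes $\modifysym\,a$ in the branch and the $\beta$-rule finishes the job. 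Everything else -- the choice of forward/backward maps, the observation that $(\Gamma,\lock{\tf n}{\nu},\lock{\tf m}{\mu})=(\Gamma,\lock{\tf n\twhisk\tf m}{\nu\circ\mu})$ does the real work, and the tick/2-cell bookkeeping via $\locks(\Delta)$ in \ref{rule:ctx-modext:var} -- is correct and matches the intended argument.
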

\begin{proposition} \label{thm:modify-2cell}
	For any 2-cell $\iskey \alpha \mu \nu$, we have $\Modify{\tf m}{\mu}{A} \to \Modify{\tf n}{\nu}{A[\ttrans{\alpha}{\tf m}{\tf n}]}$.
\end{proposition}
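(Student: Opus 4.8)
The plan is to build the map out of \MTT's term formers in three moves: abstract the argument, strip the modal type with the \textsf{let}-eliminator \ref{rule:wdra:elim}, and re-wrap with the modal introduction form \ref{rule:wdra:intro} applied to an $\alpha$-annotated variable. Fix a context with $p \sep \Gamma, \lock{\tf m}{\mu} \sez A \type_\ell$ (so $\ismod{\mu}{p}{q}$, and since a 2-cell $\iskey{\alpha}{\mu}{\nu}$ requires $\mu$ and $\nu$ parallel, also $\ismod{\nu}{p}{q}$). I claim the term sending $q \sep \Gamma \sez \hat x : \Modify{\tf m}{\mu}{A}$ to
\[
	\letmodify{\ttriv}{\id}{\tf m}{\mu}{x}{\hat x}\ \modify{\tf n}{\nu}{(x\vartrans{\alpha}{\tf n})}
\]
— instantiating the outer modality in \ref{rule:wdra:elim} with the identity and taking the constant motive $C := \Modify{\tf n}{\nu}{A[\ttrans{\alpha}{\tf m}{\tf n}]}$ — has type $\Modify{\tf n}{\nu}{A[\ttrans{\alpha}{\tf m}{\tf n}]}$, and a non-modal $\lambda$-abstraction over $\hat x$ then delivers the stated function type.

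The hypothesis enters at exactly one point: type-checking the branch $\modify{\tf n}{\nu}{(x\vartrans{\alpha}{\tf n})}$ over $q \sep \Gamma, \ctxmod{\tf m}{\mu}{x}{A}$. Invoke the general variable rule \ref{rule:ctx-modext:var} with telescope $\Delta := (\lock{\tf n}{\nu})$, for which $\locks(\Delta) = \nu$ and $\ticks(\Delta) = \tf n$; its side condition $\iskey{\alpha}{\mu}{\locks(\Delta)}$ is then literally the given 2-cell $\iskey{\alpha}{\mu}{\nu}$, so the rule produces $p \sep \Gamma, \ctxmod{\tf m}{\mu}{x}{A}, \lock{\tf n}{\nu} \sez x\vartrans{\alpha}{\tf n} : A[\ttrans{\alpha}{\tf m}{\tf n}]$, where $\ttrans{\alpha}{\tf m}{\tf n} : (\Gamma, \lock{\tf n}{\nu}) \to (\Gamma, \lock{\tf m}{\mu})$ is the substitution furnished by \ref{rule:lock:fmap}. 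Feeding this into \ref{rule:wdra:intro} upgrades it to $q \sep \Gamma, \ctxmod{\tf m}{\mu}{x}{A} \sez \modify{\tf n}{\nu}{(x\vartrans{\alpha}{\tf n})} : \Modify{\tf n}{\nu}{A[\ttrans{\alpha}{\tf m}{\tf n}]}$, which is precisely the branch premise of \ref{rule:wdra:elim}. Because the motive is constant in the eliminated variable, the motive-substitutions appearing both in the branch's expected type and in the conclusion of \ref{rule:wdra:elim} are identities, so the \textsf{let} typechecks and has the claimed type.

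I expect no genuine obstacle: this is the syntactic shadow of the semantic fact recalled in the excerpt, that a 2-cell induces $\interp{\keyshade{\alpha}} : \interp{\modshade{\mu}} \to \interp{\modshade{\nu}}$. The two things to get right are minor. First, the handedness: since locks, and hence the $\Modify{}{}{}$ types, are contravariant in 2-cells, the map must run from the $\mu$-modal type to the $\nu$-modal type, and the target's argument must be pulled back along $\ttrans{\alpha}{\tf m}{\tf n}$ — but this is forced, as it is the only way to make the occurrence of $x$ well-typed underneath a $\lock{\tf n}{\nu}$. Second, $\hat x$ is an ordinary variable whose type happens to be modal, not a modal variable, so \ref{rule:ctx-modext:var} cannot extract its contents directly; the purpose of the \textsf{let}-elimination is exactly to repackage it as the modal variable $x$ on which the variable rule can act. (One can moreover check, via the $\beta$-law of \ref{rule:wdra:elim}, that this construction sends $\modify{\tf m}{\mu}{a}$ to $\modify{\tf n}{\nu}{(a[\ttrans{\alpha}{\tf m}{\tf n}])}$, matching $\interp{\keyshade{\alpha}}$ and fitting compatibly with \cref{thm:modify-functorial}, but none of this is needed for the statement as given.)
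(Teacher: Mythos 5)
Your construction is correct and is the standard MTT derivation of this fact: eliminate the argument with \ref{rule:wdra:elim} (outer modality $\id$) to expose a $\mu$-annotated modal variable, hit it with the general variable rule \ref{rule:ctx-modext:var} using $\Delta = \lock{\tf n}{\nu}$ so that the side condition is exactly the given $\iskey{\alpha}{\mu}{\nu}$, then re-wrap with \ref{rule:wdra:intro}. The paper states this proposition without proof, citing it as one of the established results about \MTT{}, and your argument is precisely the one that is intended; your remarks on handedness, on the need for the \textsf{let} to upgrade $\hat x$ to a genuine modal variable, and on the induced $\beta$-behaviour are all accurate.
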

\begin{proposition}[Projection] \label{thm:projmod}
	If $\modshade \kappa \dashv \modshade \mu$ internal to the mode theory, with unit $\iskey{\eta}{\id}{\mu \circ \kappa}$ and co-unit $\iskey{\eps}{\kappa \circ \mu}{\id}$, then there is a function $\eps : (\tymod{\tf k}{\kappa}{\Modify{\tf m}{\mu}{A}}) \to A[\ttrans{\eps}{\tf k \twhisk \tf m}{\ttriv}]$, satisfying a $\beta$- and (thanks to extensionality) an $\eta$-law:
	\[
		\eps \modappnovar{\kappa} (\modify{\tf m}{\mu}{a}) = a[\ttrans{\eps}{\tf k \twhisk \tf m}{\ttriv}], \qquad
		\hat a = \modify{\tf m}{\mu}{(\eps \modappnovar{\kappa} (\hat a[\ttrans{\eta}{\ttriv}{\tf m \twhisk \tf k}]))}.
	\]
	Combined with these rules, $\eps$ is equally expressive as the $\mathsf{let}$-eliminator for $\Modifynovar{\mu}{\loch}$.
\end{proposition}
\begin{proposition}[Internal transposition] \label{thm:transpose}
	Let $\modshade \kappa \dashv \modshade \mu$ internal to the mode theory, with unit $\iskey{\eta}{\id}{\mu \circ \kappa}$ and co-unit $\iskey{\eps}{\kappa \circ \mu}{\id}$.
	Adding left names, we get $\modadj \zeta \kappa \dashv \modshade \mu$ with $\iskeyadj{\eps'}{\eta}{\id}{\id}{\zeta \circ \kappa}{\mu \circ \kappa}$ and $\iskeyadj{\eta'}{\eps}{\kappa \circ \zeta}{\kappa \circ \mu}{\id}{\id}$.
	
	Then there is an isomorphism of contexts expressing that $\kappa$ respects context extension:
	\begin{align*}
		\sigma = (\varsub{\tf k}{x \varbikey{\eps'}{\eta}}{y})
		&:
		(\Gamma, x : A, \bilock{\kappa}{\mu})
		\cong
		(\Gamma, \bilock{\kappa}{\mu}, \ctxmod{\tf k}{\kappa}{y}{A[\bikey{\eps'}{\eta}]}).
	\end{align*}
	The inverse is given by:
	\begin{equation*}
		\xymatrix{
			(\Gamma, \bilock{\kappa}{\mu}, \ctxmod{\tf k}{\kappa}{y}{A[\bikey{\eps'}{\eta}]})
				\ar[d]^{\displaystyle (\idsub_{(\Gamma, \bilock{\kappa}{\mu}, y)}, \bikey{\eta'}{\eps})}
			\\
			(\Gamma, \bilock{\kappa}{\mu}, \ctxmod{\tf k}{\kappa}{y}{A[\bikey{\eps'}{\eta}]}, \bilock{\zeta}{\kappa}, \bilock{\kappa}{\mu})
				\ar[d]^{\displaystyle (\id_\Gamma, \bikey{\eps'}{\eta}, y/x, \bilock \kappa \mu)}
			\\
			(\Gamma, x : A, \bilock{\kappa}{\mu})
		}
	\end{equation*}
	
	Correspondingly, given $B$ in the codomain context of $\sigma$, there is an isomorphism of types
	\begin{align*}
		(x : A) \to \Modify{\tf m}{\mu}{B[\sigma]}
		\qquad \cong \qquad
		\Modify{\tf m}{\mu}{(\ctxmod{\tf k}{\kappa}{y}{A[\bikey{\eps'}{\eta}]}) \to B}
	\end{align*}
	expressing internal transposition.
\end{proposition}

\section{The Modal Transpension System (\Msys{}):\texorpdfstring{\\}{} General Mode Theory and Semantics} \label{sec:quantifiers-as-modalities-full}
As mentioned in \cref{sec:quantifiers-as-modalities}, in our modal transpension system (\Msys{}), the transpension modality $\modshade{\transpshrt{u}}$ will be part of an adjoint triple of internal modalities $\modshade{\wknshrt{u}} \dashv \modshade{\funcshrt{u}} \dashv \modshade{\transpshrt{u}}$ together with weakening $\modshade{\wknshrt{u}}$ and universal quantification $\modshade{\funcshrt{u}}$ or, more generally in potentially non-cartesian systems, an adjoint triple $\modshade{\freshshrt{u}} \dashv \modshade{\lollishrt{u}} \dashv \modshade{\transpshrt{u}}$ together with fresh weakening $\modshade{\freshshrt{u}}$ and substructural (e.g.\ linear/affine) universal quantification $\modshade{\lollishrt{u}}$.
The further left adjoints $\lmodshade{\pairshrt{u}}$ (cartesian) or $\lmodshade{\sumshrt{u}}$ (potentially non-cartesian) cannot be internalized because every internal \MTT{} modality needs to have a further semantic left adjoint; thus, they will only appear as left names.

Notably, the aforementioned modalities all bind or depend on a variable, a phenomenon which is not supported by \MTT{}.
We shall address this issue in the current section by grouping shape variables such as $u : \IU$ in a \textbf{shape context} which is not considered part of the type-theoretic context but instead serves as the \emph{mode} of the judgement.

We assume that there are no prior modalities, i.e.\  that the type system to which we wish to add a transpension type is non-modal in the sense that it has a single mode and only the identity modality.
We assume that this single prior mode is modelled by the presheaf category $\Psh(\catW)$.
Prior modalities and in particular their commutation with the modalities mentioned above, are considered in the technical report \cite{transpension-techreport}.

\subsection{Shape contexts} \label{sec:shp-ctx}
Assume we have in the prior system a context $\XX$ modelled by a presheaf $\Xi$ over $\catW$.
Then the presheaves $\Psh(\catW/\Xi)$ over the category of elements $\catW/\Xi$ of the presheaf $\Xi$ are also a model of dependent type theory.
Denoting the judgements of the latter system with a prefix $\XX \sep$, it happens to be the case that judgements $\XX \sep \Gamma \sez J$ (i.e.\ $\Gamma \sez J$ in $\Psh(\catW/\Xi)$) have precisely the same meaning as judgements $\XX.\Gamma \sez J$ in $\Psh(\catW)$ (for a suitable but straightforward translation of $J$).
Thus, we will group together all shape variables (variables for which we want a transpension type) in a \textbf{shape context} $\XX$ in front of the typing context.
Our judgements will then take the form $\XX \sep \Gamma \sez J$.
Modal techniques will be used to signal what part of the context $\Gamma$ is fresh for a shape variable $u : \IU$, as this can then no longer be signalled by the position of $u : \IU$ in the context.
All of this allows us to frame the shape context $\XX$ as the \emph{mode} of the judgement, as it determines the category $\Psh(\catW/\Xi)$ in which the judgement is modelled.

Concretely, we fix a set of \textbf{shapes} and generate shape contexts by the following rules:
\begin{equation*}
	\inferencel{shp-ctx-empty}{}{
		{\cdot} \shpctx
	}{}
	\qquad
	\inferencel{shp-ctx-ext}{
		\XX \shpctx \qquad
		\IU \shape
	}{
		\XX, u : \IU \shpctx
	}{}
\end{equation*}
In \cref{sec:add:boundary}, we will additionally add boundary variables.
More generally, users of the current system could add shape context constructors at will, as long as they can be interpreted as presheaves over $\catW$.

\subsection{Mode theory} \label{sec:mode-theory}
For simplicity, we take a highly general mode theory and will then only be able to say interesting things about specific modalities and 2-cells.
In practice, and especially in implementations, one will want to select a more syntactic subtheory right away.

As \textbf{modes}, we take shape contexts. An interpretation function $\interp \loch$ from shape contexts to presheaves over $\catW$ will be defined in \cref{sec:substructural:def}.
The mode $\XX$ is modelled in $\Psh(\catW/\interp \XX)$.%
\footnote{As we will see later on, the available shapes must in some sense already be present in the base category, so that a context consisting purely of shapes will in general be representable.
As such, we could alternatively interpret modes as \emph{representable} presheaves over $\catW$, which via the Yoneda-embedding are just the objects of $\catW$.
This is perfectly possible and would (again by inserting the Yoneda-embedding) require virtually no changes to our approach, although a number of intermediate results in the technical report \cite{transpension-techreport} would become unnecessary.
However, the current approach is strictly more general, allows us to speak about boundaries (definitions \ref{def:boundary} and \ref{def:dir-dim-split}) in the shape context, and did not require any compromise in the strength of our results.}

As \textbf{modalities} $\ismod{\mu}{\XX_1}{\XX_2}$, we take all functors $\interp{\locknovar \mu} : \Psh(\catW/\interp{\XX_2}) \to \Psh(\catW/\interp{\XX_1})$ which have a right adjoint $\interp{\modshade \mu}$ that is then automatically a weak CwF morphism \cite{transpension-techreport}\allowbreak\cite[thm. 6.4.1]{nuyts-phd} giving rise to a DRA \cite[lemma 17]{dra}\cite[\S 2.1.3]{reldtt-techreport}.%
\footnote{A designated right adjoint can be retrieved from the left adjoint without the axiom of choice \cite[\S 2.3.6]{transpension-techreport}.}

As \textbf{2-cells} $\iskey{\alpha}{\mu}{\nu}$, we take all natural transformations $\interp{\key \alpha} : \interp{\locknovar \nu} \to \interp{\locknovar \mu}$, which automatically give rise to natural transformations $\interp{\keyshade \alpha} : \interp{\modshade \mu} \to \interp{\modshade \nu}$.

\section{\Msys{} Modalities for Substitution} \label{sec:subst}
In the previous section, we have defined modalities as left adjoint functors and 2-cells as natural transformations. As such, we have neglected to provide an actual syntax; any syntax we use should be shallowly defined on semantic objects.

We take a similar approach to shape substitutions.
A shape substitution from $\XX_1$ to $\XX_2$ is defined as a presheaf morphism $\sigma : \interp{\XX_1} \to \interp{\XX_2}$.
We will consistently write the interpretation brackets so as to avoid confusion with modalities $\XX_1 \to \XX_2$.
A presheaf morphism \emph{is} not a modality but it gives rise to a pair of modalities:%
\footnote{Note in particular that $\modshade{\wknshrt{\loch}}$ turns the arrow around: a presheaf morphism (shape substitution) $\sigma : \interp{\XX_1} \to \interp{\XX_2}$ gives rise to a substitution modality $\ismod{\wknshrt{\sigma}}{\XX_2}{\XX_1}$ sending types $T$ in shape context $\XX_2$ to types $\Modify{\tf o}{\wknshrt{\sigma}}{T}$ in shape context $\XX_1$.}
\begin{theorem} \label{thm:subst}
	Any presheaf morphism $\sigma : {\Xi_1} \to {\Xi_2}$ gives rise to a triple of adjoint functors
	\[
		\pairpsh{} \sigma \dashv \wknpsh{} \sigma \dashv \funcpsh{} \sigma,
	\]
	\[
		\pairpsh{} \sigma, \funcpsh{} \sigma : \Psh(\catW/{\Xi_1}) \to \Psh(\catW/{\Xi_2}) \qquad
		\wknpsh{} \sigma : \Psh(\catW/{\Xi_2}) \to \Psh(\catW/{\Xi_1})
	\]
	the latter two of which can be internalized as modalities  (with an additional left name) $\modadj{\pairshrt{\sigma}}{\wknshrt{\sigma}} \dashv \modshade{\funcshrt{\sigma}}$ with
	\[
		\interp{\bilock{\pairshrt{\sigma}}{\wknshrt{\sigma}}} = \pairpsh{} \sigma, \qquad
		\interp{\modshade{\wknshrt{\sigma}}} = \wknpsh{} \sigma, \qquad
		\interp{\bilock{\wknshrt{\sigma}}{\funcshrt{\sigma}}} = \wknpsh{} \sigma, \qquad
		\interp{\modshade{\funcshrt{\sigma}}} = \funcpsh{} \sigma.
	\]
	We denote the units and co-units as
	\begin{align*}
		\copypsh{}{\sigma} &: \id \to \wknpsh{}{\sigma} \circ \pairpsh{}{\sigma}
		& %
		\droppsh{}{\sigma} &: \pairpsh{}{\sigma} \circ \wknpsh{}{\sigma} \to \id \\
		\constpsh{}{\sigma} &: \id \to \funcpsh{}{\sigma} \circ \wknpsh{}{\sigma}
		& %
		\apppsh{}{\sigma} &: \wknpsh{}{\sigma} \circ \funcpsh{}{\sigma} \to \id \\
		\lkeyshade{\dropsym_\sigma} \Dashv \keyshade{\constsym_\sigma} &: \modshade{\id} \Rightarrow \modshade{\funcshrt{\sigma} \circ \wknshrt{\sigma}}
		& \lkeyshade{\copysym_\sigma} \Dashv \keyshade{\appsym_\sigma} &: \modshade{\wknshrt{\sigma} \circ \funcshrt{\sigma}} \Rightarrow \modshade{\id}
	\end{align*}
	Under the correspondence of semantic contexts $\XX \sep \Gamma \ctx$ (i.e.\ presheaves over $\catW/\interp \XX$) with semantic types $\interp \XX \sez \Gamma \type$, the functor $\wknpsh{}{\sigma}$ is exactly the semantics of ordinary type substitution in the standard presheaf model \cite{Hofmann97-presheaf-chapter} and hence, if $\sigma$ is a weakening substitution, then $\pairpsh{}{\sigma}$ and $\funcpsh{}{\sigma}$ are naturally isomorphic to the semantics of ordinary $\Sigma$- and $\Pi$-types.
	
	The functor $\wknpsh{}{\loch}$ and the modality $\modshade{\funcshrt{\loch}}$ are strictly functorial (they respect identity and composition of presheaf morphisms on the nose) whereas the functors $\pairpsh{}{\loch}$, $\funcpsh{}{\loch}$ and the modality $\modshade{\wknshrt{\loch}}$ are pseudofunctorial\footnote{However, Gratzer et al.\ \cite{mtt} have a strictification theorem for models of \MTT{} which could be used to strictify $\modshade{\wknshrt{\loch}}$.}.
\end{theorem}
\begin{proof}
	The morphism $\sigma$ gives rise to a functor $\pairslice{} \sigma : \catW/\Xi_1 \to \catW/\Xi_2 : (W, \psi) \to (W, \sigma\psi)$ and hence via left Kan extension, precomposition and right Kan extension \cite{stacks-adjoints} to a triple of adjoint functors $\pairpsh{}{\sigma} \dashv \wknpsh{}{\sigma} \dashv \funcpsh{}{\sigma}$ between the presheaf categories.
	The claim about type substitution follows from unfolding the definitions and the claims about $\Sigma$- and $\Pi$-types then follow from uniqueness of adjoints.
	
	Strict functoriality of $\wknpsh{}{\loch}$ follows immediately from the construction.
	Strict functoriality of $\modshade{\funcshrt{\loch}}$ then follows from the fact that a modality $\modshade \mu$ is fully defined by the semantic left adjoint $\interp{\locknovar \mu}$.
	Pseudofunctoriality of the others follows by uniqueness of the adjoint.
\end{proof}
\begin{remark}[Substitution as a DRA] \label{rem:subst-dra}
	In presheaf models, contexts are essentially the same thing as closed types (a property called \emph{democracy}).
	The shape substitution operation for contexts $\bilock{\wknshrt{\sigma}}{\funcshrt{\sigma}}$ is modelled by $\wknpsh{}{\sigma}$, i.e.\ by ordinary substitution.
	However, the shape substitution operation applicable to types is the modal type former $\Modifynovar{\wknshrt{\sigma}}{\loch}$, which is \emph{not} equivalent.
	Indeed, if $\Modifynovar{\wknshrt{\sigma}}{T}$ is closed, i.e.\ $\XX_1 \sep {\cdot} \sez \Modifynovar{\wknshrt{\sigma}}{T} \type$, then $T$ lives in context $\XX_2 \sep {\cdot}, \bilock{\pairshrt{\sigma}}{\wknshrt{\sigma}} \sez T \type$, so it is not closed.
	The operation $\Modifynovar{\wknshrt{\sigma}}{\loch}$ is in general still modelled as a substitution, but now it is one between the semantic contexts $\interp{\XX_1}.\interp{\Gamma}$ and $\interp{\XX_2}.\pairpsh{}{\sigma}\interp{\Gamma}$ which are \emph{isomorphic}.
	This is especially clear if $\sigma : \interp{\XX}.\interp{\Delta} \to \interp{\XX}$ is a weakening substitution for a context $\XX \sep \Delta \ctx$, in which case we are dealing with $ \interp{\XX}.\interp{\Delta}.\interp{\Gamma}$ and $\interp{\XX}.(\Sigma \interp{\Delta} \interp{\Gamma})$.
	We can still let $\Modifynovar{\wknshrt{\sigma}}{\loch}$ act on a closed type $\Xi_2 \sep {\cdot} \sez S \type$, however, but we first have to weaken $S$ to bring it to context ${\cdot}, \bilock{\pairshrt{\sigma}}{\wknshrt{\sigma}}$.
	The composite of these two operations -- weakening over $\bilock{\pairshrt{\sigma}}{\wknshrt{\sigma}}$ and then applying $\Modifynovar{\wknshrt{\sigma}}{\loch}$ -- is in fact equivalent with the operation $\bilock{\pairshrt{\sigma}}{\wknshrt{\sigma}}$ on contexts.
	
	This remark is relevant to the $\modshade{\wknshrt{\sigma}}$ modality specifically because its lock $\bilock{\pairshrt{\sigma}}{\wknshrt{\sigma}}$ does not preserve the empty context, whereas most other modalities' locks do.
\end{remark}
\begin{notation} \label{not:subst}
	We will use a slightly unconventional notation for \textbf{substitutions} in order to have them make maximal sense both as a substitution (as in $\modshade{\wknshrt{\sigma}}$) and as a function domain (as in $\modshade{\funcshrt{\sigma}}$):
	\begin{itemize}
		\item Every weakened variable (if weakening is available for the given shape) will be declared, e.g.\ we get a presheaf morphism $(u : \IU) : \interp{\XX, u : \IU} \to \interp\XX$ and hence $\ismod{\wknlong{u : \IU}}{\XX}{(\XX, u : \IU)}$ and $\ismod{\funclong{u : \IU}}{(\XX, u : \IU)}{\XX}$.
		Furthermore, we may omit the shape, writing just $\modshade{\wknshrt{u}}$ and $\modshade{\funcshrt{u}}$.
		Thus, this is what weakening and shape abstraction look like:
		\[
			\inference{
				\XX \sep \Gamma, \bilock{\pairshrt{u}}{\wknshrt{u}} \sez t : T
			}{
				\XX, u : \IU \sep \Gamma \sez \modify{\tf o}{\wknshrt{u}}{t} : \Modify{\tf o}{\wknshrt{u}}{T}
			}{},
			\qquad
			\inference{
				\XX, u : \IU \sep \Gamma, \bilock{\wknshrt{u}}{\funcshrt{u}} \sez t : T
			}{
				\XX \sep \Gamma \sez \modify{\tf p}{\funcshrt{u}}{t} : \Modify{\tf p}{\funcshrt{u}}{T}
			}{}.
		\]
		The projection function (\cref{thm:projmod}) for $\modshade{\funcshrt{u}}$ is function application:
		\begin{align*}
			\inference{
				\XX, u : \IU \sep \Gamma, \bilock{\wknshrt{u} \circ \pairshrt{u}}{\wknshrt{u} \circ \funcshrt{u}} \sez A \type
			}{
				\XX, u : \IU \sep \Gamma \sez \appsym_u : (\tymod{\tf o}{\wknshrt{u}}{\Modify{\tf p}{\funcshrt{u}}{A}}) \to A\brac{ \bikeytyped{\copysym_u}{\appsym_u}{\wknshrt{u} \circ \pairshrt{u}}{\wknshrt{u} \circ \funcshrt{u}}{\id}{\id} }
			}{\cref{thm:projmod}}
		\end{align*}
		The 2-cell $\iskeyadj{\copysym_u}{\appsym_u}{\wknshrt{u} \circ \pairshrt{u}}{\wknshrt{u} \circ \funcshrt{u}}{\id}{\id}$ signals a contraction of shape variables, namely of the one bound by the $\Pi$-modality and the one to which the function is applied.
		
		\item When a variable is substituted, we denote this as $u := t$ instead of $t/u$, e.g.\ in a cubical type theory we get a presheaf morphism $(i := 0) : \interp \XX \to \interp{\XX, i : \IX}$ and hence $\ismod{\wknlong{i := 0}}{(\XX, i : \IX)}{\XX}$ which binds $i$ and $\ismod{\funclong{i := 0}}{\XX}{(\XX, i : \IX)}$ which depends on $i$, so we may substitute $0$ for $i$ but we may also abstract over the assumption that $i$ is $0$:
		\[
			\inference{
				\XX, i : \IX \sep \Gamma, \bilock{\pairlong{i := 0}}{\wknlong{i := 0}} \sez t : T
			}{
				\XX \sep \Gamma \sez \modify{\tf o}{\wknlong{i := 0}}{t} : \Modify{\tf o}{\wknlong{i := 0}}{T}
			}{},
			\qquad
			\inference{
				\XX \sep \Gamma, \bilock{\wknlong{i := 0}}{\funclong{i := 0}} \sez t : T
			}{
				\XX, i : \IX \sep \Gamma \sez \modify{\tf p}{\funclong{i := 0}}{t} : \Modify{\tf p}{\funclong{i := 0}}{T}
			}{}.
		\]
		And apply:
		\begin{align*}
			\inference{
				\XX \sep \Gamma, \bilock{\wknlong{i := 0} \circ \pairlong{i := 0}}{\wknlong{i := 0} \circ \funclong{i := 0}} \sez A \type
			}{
				\XX \sep \Gamma \sez \applong{i := 0} : (\tymodnovar{\wknlong{i := 0}}{\Modifynovar{\funclong{i := 0}}{A}}) \to A \brac{ \bikeytyped{\copylong{i := 0}}{\applong{i := 0}}{\wknlong{i := 0} \circ \pairlong{i := 0}}{\wknlong{i := 0} \circ \funclong{i := 0}}{\id}{\id} }
			}{prop.\ \ref{thm:projmod}}
		\end{align*}
		\item Finally, if $\sigma$ involves weakening, then the codomain of the co-unit may be a \textbf{variable renaming} that is sugar for the identity, e.g.
		\[
			\iskey{\appsym_{(u/v : \IU)}}{\wknlong{u : \IU} \circ \funclong{v : \IU}}{\renamelong{u : \IU, v := u}}
		\]
		is exactly the same thing as
		\[
			\iskey{\appsym_{(u : \IU)}}{\wknlong{u : \IU} \circ \funclong{u : \IU}}{\id}.
		\]
		This way, we may adjust $\appsym_{(u : \IU)}$ in order to be able to apply to a \emph{different} variable:
		\begin{align*}
			\inference{
				\XX, v : \IU \sep \Gamma, \bilock{\wknshrt{v} \circ \pairshrt{u}}{\wknshrt{u} \circ \funcshrt{v}} \sez A \type
			}{
				\XX, u : \IU \sep \Gamma \sez \applong{u/v} : (\tymod{\tf o}{\wknshrt{u}}{\Modify{\tf p}{\funcshrt{v}}{A}}) \to \Modifynovar{\renamelong{u, v := u}}{   A\brac{ \bikeytyped{\copylong{v/u}}{\applong{u/v}}{\wknshrt{u} \circ \pairshrt{u}}{\wknshrt{u} \circ \funcshrt{u}}{\renamelong{v, u := v}}{\renamelong{u, v := u}} }   }
			}{}
		\end{align*}
	\end{itemize}
	Again, bear in mind that shape substitutions are in fact defined as presheaf morphisms and that therefore, notions such as \emph{weakening} and \emph{contraction} reflected by the syntax introduced here, need to be shallowly interpreted in presheaf morphisms.
\end{notation}
\begin{example} \label{ex:diag-pi}
	If $\IU$ is cartesian, i.e.\ $\interp{\XX, u : \IU} = \interp \XX \times \interp \IU$, then there is a diagonal substitution $(w : \IU, u := w, v := w) : \interp{\XX, w : \IU} \to \interp{\XX, u : \IU, v : \IU}$.
	Writing
	\begin{align*}
		&\keyshade{\alpha} = \keyshade{\id_{\modshade{\funcshrt{u}}} \whisk \id_{\modshade{\funcshrt{v}}} \whisk \constsym_{(w, u := w, v := w)}}
		: \modshade{\funcshrt{u} \circ \funcshrt{v}} \Rightarrow \\
		&\qquad \modshade{\funcshrt{u} \circ \funcshrt{v} \circ \funclong{w, u := w, v := w} \circ \wknlong{w, u := w, v := w}} \\
		&\qquad = \modshade{\funclong{(u : \IU} \circ (v : \IU) \circ (w, u := w, v := w)) \circ \wknlong{w, u := w, v := w}} \\
		&\qquad = \modshade{\funclong{(u : \IU} \circ (w, u := w)) \circ \wknlong{w, u := w, v := w}} \\
		&\qquad = \modshade{\funcshrt{w} \circ \wknlong{w, u := w, v := w}},
	\end{align*}
	where the equations use strict functoriality of $\modshade{\funcshrt{\loch}}$ and ordinary calculation of composition of substitutions,
	this allows us to type the na\"ively typed function $\lambda f.\lambda w.f\,w\,w : (\funcshrt{u}.\funcshrt{v}. A) \to \funcshrt{w}. A[w/u, w/v]$ as
	\begin{equation*}
		\Modify{\tf p_u}{\funclong{u : \IU}}{\Modify{\tf p_v}{\funclong{v : \IU}}{A}} \to
		\Modify{\tf p_w}{\funclong{w : \IU}}{\Modify{\tf o}{\wknlong{w : \IU, u := w, v := w}}{A[\ttrans{\alpha}{\tf p_u \twhisk \tf p_v}{\tf p_w \twhisk \tf o}]}}.
	\end{equation*}
\end{example}
\begin{remark} \label{rem:compute-wknsym}
	The reframing of shape substitutions as a modality, has the annoying consequence that substitution no longer reduces.
	However, both $\Modifynovar{\wknshrt{\sigma}}{\loch}$ and $\modifysym{\wknshrt{\sigma}}$ are semantically an ordinary substitution (along an isomorphism, see \cref{rem:subst-dra}).
	Thus, we could add computation rules such as:
	\begin{equation*}
		\begin{array}{r c l r c l}
			\Modify{\tf o}{\wknshrt{\sigma}}{A \times B} &=& \Modify{\tf o}{\wknshrt{\sigma}}{A} \times \Modify{\tf o}{\wknshrt{\sigma}}{B}, & {}\quad
			\Modify{\tf o}{\wknshrt{\sigma}}{\uni{}} &=& \uni{}, \\
			\modify{\tf o}{\wknshrt{\sigma}}{(a, b)} &=& (\modify{\tf o}{\wknshrt{\sigma}}{a}, \modify{\tf o}{\wknshrt{\sigma}}{b}), &
			\modify{\tf o}{\wknshrt{\sigma}}{\tycode A} &=& \tycode{\Modify{\tf o}{\wknshrt{\sigma}}{A}}.
		\end{array}
	\end{equation*}
	This is fine in an extensional type system, but would not play well with the $\beta$-rule for modal types in an intensional system.
	Indeed, $\beta$-reduction for $\Modify{\tf o}{\wknshrt{\sigma}}{A}$ requires a solution to the following problem: when $\hat a = \modify{\tf o}{\wknshrt{\sigma}}{a}$ definitionally, then we need to be able to infer $a$ up to definitional equality from $\hat a$.
	Alternatively, the eliminator for $\Modify{\tf o}{\wknshrt{\sigma}}{A}$ should somehow proceed by induction on $A$, e.g.\ an element of $\Modify{\tf o}{\wknshrt{\sigma}}{A \times B}$ could be eliminated as an element of $\Modify{\tf o}{\wknshrt{\sigma}}{A} \times \Modify{\tf o}{\wknshrt{\sigma}}{B}$.
	A third possibility would be to abolish elimination of $\Modifynovar{\wknshrt{\sigma}}{\loch}$ altogether, except when applied to type formers for which there is no definitional substitution-commutation law.
\end{remark}
\begin{remark} \label{rem:no-shape-subst}
	In type theory, we generally expect admissibility of substitution: given a derivable judgement $\Gamma \sez J$ and a substitution $\sigma : \Delta \to \Gamma$, we expect derivability of $\Delta \sez J[\sigma]$, where the operation $\loch[\sigma]$ can be applied to any term, type or other object in context and traverses its structure, leaving everything untouched except variables.
	A good way to guarantee admissibility of substitution is by making sure that every inference rule has a conclusion in a general context $\Gamma$ and that the context of any premise is obtained by applying a functorial operation to $\Gamma$.
	
	\emph{There is no such result for shape substitutions.}
	The conclusion of modal inference rules often has a non-general shape context, and the transpension type is in general not even respected by shape substitution \cite{transpension-techreport}.
	However, until we extend \Msys{} with additional rules in \cref{sec:add,sec:structure,sec:recover},\footnote{And possibly even after, see \cref{rem:close-subst}.} we do have a form of the usual result: given a derivable judgement $\XX \sep \Gamma \sez J$ and a substitution $\XX \sep \sigma : \Delta \to \Gamma$, we can derive $\XX \sep \Delta \sez J[\sigma]$.
\end{remark}

\section{Multipliers} \label{sec:substructural}
In this section, we introduce multipliers as a semantics for shapes, as well as the associated modalities $\modshade{\freshshrt{u}} \dashv \modshade{\lollishrt{u}} \dashv \modshade{\transpshrt{u}}$.
In \cref{sec:substructural:def}, we define multipliers and a number of criteria by which we can classify them.
In \cref{sec:substructural:dim-split}, we deal with a technical complication that we dubbed \emph{unpointability}\seelog{}, which shows up especially in models of guarded and nominal type theory.
In \cref{sec:examples}, we discuss an extensive number of examples.
In \cref{sec:wkn}, we consider how \emph{copointed}\seelog{} multipliers give rise to \emph{shape weakening} modalities that are a special instance of the modalities in \cref{sec:subst}.
In \cref{sec:act-elements}, we discuss how a multiplier and its associated operations lift from acting on base and slice categories to acting on categories of elements of semantic shape contexts.
Then in \cref{sec:multip}, we are finally ready to define the transpension modality and its adjoints and to state the quantification \cref{thm:quantification} that helps to understand them.
In \cref{sec:car-multip}, we say a bit more on cartesian multipliers.
In \cref{sec:techreport}, we briefly list the matters that are not discussed in the current paper but can be found in the technical report \cite{transpension-techreport}.

\subsection{Shapes and multipliers} \label{sec:substructural:def}
In \cref{sec:quantifiers-as-modalities-full}, we defined shape contexts as lists of variables and announced that these would be modelled as presheaves over $\catW$.
Several times, we have hinted at the fact that these shape variables need not satisfy all the usual structural rules (weakening, exchange and contraction).
In this section, we make these matters precise.

We associate to each shape $\IU$ a functor $\loch \multip U : \catW \to \catW$ which extends by left Kan extension to a functor $\loch \multip \yoneda U : \Psh(\catW) \to \Psh(\catW)$.%
\footnote{Both $\loch \multip U$ and $\loch \multip \yoneda U$ are to be regarded as single-character symbols, i.e.\  $\multip$ in itself is meaningless.
In most concrete applications, however, the multiplier is defined as some monoidal product $\loch \otimes U$ with a given object $U$, in which case the left Kan extension is naturally isomorphic to Day convolution with $\yoneda U$.
For this reason, we also refrain from defining $U := \top \multip U$ because we may not have $\top \otimes U = U$ on the nose for the object of interest $U$.}
We define the semantics of shape contexts $\interp \loch : \Shp \Ctx \to \Obj{\Psh(\catW)}$ as follows:
\begin{equation*}
	\interp \cdot = \top, \qquad
	\interp{\XX, u : \IU} = \interp \XX \multip \yoneda U.
\end{equation*}

Of course, if we model shape context extension with $u : \IU$ by an \emph{arbitrary} functor, then we will not be able to prove many results.
Depending on the properties of the functor, the variable $u$ will obey different structural rules and the $\Phi$-combinator \cite{moulin,moulin-param3} (\cref{sec:recover:phi}) will or will not be sound for $\IU$.
For this reason, we introduce some criteria that help us classify shapes.
Some of these criteria concern in fact the \emph{fresh weakening functor} for the given multiplier, which is essentially an instance of the following construction:
\begin{definition} \label{def:act-slice}
	Given a functor $F : \catV \to \catW$ and $V_0 \in \Obj\catV$, we define the action of $F$ on slice objects over $V_0$ as the functor
	\begin{equation*}
		F^{/V_0} : \catV/V_0 \to \catW/FV_0 : (V, \vfi) \mapsto (FV, F\vfi).
	\end{equation*}
\end{definition}
\begin{definition} \label{def:multip}
	Assume $\catW$ has a terminal object $\top$. A \textbf{multiplier} for an object $U$ is an endo\-functor\footnote{In the technical report \cite{transpension-techreport}, we generalize beyond endofunctors.} $\loch \multip U : \catW \to \catW$ such that $\top \multip U \cong U$.
	This gives us a natural second projection $\pi_2 : (\loch \multip U) \to U$. 
	
	We define the \textbf{fresh weakening functor} to the slice category as $\freshbase{U} : \catW \to \catW/U : W \mapsto (W \multip U, \pi_2)$, which is essentially the action of the multiplier on slice objects over $\top$.
	
	We say that a multiplier (as well as its shape) is:
	\begin{itemize}
		\item \textbf{Copointed}\seelog{} if it is equipped with a natural first projection $\pi_1 : (\loch \multip U) \to \mathrm{Id}$. \\
		\emph{This property carries over to $\loch \multip \yoneda U$ and thus allows for shape weakening.}
		\item A \textbf{comonad}\seelog{} if it is additionally equipped with a natural diagonal $\delta : (\loch \multip U) \to ((\loch \multip U) \multip U)$ such that $\pi_1 \circ \delta = (\pi_1 \multip U) \circ \delta = \id_{(\loch \multip U)}$ and $\delta \circ \delta = (\delta \multip U) \circ \delta : (\loch \multip U) \to (\loch \multip U)^3$. \\
		\emph{This property carries over to $\loch \multip \yoneda U$ and thus allows for shape variable contraction.}
		\item \textbf{Cartesian} if it is naturally isomorphic to the cartesian product with $U$. \\
		\emph{This property carries over to $\loch \multip \yoneda U$ and is thus a sufficient condition for allowing exchange. Additionally, it will erase the distinction between \emph{weakening} $\modshade{\wknshrt{u}}$ (\cref{sec:wkn}) and \emph{fresh shape weakening} $\modshade{\freshshrt{u}}$ (\cref{sec:multip}), see \cref{thm:quantification}.}
		\item \textbf{$\top$-slice faithful}\seelog{} if $\freshbase{U}$ is faithful. \\
		\emph{This is a basic well-behavedness property that is satisfied by all examples of interest. \Cref{ex:multip:initial} is a counterexample.}
		\item \textbf{$\top$-slice full}\seelog{} if $\freshbase{U}$ is full. \\
		\emph{For multipliers for objects other than $\top$, this property precludes the exchange rule (\cref{thm:tsfull-cartesian}).
		$\top$-slice fully faithful multipliers will give rise to fully faithful modalities for fresh weakening $\modshade{\freshshrt{u}}$ and transpension $\modshade{\transpshrt{u}}$ (\cref{thm:quantification}).}
		\item \textbf{$\top$-slice shard-free}\seelog{} if $\freshbase{U}$ is essentially surjective on slice objects $(V, \psi)$ such that $\psi : V \to U$ is dimensionally split (\cref{def:dim-split}). A \textbf{shard}\seelog{} is a slice object $(V, \psi)$ that is not up to isomorphism in the image of $\freshbase{U}$ even though $\psi$ is dimensionally split. \\
		\emph{Intuitively, a shard is a shape over $\IU$ that covers all of $\IU$ (as expressed by the fact that $\psi$ is dimensionally split, which just means split epi in most applications), but that is not prism-shaped in the direction of $\IU$ (as it would then be in the image of $\freshbase{U}$).
		Shard-freedom will be a requirement for the $\Phi$-rule \cite{moulin} to hold (\cref{thm:phi}) and for elimination of the transpension type by pattern matching to be sound (\cref{thm:transp-elim}).}
		\item \textbf{$\top$-slice right adjoint} if $\freshbase{U}$ has a left adjoint $\sumbase U : \catW/U \to \catW$.%
		\footnote{A functor $\loch \multip U$ with this property is usually called a \emph{parametric} or \emph{local right adjoint}, but the word `local' is overloaded \cite{nlab:locally} and so is `parametric', and we wanted uniform terminology.}
	\end{itemize}
\end{definition}
\begin{proposition} \label{thm:tsfull-cartesian}
	If a $\top$-slice full multiplier for $U$ is:
	\begin{itemize}
		\item a comonad, then $U$ is terminal,
		\item cartesian, then it is naturally isomorphic to the identity functor.
	\end{itemize}
\end{proposition}
\begin{proof}
	The second statement clearly follows from the first, so we only prove the first.
	Consider the following diagram:
	\begin{equation}
		\xymatrix{
			\top \multip U \ar[rr]^{\delta}
				\ar[rd]_{\pi_2}
			&& (\top \multip U) \multip U
				\ar[ld]^{\pi_2}
			\\
			& U
		}
	\end{equation}
	It commutes, because $\pi_2 = \pi_2 \circ (\pi_1 \multip U) : (\top \multip U) \multip U \to U$ and $(\pi_1 \multip U) \circ \delta = \id$.
	So it is a morphism of slice objects $\delta : \freshbase{U} \top \to \freshbase{U} (\top \times U)$ and thus, since $\freshbase{U}$ is full, of the form $\freshbase{U} \upsilon$ for some $\upsilon : \top \to \top \times U$. This means in particular that
	\begin{equation}
		\idsub_{\top \times U} = \pi_1 \circ \delta = \pi_1 \circ (\upsilon \times U) = \upsilon \circ \pi_1 : \top \times U \to \top \times U,
	\end{equation}
	so the identity on $\top \multip U$ factors over $\top$. Then $\top \multip U \cong U$ is terminal.
\end{proof}

\subsection{Pointability, dimensional splitness and boundaries} \label{sec:substructural:dim-split}
Before we move on to a list of examples, we owe the reader a definition for dimensional splitness (although the impatient reader may first read the example \cref{sec:examples}, ignoring shard-freedom, pointability and boundaries).
In most popular base categories, namely all the \emph{objectwise pointable} ones, we could have gotten away with saying `split epi' instead of `dimensionally split' (\cref{thm:dim-split-epi}).
\begin{definition}\label{def:pointable}
	Let $\catW$ be a category with terminal object $\top$. An object $W$ is \textbf{pointable}\seelog{} if $() : W \to \top$ is split epi, i.e.\ if there exists at least one morphism $\top \to W$.
	A category is \textbf{objectwise pointable}\seelog{} if each object is pointable.
\end{definition}
We have carefully chosen the above terminology to emphasize (1) that pointability is a property, not structure (the corresponding structure is called \emph{pointed}), and (2) that objectwise pointability does \emph{not} require that the pointings can be chosen naturally.
\begin{proposition}
	Let $\loch \multip U$ be a multiplier on an objectwise pointable category $\catW$. Then for any object $W$, the slice object $\freshbase U W$ is split epi.
\end{proposition}
\begin{proof}
	Any functor preserves split epimorphisms. We have $\freshbase U W = (W \multip U, \pi_2)$ and $\pi_2 : W \multip U \to U \cong \top \multip U$ is essentially the image of $W \to \top$.
\end{proof}

When dealing with a category that is not objectwise pointable, the above theorem does not hold and the definition of shard-freedom w.r.t. split epi slice objects would not make sense, so we need a somewhat more general notion:
\begin{definition} \label{def:dim-split} \label{def:boundary}
	Given a multiplier $\loch \multip U : \catW \to \catW$, we say that a morphism $\vfi : V \to U$ is \textbf{dimensionally split} if there is some $W \in \catW$ such that $\pi_2 : W \multip U \to U$ factors over $\vfi$. The other factor $\chi : W \multip U \to V$ such that $\pi_2 = \vfi \circ \chi$ will be called a \textbf{(dimensional) section} of $\vfi$. We write $\dimslice{\catW}{U}$ for the full subcategory of $\catW / U$ of dimensionally split slice objects.
	
	We define the \textbf{boundary} $\partial U$ as the subpresheaf of the Yoneda-embedding $\yoneda U$ consisting of those morphisms that are \emph{not} dimensionally split.
\end{definition}
Thus, a multiplier is $\top$-slice shard-free if and only if every dimensionally split slice object has an invertible dimensional section.
\begin{proposition}
	Let $\loch \multip U$ be a multiplier on $\catW$. Then for any object $W$, the slice object $\freshbase U W$ is dimensionally split with section $\idsub_{W \multip U}$. \qed
\end{proposition}
\begin{proposition} \label{thm:dim-split-epi}
	If $\catW$ is objectwise pointable, then a morphism $\vfi : V \to U$ is split epi if and only if it is dimensionally split. \cite{transpension-techreport}
\end{proposition}
The notion of dimensionally split morphisms lets us consider the boundary and shard-freedom (a requirement for modelling $\Phi$) also in base categories that are not objectwise pointable, where the output of $\freshbase U$ may not be split epi.
\begin{remark} \label{rem:cosieve-base}
	$\top$-slice shard-freedom can also be formulated using (co)sieves \cite{nlab:sieve}.
	A \textbf{sieve in $\catW$} is a full subcategory $\cat S$ such that if $W \in \cat S$ and $\vfi : V \to W$, then $V \in \cat S$.
	The dual (where $\vfi$ points the other way) is called a \textbf{cosieve in $\catW$}.
	Being full subcategories, (co)sieves can be regarded as subsets of $\Obj{\catW}$.
	A \textbf{sieve on $U \in \catW$} is a sieve in $\catW/U$ or, equivalently, a subpresheaf of $\yoneda U$.
	
	A multiplier is $\top$-slice shard-free if either of the following equivalent criteria is satisfied:
	\begin{itemize}
		\item The objects in the essential image of $\freshbase U$ constitute a cosieve in $\catW/U$ \cite{mathoverflow:image-cosieve}.
		\item The objects \emph{outside} the essential image of $\freshbase U$ constitute a sieve in $\catW/U$, i.e.\ a sieve on $U$.
	\end{itemize}
	The slice objects of the cosieve generated by objects of the essential image of $\freshbase U$, are called dimensionally split.
	The boundary $\partial U$ is the largest sieve on $U$ that is disjoint with the objects of the essential image of $\freshbase U$.
	
	If $\loch \multip U$ is $\top$-slice fully faithful, then the above conditions are furthermore equivalent to $\freshbase U$ being a Street opfibration.
\end{remark}

\subsection{Examples} \label{sec:examples}
Let us look at some examples of multipliers.
Their properties are listed in \cref{fig:multips}.
Most properties are easy to verify, so we omit the proofs.
\begin{example}[Identity] \label{ex:multip:id}
	The identity functor on an arbitrary category $\catW$ is a multiplier for $\top$.
	All slice objects $(W, ())$ are dimensionally split with invertible section $\idsub_W : \Id\,W \cong W$; hence the boundary is empty and there are no shards.
	It is $\top$-slice right adjoint, with $\sumbase \top : \catW/\top \to \catW : (W, ()) \mapsto W$.
\end{example}
\begin{example}[Cartesian product] \label{ex:multip:cartesian}
	Let $\catW$ be a category with finite products and $U \in \catW$.
	Then $\loch \times U$ is a multiplier for $U$, which is $\top$-slice full if and only if it is the identity (i.e.\ $U$ is terminal, cf.\ \cref{thm:tsfull-cartesian,ex:multip:id}).
	It is $\top$-slice right adjoint with $\sumbase U : \catW/U \to \catW : (W, \psi) \mapsto W$.
	Hence, we have $\sumbase U \freshbase{U} = \loch \times U$.
\end{example}
\begin{definition}
	Let $\RGcat$ be the category generated by the following diagram and equations:
	\begin{equation*}
		\xymatrix{
			\RGnode
				\ar@/^{1em}/[r]^{\RGsrc}
				\ar@/_{1em}/[r]_{\RGtgt}
			&
			\RGedge \ar[l]|{\RGrfl}
		}
		\qquad
		\begin{array}{r c l}
			\RGrfl \circ \RGsrc &=& \idsub_{\RGnode}, \\
			\RGrfl \circ \RGtgt &=& \idsub_{\RGnode}.
		\end{array}
	\end{equation*}
	A presheaf over $\RGcat$ is a reflexive graph.
	More generally, let $\ary \RGcat a$ (with $a \geq 0$) be the category generated by the following:
	\begin{equation*}
		\xymatrix{
			\RGnode
				\ar@/^{2.5em}/[rr]^{\RGend 0, \ldots, \RGend{a-1}}
				\ar@/^{2em}/[rr]
				\ar@/^{1.5em}/[rr]
				\ar@/^{1em}/[rr]%
			&&
			\RGedge \ar[ll]|{\RGrfl}
		}
		\qquad
		\begin{array}{r c l}
			\RGrfl \circ \RGend i &=& \idsub_{\RGnode}.
		\end{array}
	\end{equation*}
\end{definition}
\begin{example}[Cartesian cubes] \label{ex:multip:cartesian-cubes}
	Let $\ary \cubecat a$ be the category of cartesian $a$-ary cubes.
	It is the free cartesian monoidal category with same terminal object over $\ary \RGcat a$. Concretely:
	\begin{itemize}
		\item Its objects take the form $(i_1 : \IX, \ldots, i_n : \IX)$ (the names are desugared to de Bruijn indices, i.e. the objects are really just natural numbers),
		\item Its morphisms $(i_1 : \IX, \ldots, i_n : \IX) \to (j_1 : \IX, \ldots, j_m : \IX)$ are arbitrary functions
		\[
			\vfi : \accol{j_1, \ldots, j_m} \to \accol{i_1, \ldots, i_n} \cup \accol{0, \ldots, a-1} : j \mapsto j \psub \vfi.
		\]
		We also write $\vfi = (j_1 \psub \vfi/j_1, \ldots, j_m \psub \vfi /j_m)$. If a variable $i$ is not used, we may write $i/\novar$ to emphasize this.
	\end{itemize}
	This category is objectwise pointable if and only if $a \neq 0$.
	On this category, we consider the multiplier $\loch \times (i : \IX)$, which is an instance of \cref{ex:multip:cartesian} and therefore inherits all properties of cartesian multipliers.
	A slice $(V, \vfi)$ where $\vfi : V \to (i : \IX)$ is dimensionally split if $i \psub \vfi$ is not an endpoint, i.e.\ $i \psub \vfi \not\in \accol{0, \ldots, a-1}$, and in that case it is isomorphic to $\freshbase{(i : \IX)} V'$ where $V'$ is $V$ with $i \psub \vfi$ removed, so there are no shards.
	Clearly then, any morphism on the boundary factors through one of $a$ morphisms $(\varepsilon/i) : \top \to (i : \IX)$ where $\varepsilon \in \accol{0, \ldots, a-1}$, so $\partial \IX \cong \biguplus_{k = 0}^{a - 1} \top$.
\end{example}
\begin{example}[Affine cubes] \label{ex:multip:affine-cubes}
	Let $\ary \cubecat a_\affine$ be the category of affine $a$-ary cubes as used in \cite{model-cubical} (binary) or \cite{moulin-param3} (unary).
	It is the free semicartesian monoidal category with same terminal unit over $\ary \RGcat a$. Concretely:
	\begin{itemize}
		\item Objects are as in $\ary \cubecat a$,
		\item Morphisms are as in $\ary \cubecat a$ such that if $j \psub \vfi = k \psub \vfi \not\in \accol{0, \ldots, a-1}$, then $j = k$. This rules out diagonal maps.
	\end{itemize}
	This category is objectwise pointable if and only if $a \neq 0$.
	On this category, we consider the functor $\loch * (i : \IX) : W \mapsto (W, i : \IX)$, which is a multiplier for $(i : \IX)$.
	Dimensional splitness and the boundary are as in $\ary \cubecat a$.
	This functor is $\top$-slice right adjoint with $\sumbase{(i : \IX)}((W, j : \IX), (j/i)) = W$ and $\sumbase{(i : \IX)}(W, (\eps/i)) = W$ for each of the $a$ endpoints $\eps$.
	
	In the nullary case, $\ary \cubecat 0_\affine$ is the base category of the Schanuel topos, a sheaf topos equivalent to the category of nominal sets \cite{nominal-sets}.
	In that case, $\sumbase{(i : \IX)}$ is not just left adjoint to $\freshbase{(i : \IX)}$, but in fact an inverse and hence also right adjoint.
	This is in line with the fact that in nominal type theory \cite{freshmltt}, there is a single name quantifier which can be read as either existential or universal quantification.
\end{example}
\begin{figure}
	\footnotesize
	\begin{center}
	\begin{tabular}{l | c c || c | c c c c c c c c}
		\rotatebox{90}{Example}
		& \rotatebox{90}{Base category}
		& \rotatebox{90}{Multiplier}
		& \rotatebox{90}{\hspace{-1ex}\begin{tabular}{l}
			Objectwise poin- \\
			table category
		\end{tabular}}
		& \rotatebox{90}{\hspace{-1ex}\begin{tabular}{l}
			Copointed/ \\
			Weakening
		\end{tabular}}
		& \rotatebox{90}{Exchange}
		& \rotatebox{90}{\hspace{-1ex}\begin{tabular}{l}
			Comonad/ \\
			Contraction
		\end{tabular}}
		& \rotatebox{90}{Cartesian}
		& \rotatebox{90}{$\top$-s.\ faithful}
		& \rotatebox{90}{$\top$-s.\ full}
		& \rotatebox{90}{$\top$-s.\ shard-free}
		& \rotatebox{90}{$\top$-s.\ right adjoint}
		\\ \hline \hline
		\ref{ex:multip:id}
		& $\catW$ & $\Id$ %
		& ? %
		& \yes %
		& \yes %
		& \yes %
		& \yes %
		& \yes %
		& \yes %
		& \yes %
		& \yes %
		\\
		\ref{ex:multip:cartesian}
		& $\catW$ & $(\loch \times U) \not\cong \Id$ %
		& ? %
		& \yes %
		& \yes %
		& \yes %
		& \framebox{\yes} %
		& ? %
		& \nope %
		& ? %
		& \yes %
		\\
		\ref{ex:multip:cartesian-cubes}
		& $\ary \cubecat a$ & $\loch \times (i : \IX)$ %
		& $a \neq 0$ %
		& \gray{\yes} %
		& \gray{\yes} %
		& \gray{\yes} %
		& \yes %
		& \yes %
		& \gray{\nope} %
		& \yes %
		& \gray{\yes} %
		\\
		\ref{ex:multip:affine-cubes}
		& $\ary \cubecat a_\affine$ & $\loch * (i : \IX)$ %
		& $a \neq 0$ %
		& \yes %
		& \yes %
		& \nope %
		& \nope %
		& \yes %
		& \yes %
		& \yes %
		& \yes %
		\\
		\ref{ex:multip:cchm}
		& $\cchmcat$ & $\loch \times (i : \IX)$ %
		& \yes %
		& \gray{\yes} %
		& \gray{\yes} %
		& \gray{\yes} %
		& \yes %
		& \yes %
		& \gray{\nope} %
		& \nope %
		& \gray{\yes} %
		\\
		\ref{ex:multip:dcubes}
		& $\Dcubecat d$ & $\loch \times (i : \Dedge k)$ %
		& \yes %
		& \gray{\yes} %
		& \gray{\yes} %
		& \gray{\yes} %
		& \yes %
		& \yes %
		& \gray{\nope} %
		& \yes %
		& \gray{\yes} %
		\\
		\ref{ex:multip:clocks}
		& $\clockcat$ & $\loch \times (i : \clocksym_k)$ %
		& \nope %
		& \gray{\yes} %
		& \gray{\yes} %
		& \gray{\yes} %
		& \yes %
		& \yes %
		& \gray{\nope} %
		& \yes %
		& \gray{\yes} %
		\\
		\ref{ex:multip:twisted-cubes}
		& $\twistedcubecat$ & $\loch \multip \IX$ %
		& \yes %
		& \nope %
		& \nope %
		& \nope %
		& \nope %
		& \yes %
		& \yes %
		& \yes %
		& \yes %
		\\
		\ref{ex:multip:trees}
		& $n$ & $\min(\loch, i)$ %
		& \nope %
		& \gray{\yes} %
		& \gray{\yes} %
		& \gray{\yes} %
		& \yes %
		& \yes %
		& \gray{\nope} %
		& \yes %
		& \gray{\yes} %
		\\
		\ref{ex:multip:initial}
		& $\ary \cubecat 2_\bot$ & $\loch \times \bot$ %
		& \yes %
		& \gray{\yes} %
		& \gray{\yes} %
		& \gray{\yes} %
		& \yes %
		& \nope %
		& \gray{\nope} %
		& \yes %
		& \gray{\yes} %
	\end{tabular}
	\end{center}
	\caption[Transpension: Examples of multipliers]{Some interesting multipliers and their properties. Properties that follow from being cartesian are greyed out.}
	\label{fig:multips}
\end{figure}
\begin{example}[CCHM cubes] \label{ex:multip:cchm}
	Let $\cchmcat$ be the category of CCHM cubes \cite{cubical}, which is objectwise pointable. Its objects are as in $\ary \cubecat 2$ and its morphisms $(i_1 : \IX, \ldots, i_n : \IX) \to (j_1 : \IX, \ldots, j_m : \IX)$ are functions from $\accol{j_1, \ldots, j_m}$ to the free de Morgan algebra over $\accol{i_1, \ldots, i_n}$. We again consider $\loch \times (i : \IX)$, another instance of \cref{ex:multip:cartesian}.
	A slice object $(V, \vfi)$ is now (dimensionally) split if $i \psub \vfi$ is not an endpoint, so the boundary is again $\top \uplus \top$.
	The so-called \emph{connections} $(j \vee k/i), (j \wedge k/i) : (j : \IX, k : \IX) \to (i : \IX)$ are shards, because they have sections $(i/j, 0/k) : (i : \IX) \to (j : \IX, k : \IX)$ and $(i/j, 1/k) : (i : \IX) \to (j : \IX, k : \IX)$ respectively but are not in the image of $\freshbase{(i : \IX)}$.
\end{example}
\begin{example}[Depth $d$ cubes] \label{ex:multip:dcubes}
	Let $\Dcubecat d$ with $d \geq -1$ be the category of depth $d$ cubes, used as a base category in degrees of relatedness \cite{reldtt,reldtt-techreport}.
	This is a generalization of the category of binary cartesian cubes $\cubecat$, where instead of typing every dimension with \emph{the} interval $\IX$, we type them with the $k$-interval $\Dedge k$, where $k \in \accol{0, \ldots, d}$ is called the degree of relatedness of the edge.
	Its objects take the form $(i_1 : \Dedge{k_1}, \ldots, i_n : \Dedge{k_n})$.
	Conceptually, we have a map $\Dedge k \to \Dedge \ell$ if $k \geq \ell$.
	Thus, morphisms $\vfi : (i_1 : \Dedge{k_1}, \ldots, i_n : \Dedge{k_n}) \to (j_1 : \Dedge{\ell_1}, \ldots, j_m : \Dedge{\ell_m})$ send every variable $j : \Dedge \ell$ of the codomain to a value $j \psub \vfi$, which is either 0, 1 or a variable $i : \Dedge k$ of the domain such that $k \geq \ell$.
	\begin{itemize}
		\item If $d = -1$, then there is only one object $()$ and only the identity morphism, i.e. we have the point category.
		\item If $d = 0$, we just get $\cubecat$.
		\item If $d = 1$, we obtain the category of bridge/path cubes $\BPcubecat := \Dcubecat 1$. We write $\IP$ for $\Dedge 0$ (the path interval) and $\IB$ for $\Dedge 1$ (the bridge interval). Bridge/path cubical sets are used as a model for parametric quantifiers \cite{paramdtt,reldtt-techreport}.
	\end{itemize}
	On this category, we consider $\loch \times (i : \Dedge k)$, which is another instance of \cref{ex:multip:cartesian}.
	A slice object $(V, \vfi)$ is dimensionally split w.r.t.\ this multiplier if $i \psub \vfi$ is a variable of type $\Dedge k$ (and not $k' > k$), in which case a preimage is obtained as in $\cubecat$ (so there are no shards).
	Hence, a slice object is on the boundary if it factors over $(0/i), (1/i) : () \to (i : \Dedge k)$ or over $(i/i) : (i : \Dedge{k'}) \to (i : \Dedge k)$ for $k' > k$.
	These morphisms correspond to the cells of $\yoneda (i : \Dedge{k+1})$ for $k < d$, so $\partial(i : \Dedge k) \cong \yoneda (i : \Dedge{k+1})$ for $k < d$ and $\partial(i : \Dedge d) \cong \top \uplus \top$.
\end{example}
\begin{example}[Clocks] \label{ex:multip:clocks}
	Let $\clockcat$ be the category of clocks, used as a base category in guarded type theory \cite{clock-cat}.
	It is the free cartesian category over $\omega$.
	Concretely:
	\begin{itemize}
		\item Its objects take the form $(i_1 : \clocksym_{k_1}, \ldots, i_n : \clocksym_{k_n})$ where all $k_j \geq 0$.
		We can think of a variable of type $\clocksym_k$ as representing a clock (i.e. a time dimension) paired up with a certificate that we do not care what happens after the time on this clock exceeds $k$.
		\item Correspondingly, we should have a map $\clocksym_k \to \clocksym_\ell$ if $k \leq \ell$, because if the time exceeds $\ell$, then it certainly exceeds $k$ so our certificate can be legitimately adjusted.
		Then morphisms $\vfi : (i_1 : \clocksym_{k_1}, \ldots, i_n : \clocksym_{k_n}) \to (j_1 : \clocksym_{\ell_1}, \ldots, j_m : \clocksym_{\ell_m})$ are functions that send (de Bruijn) variables $j : \clocksym_\ell$ of the codomain to a variable $j \psub \vfi : \clocksym_k$ of the domain such that $k \leq \ell$.
	\end{itemize}
	This category is not objectwise pointable; indeed, the only pointable object is $()$.
	On this category we consider $\loch \times (i : \clocksym_k)$, which is another instance of \cref{ex:multip:cartesian}.
	A slice object $(V, \vfi)$ is dimensionally split w.r.t.\ this multiplier if $i \psub \vfi$ is a variable of type $\clocksym_k$ (and not $k' < k$), in which case a preimage is obtained as in $\cubecat$ (so there are no shards). Thus, $\partial(i : \clocksym_k) \cong \yoneda (i : \clocksym_{k-1})$ for $k > 0$ and $\partial(i : \clocksym_0) \cong \bot$.
\end{example}
\begin{example}[Twisted cubes] \label{ex:multip:twisted-cubes}
	Pinyo and Kraus's category of twisted cubes $\twistedcubecat$ \cite{pinyo-twisted}
	can be described as a subcategory of the category of non-empty finite linear orders (or, if you want, of its skeletalization: the category of simplices $\simplexcat$).
	On $\simplexcat$, we can define a functor $\loch \multip \IX$ such that $W \multip \IX = W\op \uplus_< W$, where we consider elements from the left smaller than those from the right.
	Now $\twistedcubecat$ is the subcategory of $\simplexcat$ whose objects are generated by $\top$ and $\loch \multip \IX$ (note that every object then also has an opposite since $\top\op = \top$ and $(V \multip \IX)\op \cong V \multip \IX$), and whose morphisms are given by
	\begin{itemize}
		\item $(\vfi, 0) : \Hom_\twistedcubecat(V, W \multip \IX)$ for all $\vfi : \Hom_\twistedcubecat(V, W\op)$,
		\item $(\vfi, 1) : \Hom_\twistedcubecat(V, W \multip \IX)$ for all $\vfi : \Hom_\twistedcubecat(V, W)$,
		\item $\vfi \multip \IX : \Hom_\twistedcubecat(V \multip \IX, W \multip \IX)$ for all $\vfi : \Hom_\twistedcubecat(V, W)$,
		\item $() : \Hom_\twistedcubecat(V, \top)$.
	\end{itemize}
	Note that this collection automatically contains all identities, composites, and opposites.
	Isomorphism to Pinyo and Kraus's category of twisted cubes can be seen from their ternary representation \cite[def. 34]{pinyo-twisted}.
	We now consider the multiplier $\loch \multip \IX : \twistedcubecat \to \twistedcubecat$, which
	Pinyo and Kraus call the twisted prism functor.
	A slice object $(V, \vfi)$ is dimensionally split if and only if it is of the form $\vfi = \psi \multip I$ (so there are no shards).
	Hence, all slice objects on the boundary factor over $((), 0) : () \to () \multip \IX$ or $((), 1) : () \to () \multip \IX$, so that $\partial \IX \cong \top \uplus \top$.
	The multiplier is $\top$-slice right adjoint with
	\begin{equation}
		\sumbase \IX : \left \{
			\begin{array}{l c l}
				(W, ((), 0)) &\mapsto& W\op \\
				(W, ((), 1)) &\mapsto& W \\
				(W \multip \IX, () \multip \IX) &\mapsto& W,
			\end{array}
		\right.
	\end{equation}
	with the obvious action on morphisms.
\end{example}
\begin{example}[Finite ordinals] \label{ex:multip:trees}
	In the base category $\omega$ of the topos of trees, used in guarded type theory \cite{birkedal:2012}, where $\Hom(i, j) = \set{{*}}{i \leq j}$, a cartesian product is given by $i \times j = \min(i, j)$.
	However, this category lacks a terminal object.
	Instead, on the subcategory $n$ with terminal object $n-1$, which is endowed with the same cartesian product,
	we consider the multiplier $\loch \times i$, which is again an instance of \cref{ex:multip:cartesian}.
	Any slice object $(j, {*})$ (where necessarily $j \leq i$) is dimensionally split with section ${*} : \min(i, j) = j \to j$; hence there are no shards and $\partial i = \bot$.
\end{example}
\begin{example}[Counterexample for $\top$-slice faithful] \label{ex:multip:initial}
	Let $\ary \cubecat 2_\bot$ be the category of binary cartesian cubes extended with an initial object. We consider the cartesian product $\loch \times \bot$ which sends everything to $\bot$. This is not $\top$-slice faithful, as $\freshbase \bot$ sends both $(0/i)$ and $(1/i) : () \to (i : \IX)$ to $[] : (\bot, []) \to (\bot, [])$.
	It is not $\top$-slice full, as there is no $\psi : () \to \bot$ such that $\psi \times \bot = [] : \freshbase \bot () \to \freshbase \bot \bot$.
\end{example}

\subsection{\Msys{} Modalities for weakening} \label{sec:wkn}
Recall that we write $\loch \multip \yoneda U$ for the left Kan extension of a multiplier $\loch \multip U$.
For any \textbf{copointed} multiplier $\loch \multip U : \catW \to \catW$ and any presheaf $\Xi \in \Psh(\catW)$, we get a presheaf morphism $\pi_1 : \Xi \multip \yoneda U \to \Xi$.
In this situation, the notations in \cref{thm:subst} are not very illuminating as they would only mention $\pi_1$ and not $\Xi$ or $U$.
Instead, we use the following notations:
\begin{notation}
	A functor acting on elements:
	\begin{itemize}
		\item $\pairslice U \Xi := \pairslice{}{\pi_1} : \catW/\Xi \multip \yoneda U \to \catW/\Xi$
	\end{itemize}
	Functors acting on presheaves:
	\begin{itemize}
		\item $\pairpsh{\yoneda U}{\Xi} := \pairpsh{}{\pi_1} : \Psh(\catW/\Xi \multip \yoneda U) \to \Psh(\catW/\Xi)$
		\item $\wknpsh{\yoneda U}{\Xi} := \wknpsh{}{\pi_1} : \Psh(\catW/\Xi) \to \Psh(\catW/\Xi \multip \yoneda U)$
		\item $\funcpsh{\yoneda U}{\Xi} := \funcpsh{}{\pi_1} : \Psh(\catW/\Xi \multip \yoneda U) \to \Psh(\catW/\Xi)$
	\end{itemize}
	Natural transformations:
	\begin{align*}
		\copypsh{\yoneda U}{\Xi} := \copypsh{}{\pi_1} &: 1 \to \wknpsh{\yoneda U}{\Xi} \circ \pairpsh{\yoneda U}{\Xi}
		& \droppsh{\yoneda U}{\Xi} := \droppsh{}{\pi_1} &: \pairpsh{\yoneda U}{\Xi} \circ \wknpsh{\yoneda U}{\Xi} \to 1 \\
		\constpsh{\yoneda U}{\Xi} := \constpsh{}{\pi_1} &: 1 \to \funcpsh{\yoneda U}{\Xi} \circ \wknpsh{\yoneda U}{\Xi}
		& \apppsh{\yoneda U}{\Xi} := \apppsh{}{\pi_1} &: \wknpsh{\yoneda U}{\Xi} \circ \funcpsh{\yoneda U}{\Xi} \to 1
	\end{align*}
	For modalities, we use the weakening notations already introduced in \cref{not:subst}: for $\Xi = \interp \XX$, we internalize the above functors as $\ismodadj{\pairlong{u : \IU}}{\wknlong{u : \IU}}{\XX}{(\XX, u : \IU)}$ and $\ismod{\funclong{u : \IU}}{(\XX, u : \IU)}{\XX}$, sometimes abbreviating to $\modadj{\pairshrt{u}}{\wknshrt{u}}$ and $\modshade{\funcshrt{u}}$,
	and the above natural transformations as $\keyadj{\dropsym_u}{\constsym_u}$ and $\keyadj{\copysym_u}{\appsym_u}$.
\end{notation}

\subsection{Acting on elements} \label{sec:act-elements}
A $\top$-slice right adjoint multiplier $\loch \multip U : \catW \to \catW$ as defined in \cref{def:multip} gives rise to a pair of adjoint functors $\sumbase U \dashv \freshbase U$ between $\catW$ and the slice category $\catW/U$,
and hence a pair of adjoint functors $\sumslice U \top \dashv \freshslice U \top$ between the categories of elements $\catW/\top$ and $\catW/(\top \multip \yoneda U)$ of the empty shape context $\interp{\cdot} := \top$ and the single variable shape context $\interp{u : \IU} := \top \multip \yoneda U \cong \yoneda U$ respectively.
As any functor between base categories gives rise to a triple of adjoint functors between presheaf categories, the adjoint pair $\sumbase U \dashv \freshbase U$ gives rise to an adjoint quadruple $\sumpsh{\yoneda U}{\top} \dashv \freshpsh{\yoneda U}{\top} \dashv \lollipsh{\yoneda U}{\top} \dashv \transppsh{\yoneda U}{\top}$ between the categories $\Psh(\catW/\top)$ and $\Psh(\catW/\yoneda U)$ that model the modes $()$ and $(u : \IU)$ respectively.
Thus, we are presently well-equipped to study the transpension type in a setting with \emph{at most one shape variable}.
Deeming this unsatisfactory, in the current section we intend to generalize the above functors, so that everywhere we mentioned $\top$ above we can instead have an arbitrary presheaf $\Xi$.
\begin{definition} \label{def:act-elements}
	Given a multiplier $\loch \multip U : \catW \to \catW$ and a presheaf $\Xi \in \Obj{\Psh(\catW)}$, we define:
	\[
		\freshslice U \Xi : \catW/\Xi \to \catW/(\Xi \multip \yoneda U) : (W, \xi) \mapsto (W \multip U, \xi \multip \yoneda U),
	\]
	where $\xi \multip \yoneda U$ denotes the $(W \multip U)$-shaped cell of $\Xi \multip \yoneda U$ obtained from $\xi$.
	
	We say that $\loch \multip U$ is:
	\begin{itemize}
		\item \textbf{Presheafwise faithful}\seelog{} if for all $\Xi$, the functor $\freshslice{U}{\Xi}$ is faithful,
		\item \textbf{Presheafwise full}\seelog{} if for all $\Xi$, the functor $\freshslice{U}{\Xi}$ is full,
		\item \textbf{Presheafwise shard-free}\seelog{}
		if for all $\Xi$, the functor $\freshslice{U}{\Xi}$ is essentially surjective on elements $(V, \vfi) \in {\catW}/{(\Xi \multip \yoneda U)}$ such that $\vfi$ is directly dimensionally split (\cref{def:dir-dim-split}).
		A \textbf{direct shard}\seelog{} is an element $(V, \xi) \in \catW / \Xi \multip \yoneda U$ that is not up to isomorphism in the image of $\freshslice{U}{\Xi}$ even though $\xi$ is directly dimensionally split.
		\item \textbf{Presheafwise right adjoint}\seelog{} if for all $\Xi$, the functor $\freshslice{U}{\Xi}$ has a left adjoint $\sumslice{U}{\Xi} : \catW / (\Xi \multip \yoneda U) \to \catW / \Xi$.
		We denote the unit as $\copyslice U{\Xi} : \Id \to \freshslice{U}{\Xi} \sumslice{U}{\Xi}$ and the co-unit as $\dropslice U{\Xi} : \sumslice{U}{\Xi} \freshslice{U}{\Xi} \to \Id$.
	\end{itemize}
\end{definition}
\begin{definition} \label{def:dir-dim-split} \label{def:dir-boundary}
	Given a multiplier $\loch \multip U : \catW \to \catW$, we say that a $V$-shaped presheaf cell $\vfi$ of ${\Xi \multip \yoneda U}$ is \textbf{directly dimensionally split}\seelog{} with direct dimensional section $\chi : W \multip U \to V$ if $\vfi \chi$ is of the form $\xi \multip \yoneda U$. The section can alternatively be presented as a morphism of elements $\chi : \freshslice{U}{\Xi} (W, \xi) \to (V, \vfi)$.
	We write $\dimslice{\catW}{(\Xi \multip \yoneda U)}$ for the full subcategory of $\catW / (\Xi \multip \yoneda U)$ of directly dimensionally split cells.
	
	We define the \textbf{(direct) boundary}\seelog{} $\Xi \multip \partial U$ as the subpresheaf of $\Xi \multip \yoneda U$ consisting of those cells that are \emph{not} directly dimensionally split.
\end{definition}
\begin{remark} \label{rem:cosieve-elements}
	Just like $\top$-slice shard-freedom (\cref{rem:cosieve-base}),
	presheafwise shard-freedom can be formulated using (co)sieves.
	A multiplier is presheafwise shard-free if either of the following equivalent criteria is satisfied:
	\begin{itemize}
		\item The objects in the essential image of $\freshslice U \Xi$ constitute a cosieve in $\catW/(\Xi \multip \yoneda U)$.
		\item The objects \emph{outside} the essential image of $\freshslice U \Xi$ constitute a sieve in $\catW/(\Xi \multip \yoneda U)$.
	\end{itemize}
	The objects of the cosieve generated by objects of the essential image of $\freshslice U \Xi$, are called directly dimensionally split.
	The boundary $\Xi \multip \partial U$ is the largest sieve in $\catW/(\Xi \multip \yoneda U)$ (largest subpresheaf of $\Xi \multip \yoneda U$) that is disjoint with the objects of the essential image of $\freshslice U \Xi$.
	
	If $\loch \multip U$ is presheafwise fully faithful, then the above conditions are furthermore equivalent to $\freshslice U \Xi$ being a Street opfibration.
\end{remark}
Since we can instantiate $\Xi$ with the terminal presheaf $\top \cong \yoneda \top$, we see that each of the presheafwise criteria implies the $\top$-slice criterion from \cref{def:multip}.
Below we give \emph{sufficient} conditions for a multiplier to satisfy the presheafwise criteria:
\begin{proposition}
	The multiplier $\loch \multip U : \catW \to \catW$ is:
	\begin{itemize}
		\item presheafwise faithful if it is $\top$-slice faithful,
		\item presheafwise full if it is $\top$-slice fully faithful,
		\item presheafwise shard-free if it is $\top$-slice full and shard-free,
		\item presheafwise right adjoint if it is $\top$-slice right adjoint.
	\end{itemize}
\end{proposition}
\begin{proof}
	See \cite{transpension-techreport}.
\end{proof}
\begin{example} \label{ex:act-elements:affine-cubes}
	Continuing \cref{ex:multip:affine-cubes} about $a$-ary affine cubes, let $\Xi = \yoneda W$.
	Then $\Xi \multip \yoneda(i : \IX) \cong \yoneda(W, i : \IX)$.
	Pick $(V, \vfi)$ in the category of elements, which is essentially the slice category over $(W, i : \IX)$, i.e.\ we view $\vfi$ as a morphism $V \to (W, i : \IX)$.
	Then $\vfi$ is directly dimensionally split if $i \psub \vfi$ is not an endpoint,
	and in that case $(V, \vfi)$ is isomorphic to $\freshslice{(i : \IX)}{\yoneda W}(V', \vfi')$ where $\vfi' : V' \to W$ is obtained by removing $i\psub \vfi$ and $i$ from the domain and codomain respectively.
	Thus, there are no direct shards, and the boundary cells are the ones where $i \psub \vfi$ is an endpoint, i.e.\ $\yoneda W \multip \partial \IX \cong \biguplus_{i = 0}^{a-1} \yoneda W$.
\end{example}
\begin{example} \label{ex:act-elements:cartesian-cubes}
	Continuing \cref{ex:multip:cartesian-cubes} about $a$-ary \emph{cartesian} cubes, let $\Xi = \yoneda W$.
	Then $\Xi \multip \yoneda(i : \IX) \cong \yoneda(W, i : \IX)$.
	Pick $(V, \vfi)$ in the category of elements, again we view $\vfi$ as a morphism $V \to (W, i : \IX)$.
	Then $\vfi$ is directly dimensionally split if $i \psub \vfi$ is not an endpoint, \emph{nor equal to $j \psub \vfi$ for some variable $j$ in $W$}, and in that case $(V, \vfi)$ is isomorphic to $\freshslice{(i : \IX)}{\yoneda W}(V', \vfi')$ where $\vfi' : V' \to W$ is obtained by removing $i\psub \vfi$ and $i$ from the domain and codomain respectively.
	Thus, there are no direct shards, and the boundary cells are the ones where $i \psub \vfi$ is an endpoint or equal to $j \psub \vfi$ for some variable in $W$.
\end{example}

The following (fairly obvious) theorem is paramount to the semantics of transpension elimination (\cref{sec:structure:transp-elim}) and the $\Phi$-rule (\cref{sec:recover:phi}):
\begin{theorem}[Quotient\seelog{} theorem] \label{thm:quotient}
	If a multiplier $\loch \multip U : \catW \to \catW$ is $\top$-slice fully faithful and shard-free (hence presheafwise fully faithful and shard-free), then $\freshslice{U}{\Xi} : \catW/\Xi \to \dimslice{\catW}{(\Xi \multip \yoneda U)}$ is an equivalence of categories. \qed
\end{theorem}

\subsection{\Msys{} Modalities for multipliers} \label{sec:multip}
We are now well-equipped to study the transpension type in a setting with multiple shape variables.
\begin{theorem} \label{thm:multip}
	Any $\top$-slice right adjoint%
	\footnote{Without $\top$-slice right adjointness, we lose the leftmost adjoint functor $\sumpsh{\yoneda U}{\Xi}$ and the leftmost adjoint modality $\modshade{\freshshrt{u}}$.}
	multiplier $\loch \multip U : \catW \to \catW$ and any presheaf $\Xi \in \Psh(\catW)$ give rise to a quadruple of adjoint functors
	\[
		\sumpsh{\yoneda U}{\Xi} \dashv \freshpsh{\yoneda U}{\Xi} \dashv \lollipsh{\yoneda U}{\Xi} \dashv \transppsh{\yoneda U}{\Xi},
	\]
	\[
		\sumpsh{\yoneda U}{\Xi}, \lollipsh{\yoneda U}{\Xi} : \Psh(\catW/\Xi \multip \yoneda U) \to \Psh(\catW/\Xi) \qquad
		\freshpsh{\yoneda U}{\Xi}, \transppsh{\yoneda U}{\Xi} : \Psh(\catW/\Xi) \to \Psh(\catW/\Xi \multip \yoneda U).
	\]
	If $\Xi = \interp \XX$,
	the latter three can be internalized as modalities (with an additional left name) $\modadj{\sumlong{u : \IU}}{\freshlong{u : \IU}} \dashv \modshade{\lollilong{u : \IU}} \dashv \modshade{\transplong{u : \IU}}$ with
	\[
		\interp{\bilock{\sumshrt{u}}{\freshshrt{u}}} = \sumpsh{\yoneda U}{\Xi}, \qquad
		\interp{\modshade{\freshshrt{u}}} =
		\interp{\bilock{\freshshrt{u}}{\lollishrt{u}}} = \freshpsh{\yoneda U}{\Xi}, \qquad
		\interp{\modshade{\lollishrt{u}}} =
		\interp{\bilock{\lollishrt{u}}{\transpshrt{u}}} = \lollipsh{\yoneda U}{\Xi}, \qquad
		\interp{\modshade{\transpshrt{u}}} = \transppsh{\yoneda U}{\Xi}.
	\]
	Overloading some notations from \cref{thm:subst} we denote the units and co-units as
	\begin{align*}
		\copypsh{\yoneda U}{\Xi} &: \id \to \freshpsh{\yoneda U}{\Xi} \circ \sumpsh{\yoneda U}{\Xi}
		& \droppsh{\yoneda U}{\Xi} &: \sumpsh{\yoneda U}{\Xi} \circ \freshpsh{\yoneda U}{\Xi} \to \id \\
		\constpsh{\yoneda U}{\Xi} &: \id \to \lollipsh{\yoneda U}{\Xi} \circ \freshpsh{\yoneda U}{\Xi}
		& \apppsh{\yoneda U}{\Xi} &: \freshpsh{\yoneda U}{\Xi} \circ \lollipsh{\yoneda U}{\Xi} \to \id \\
		\reindexpsh{\yoneda U}{\Xi} &: \id \to \transppsh{\yoneda U}{\Xi} \circ \lollipsh{\yoneda U}{\Xi}
		& \unmeridpsh{\yoneda U}{\Xi} &: \lollipsh{\yoneda U}{\Xi} \circ \transppsh{\yoneda U}{\Xi} \to \id \\
		\keyadj{\dropsym_u}{\constsym_u} &: \modshade{\id} \Rightarrow \modshade{\lollishrt{u} \circ \freshshrt{u}}
		& \keyadj{\copysym_u}{\appsym_u} &: \modshade{\freshshrt{u} \circ \lollishrt{u}} \Rightarrow \modshade{\id} \\
		\keyadj{\appsym_u}{\reindexsym_u} &: \modshade{\id} \Rightarrow \modshade{\transpshrt{u} \circ \lollishrt{u}}
		& \keyadj{\constsym_u}{\unmeridsym_u} &: \modshade{\lollishrt{u} \circ \transpshrt{u}} \Rightarrow \modshade{\id}
	\end{align*}
	where $\reindexsym$ stands for \emph{reindex} and $\unmeridsym$ is the negation of $\meridsym$ which stands for \emph{meridian}.
\end{theorem}
\begin{proof}
	Via left Kan extension, precomposition and right Kan extension \cite{stacks-adjoints}, the pair of adjoint functors $\sumslice{U}{\Xi} \dashv \freshslice{U}{\Xi}$ gives rise to a quadruple of adjoint functors $\sumpsh{\yoneda U}{\Xi} \dashv \freshpsh{\yoneda U}{\Xi} \dashv \lollipsh{\yoneda U}{\Xi} \dashv \transppsh{\yoneda U}{\Xi}$ between the presheaf categories. (For the middle two, we can choose whether we derive them from $\sumslice{U}{\Xi}$ or from $\freshslice{U}{\Xi}$; the resulting functors are naturally isomorphic. We will specify our choice when relevant.)
\end{proof}
\begin{notation}
	Again, due to the (purely sugarous) usage of shape variables, we may end up with variable renamings that are sugar for the identity, e.g.
	\begin{align*}
		&\iskey{\appsym_{(v/u : \IU)}}{\freshlong{v : \IU} \circ \lollilong{u : \IU}}{\renamelong{v : \IU, u := v}} \\
		&\iskey{\reindexsym_{(v/u : \IU)}}{\renamelong{v : \IU, u := v}}{\transplong{v : \IU} \circ \lollilong{u : \IU}}
	\end{align*}
	are exactly the same 2-cells as
	\begin{align*}
		&\iskey{\appsym_{(u : \IU)}}{\freshlong{u : \IU} \circ \lollilong{u : \IU}}{\id} \\
		&\iskey{\reindexsym_{(u : \IU)}}{\id}{\transplong{u : \IU} \circ \lollilong{u : \IU}}.
	\end{align*}
\end{notation}
Whereas \cref{thm:subst} clearly states the meaning of the functors introduced there, little can be said about the meaning of the functors introduced in \cref{thm:multip} without knowing more about the multiplier involved.
The following theorem clarifies the leftmost three functors:
\begin{theorem}[Quantification] \label{thm:quantification}
	If $\loch \multip U$ is
	\begin{enumerate}
		\item $\top$-slice fully faithful, then $\droppsh{\yoneda U}{\Xi}$, $\constpsh{\yoneda U}{\Xi}$ and $\unmeridpsh{\yoneda U}{\Xi}$ are natural isomorphisms.
		\item copointed, then we have
		\begin{enumerate}
			\item $\hidepsh{\yoneda U}{\Xi} : \pairpsh{\yoneda U}{\Xi} \to \sumpsh{\yoneda U}{\Xi}$ (if $\top$-slice right adjoint),
			\item $\spoilpsh{\yoneda U}{\Xi} : \freshpsh{\yoneda U}{\Xi} \to \wknpsh{\yoneda U}{\Xi}$, which can be internalized (if $\top$-slice right adjoint) as $\iskeyadj{\hidesym_u}{\spoilsym_u}{\sumshrt{u}}{\freshshrt{u}}{\pairshrt{u}}{\wknshrt{u}}$,
			\item $\cospoilpsh{\yoneda U}{\Xi} : \funcpsh{\yoneda U}{\Xi} \to \lollipsh{\yoneda U}{\Xi}$, which can be internalized as $\iskeyadj{\spoilsym_u}{\cospoilsym_u}{\wknshrt{u}}{\funcshrt{u}}{\freshshrt{u}}{\lollishrt{u}}$.
		\end{enumerate}
		\item cartesian, then we have:
		\[
			\sumpsh{\yoneda U}{\Xi} = \pairpsh{\yoneda U}{\Xi}, \qquad
			\freshpsh{\yoneda U}{\Xi} = \wknpsh{\yoneda U}{\Xi}, \qquad
			\lollipsh{\yoneda U}{\Xi} = \funcpsh{\yoneda U}{\Xi}.
		\]
		The equalities assume that $\sumbase U : \catW/U \to \catW$ is defined on the nose by $\sumbase U(W, \psi) = W$ and that $\freshpsh{\yoneda U}{\Xi}$ and $\lollipsh{\yoneda U}{\Xi}$ are constructed from $\sumslice{U}{\Xi}$ by precomposition and right Kan extension, respectively.
		Failing this, we only get natural isomorphisms.
	\end{enumerate}
\end{theorem}

\noindent Let us try to interpret this on a more intuitive level:
\begin{enumerate}
	\item For $\top$-slice fully faithful $\IU$, invertibility of $\keyshade{\constsym_u}$ means that any `function' $f$ of type $\Modify{\tf a}{\lollishrt{u}}{\Modify{\tf f}{\freshshrt{u}}{T}}$ is necessarily constant, i.e.\ elements of $\Modify{\tf f}{\freshshrt{u}}{T}$ cannot depend on the shape variable $u$ and are instead \emph{fresh} for $u$.
	With that in mind, invertibility of $\droppsh{\yoneda U}{\Xi}$ indicates that the $\Sigma$-like operation $\sumpsh{\yoneda U}{\Xi}$ hides its first component $u : \IU$.
	Indeed, knowing that the second component of $\sumpsh{\yoneda U}{\Xi} \freshpsh{\yoneda U}{\Xi} \Gamma$ is fresh for $u$, one might expect that $\sumpsh{\yoneda U}{\Xi} \freshpsh{\yoneda U}{\Xi} \Gamma$ behaves somewhat like a non-dependent product (which is exactly what happens in the cartesian setting).
	Instead, $\sumpsh{\yoneda U}{\Xi}$ cancels out $\freshpsh{\yoneda U}{\Xi}$, so apparently if the second component does not depend on $u$, then $u$ is lost altogether.
	Finally function application, which is basically the projection operation from \cref{thm:projmod},
	\begin{equation}
		\appsym_u : (\tymod{\tf f}{\freshshrt{u}}{\Modify{\tf a}{\lollishrt{u}}{T}}) \to T[\ttrans{\appsym_u}{\tf f \twhisk \tf a}{\ttriv}].
	\end{equation}
	requires that the applied function of type $\Modify{\tf a}{\lollishrt{u}}{T}$ be fresh for $u$, as one would expect in linear and affine systems and as we saw in \ruleref{ff:forall:elim} in \cref{fig:ff}.
	
	\item If $\IU$ is copointed (which is perfectly combinable with being $\top$-slice fully faithful), i.e.\ if weakening is allowed for $u : \IU$, then we get three additional operations.
	The 2-cell $\keyshade{\spoilsym_u}$ allows us to forget that something is fresh for $u$.
	This would not be possible without weakening because then the modality $\modshade{\wknshrt{u}}$ expressing potential non-freshness would not even be available.
	The 2-cell $\keyshade{\cospoilsym_u}$ allows us to unnecessarily restrict a function's usage to variables w.r.t. which the function is fresh.
	The semantic substitution $\hidepsh{\yoneda U}{\Xi} : \pairpsh{\yoneda U}{\Xi} \interp \Gamma \to \sumpsh{\yoneda U}{\Xi} \Gamma$ at mode $\Xi$ which is internally available as $\bikey{\hidesym_u}{\spoilsym_u} : (\Gamma, \bilock{\pairshrt{u}}{\wknshrt{u}}) \to (\Gamma, \bilock{\sumshrt{u}}{\freshshrt{u}})$ allows us to hide the first component $u : \IU$. The effect of applying this substitution is effectively a weakening over $u : \IU$, in the sense that the context $\sumpsh{\yoneda U}{\Xi} \Gamma$ can be regarded as not containing the variable $u$ except in the form of a hidden mention necessary to make the dependencies of $\Gamma$ work out.
	Note that for shapes that are not $\top$-slice fully faithful, the words `hidden' and `fresh' need to be taken with a grain of salt. Indeed, for cartesian shapes we should ignore them altogether:
	
	\item If $\IU$ is cartesian, then the leftmost three functors become identical to the ones in \cref{sec:wkn}. In particular, fresh weakening and weakening coincide and the word `fresh' becomes meaningless.
\end{enumerate}
In order to have some general terminology, we will speak of the \textbf{hiding} existential $\lmodshade{\sumshrt{u}}$, \textbf{fresh} weakening $\modshade{\freshshrt{u}}$ and \textbf{substructural} (e.g.\ linear/affine) functions $\modshade{\lollishrt{u}}$ even when the specifics of the multiplier are unclear and it is potentially cartesian.

\subsection{Cartesian multipliers} \label{sec:car-multip}
The cartesian case of the quantification \cref{thm:quantification} may look like all our efforts with multipliers are useless, but let's not forget that there is now a further right adjoint to these well-known functors:
\[
	\pairpsh{\yoneda U}{\Xi} \dashv \wknpsh{\yoneda U}{\Xi} \dashv \funcpsh{\yoneda U}{\Xi} \dashv \transppsh{\yoneda U}{\Xi}.
\]
What we have proven for cartesian shapes is that, for any representable presheaf $\yoneda U \in \Psh(\catW)$ such that cartesian products with $U$ exist in the base category $\catW$ (yielding a cartesian multiplier $\loch \times U : \catW \to \catW$), there is a right adjoint to the $\Pi$-type!
Licata et al.\ \cite{internal-universes} have used a right adjoint to the non-dependent function type functor $(\yoneda U \to \loch) = \funcpsh{\yoneda U}{\Xi} \circ \wknpsh{\yoneda U}{\Xi}$, called the \emph{amazing right adjoint} and necessarily given by $(\yoneda U \amaze \loch) = \funcpsh{\yoneda U}{\Xi} \circ \transppsh{\yoneda U}{\Xi}$, but the current result appears stronger.

Remarkably, it is not, and it is not novel either.
As conjectured by Lawvere, proven by Freyd and published by Yetter \cite{yetter}, for an arbitrary object $\IU$ in an arbitrary topos, the transpension functor (there unnamed and denoted $\nabla$) over $\IU$ exists if the amazing right adjoint exists. Indeed, in that case it can be constructed by the following pullback:
\begin{equation*}
	\begin{array}{c}
		\xymatrix{
			\pairlong{u : \IU}.\transpshrt{u} \codot T
				\ar[r]
				\ar[d]_{\fst}^(0.25){\lrcorner}
			& \IU \amaze (\Sigma(P : \Prop).(P \to T))
				\ar[d]^{\IU \amaze \fst}
			\\
			\IU \ar[r]_{(\lambda f . \idtp{}{f}{\idfun_\IU})^\top}
			& \IU \amaze \Prop
		}
	\end{array}
\end{equation*}
where $g^\top$ denotes the transpose of $g$ under $(\IU \to \loch) \dashv (\IU \amaze \loch)$.

\subsection{Further reading} \label{sec:techreport}
We refer to the technical report \cite{transpension-techreport} for more information on
\begin{itemize}
	\item composite multipliers $\loch \multip (U \multip U') := (\loch \multip U) \multip U'$,
	\item morphisms of multipliers $\loch \multip \upsilon : \loch \multip U \to \loch \multip U'$ (together with the previous point one could formalize the exchange rule),
	\item acting on slice objects as opposed to acting on elements (\cref{sec:act-elements}),
	\item properties of $\loch \multip \yoneda U : \Psh(\catW) \to \Psh(\catW)$, the left Kan extension along $\loch \multip U$, again viewed as a multiplier for $\yoneda U$,
	\item non-endo multipliers $\loch \multip U : \catW \to \cat V$,
	\item rules for commuting (co)quantifiers for multipliers, (co)quantifiers for substitution, and (when adding the transpension type to an already modal type system) prior modalities.
\end{itemize}
  
\section{The Fully Faithful Transpension System (\FFsys{}) Revisited} \label{sec:comparison}
In \cref{sec:comparison:embedding}, we give a pseudo-embedding of \FFsys{} (\cref{sec:ff}) into \Msys{} instantiated on a $\top$-slice fully faithful shape $\IU$.
In \cref{sec:comparison:transposition,sec:comparison:hdpm}, we revisit the results about internal transposition and higher-dimensional pattern-matching from \cref{sec:ff:transposition,sec:ff:hdpm}, as these also work for other shapes.
Poles (\cref{sec:ff:poles}) will be revisited in \cref{sec:structure:poles}.

\subsection{Pseudo-Embedding of \FFsys{} in \Msys{}} \label{sec:comparison:embedding}
We give a pseudo-embedding of \FFsys{} into \Msys{} instantiated on a $\top$-slice fully faithful shape $\IU$.
Pseudo, in the sense that \ruleref{ff:ctx-forall:nil} will be only an isomorphism and some commutation properties w.r.t.\ shape substitution will only hold up to isomorphism,
implying that a few other rules will need some adjustments before their translation is well-typed.
We do not pay too much attention to those matters: the purpose of \cref{sec:ff} was didactical and the purpose of the embedding is to show that it was also morally correct.

\subsubsection{Metatype of the embedding}
The judgement forms are translated as follows:
\begin{itemize}
	\item A context $\Gamma \ctx$ is translated as a pair consisting of a shape context $\accol \Gamma \shpctx$ listing the shape variables in $\Gamma$ and an internal context $\accol \Gamma \sep \angles \Gamma \ctx$.
	\item A substitution $\sigma : \Delta \to \Gamma$ is translated as a pair consisting of a shape substitution $\accol \sigma : \interp{\accol \Gamma} \to \interp{\accol \Delta}$ and an internal substitution $\accol \Gamma \sep \angles \sigma : \angles \Gamma \to (\angles \Delta, \bilock{\wknshrt{\accol{\sigma}}}{\funcshrt{\accol{\sigma}}})$.%
	\footnote{Note that the latter is equivalent to an internal substitution $\accol{\Delta} \sep \angles{\sigma}' : (\angles \Gamma, \bilock{\pairshrt{\accol{\sigma}}}{\wknshrt{\accol{\sigma}}}) \to \angles \Delta$, but the advantage of $\bilock{\wknshrt{\accol{\sigma}}}{\funcshrt{\accol{\sigma}}}$ is that it is strictly functorial.}
	We point out that if $\accol \sigma \neq \id$, then translating $t[\sigma] : T[\sigma]$ is far from trivial. Since \ruleref{ff:ctx-shp:wkn} is the only source of such substitutions, we will generally assume that $\accol{\sigma} = \id$ and not worry about the general case.
	Then, we have $\accol{\Gamma} = \accol{\Delta} \sep \angles \sigma : \angles \Gamma \to \angles \Delta$.
	\item A type $\Gamma \sez T \type$ is translated to $\accol \Gamma \sep \angles \Gamma \sez \angles T \type$.
	\item A term $\Gamma \sez t : T$ is translated to $\accol \Gamma \sep \angles \Gamma \sez \angles t : \angles T$.
\end{itemize}

\subsubsection{Structural rules of MLTT}
The shape interpretation $\accol \loch$ ignores non-shape variables, and the internal interpretation $\angles \loch$ respects the structural rules of MLTT:
\begin{align*}
	\accol{\cdot} &= {\cdot}
	& \angles{\cdot} &= {\cdot} \\
	\accol{\Gamma, x : A} &= \accol{\Gamma}
	& \angles{\Gamma, x : A} &= \angles{\Gamma}, x : \angles{A} \\
	\accol{\id, t/x} &= \id
	& \angles{t/x} &= \angles{t}/x \\
	\accol{x/\novar} &= \id
	& \angles{x/\novar} &= x/\novar \\
	&& \angles{x} &= x
\end{align*}

\subsubsection{Linear/affine shape variables} \label{sec:comparison:embedding:shp}
The shape interpretation $\accol \loch$ retains shape variables.
In the fully faithful system, if a variable occurred to the left of $u : \IU$, this meant that it was fresh for $u$.
In \Msys{}, the shape variable $u : \IU$ is of course added to the shape context, so we cannot use its position to signal which variables in $\Gamma$ are and are not fresh for $u$.
Instead, we keep track of this using the fresh weakening operation on contexts $\bilock{\freshshrt{u}}{\lollishrt{u}}$.
\begin{align*}
	\accol{\Gamma, u : \IU} &= \accol{\Gamma}, u : \IU
	& \angles{\Gamma, u : \IU} &= \angles{\Gamma}, \bilock{\freshshrt{u}}{\lollishrt{u}}
	& (\ruleref{ff:ctx-shp}) \\
	\accol{\sigma, u/u} &= \accol{\sigma}, u/u
	& \angles{\sigma, u/u} &= \angles{\sigma}, \bilock{\freshshrt{u}}{\lollishrt{u}}
	& (\ruleref{ff:ctx-shp:fmap}) \\
	\accol{\sigma, u/\novar} &= \accol{\sigma}, u/\novar
	& \angles{\sigma, u/\novar} &= \angles{\sigma}, \bikeytyped{\spoilsym_u}{\cospoilsym_u}{\wknshrt{u}}{\funcshrt{u}}{\freshshrt{u}}{\lollishrt{u}}
	& (\ruleref{ff:ctx-shp:wkn})
\end{align*}

\subsubsection{Linear/affine function type}
The $\forall$-type translates to a modal type:
\begin{align*}
	\inferencedeadl{\ruleref{ff:forall}}{
		\accol \Gamma, u : \IU \sep \angles \Gamma, \bilock{\freshshrt{u}}{\lollishrt{u}} \sez \angles A \type
	}{
		\accol \Gamma \sep \angles \Gamma \sez \Modifynovar{\lollishrt{u}}{\angles A} \type
	}{}
	\qquad
	\inferencedeadl{\ruleref{ff:forall:intro}}{
		\accol \Gamma, u : \IU \sep \angles \Gamma, \bilock{\freshshrt{u}}{\lollishrt{u}} \sez \angles a : \angles A
	}{
		\accol \Gamma \sep \angles \Gamma \sez \modifynovar{\lollishrt{u}}{\angles a} : \Modifynovar{\lollishrt{u}}{\angles A}
	}{}
\end{align*}
Application is translated using \cref{thm:projmod}.
Let $\Theta = \Gamma, u : \IU, \delta : \Delta$ with no shape variables in $\Delta$.
Then $\accol \Theta = \accol \Gamma, u : \IU$ and $\angles{\Theta} = \angles{\Gamma}, \bilock{\freshshrt{u}}{\lollishrt{u}}, \angles \Delta$.
\begin{align*}
	\inferencedeadl{\ruleref{ff:forall:elim}}{
	\inference{
	\inference{
		\accol{\Gamma} \sep \angles{\Gamma} \sez f : \Modifynovar{\lollishrt{u}}{\angles{A}}
	}{
		\accol{\Gamma} \sep \angles{\Gamma}, \bilock{\sumshrt{u} \circ \freshshrt{u}}{\lollishrt{u} \circ \freshshrt{u}} \sez f[\bikey{\dropsym_u}{\constsym_u}] : \Modifynovar{\lollishrt{u}}{\angles{A}[\bikey{\dropsym_u}{\constsym_u}, \bilock{\freshshrt{u}}{\lollishrt{u}}]}
	}{}
	}{
		\accol{\Gamma} \sep \angles{\Gamma}, \bilock{\freshshrt{u}}{\lollishrt{u}}, \angles{\Delta}, \bilock{\sumshrt{u}}{\freshshrt{u}} \sez f[\bikey{\dropsym_u}{\constsym_u}] : \Modifynovar{\lollishrt{u}}{\angles{A}[\bikey{\dropsym_u}{\constsym_u}, \bilock{\freshshrt{u}}{\lollishrt{u}}]}
	}{}
	}{
		\accol{\Gamma}, u : \IU \sep \angles{\Gamma}, \bilock{\freshshrt{u}}{\lollishrt{u}}, \angles{\Delta} \sez \appsym_u \modappnovar{\freshshrt{u}} f[\bikey{\dropsym_u}{\constsym_u}] : \angles{A}[\bikey{\dropsym_u}{\constsym_u}, \bilock{\freshshrt{u}}{\lollishrt{u}}][\bilock{\freshshrt{u}}{\lollishrt{u}}, \bikey{\copysym_u}{\appsym_u}] = \angles{A}
	}{}
\end{align*}
Observe that, lacking existential quantification for contexts in \cref{sec:ff}, the application rule \ruleref{ff:forall:elim} simply discarded the non-fresh part $\Delta$.
In the target language, variables under $\bilock{\sumshrt{u}}{\freshshrt{u}}$ can only be used if they are annotated with a modality $\modshade \mu$ from which there is a 2-cell $\iskey \alpha \mu {\freshshrt{u}}$, i.e.\ if they are fresh for $u$.
However, recall that the word `fresh' needs to be taken with a grain of salt when we are \emph{not} dealing with a $\top$-slice fully faithful shape. For example, if $\IU$ is cartesian, then $\bilock{\sumshrt{u}}{\freshshrt{u}} = \bilock{\pairshrt{u}}{\wknshrt{u}}$, so that the aggregation of shape and type context in the premise and conclusion of the $\appsym_u$-rule
\begin{align*}
	\inference{
		\XX, \sep \Gamma, \bilock{\pairshrt{u}}{\wknshrt{u}} \sez f : \Modifynovar{\funcshrt{u}}{A}
	}{
		\XX, u : \IU \sep \Gamma \sez \appsym_u \modappnovar{\wknshrt{u}} f : A[\bikey{\copysym_u}{\appsym_u}]
	}{}
\end{align*}
are isomorphic: $\interp \XX.(\Sigma (\yoneda U).\interp \Gamma) \cong (\interp \XX \times \yoneda U).\interp{\Gamma}$.

\subsubsection{Telescope quantification}
Let $\Theta = (\Gamma, u : \IU, \delta : \Delta)$ with no shape variables in $\Delta$.
Then $[\lollishrt{u}] \Theta = (\Gamma, \lollishrt{u}.(\delta : \Delta))$.
We translate this as follows:
\begin{align*}
	\accol{[\lollishrt{u}] \Theta} &= \accol{\Gamma, \lollishrt{u}.(\delta : \Delta)} = \accol{\Gamma}
	& \angles{[\lollishrt{u}] \Theta} &= \angles{\Theta}, \bilock{\lollishrt{u}}{\transpshrt{u}}
	& (\ruleref{ff:ctx-forall})
\end{align*}
Let $\rho = (\sigma, u/u, \tau/\delta') : \Theta = (\Gamma, u : \IU, \delta : \Delta) \to \Theta' = (\Gamma', u : \IU, \delta' : \Delta')$.
Then $[\lollilong{u/u}]\rho = (\sigma, \lambdabar u.\tau/\lambdabar u.\delta')$.
We translate this as follows:
\begin{align*}
	\accol{[\lollilong{u/u}]\rho} &= \accol{\sigma, \lambdabar u.\tau/\lambdabar u.\delta'} = \accol{\sigma}
	& \angles{[\lollilong{u/u}]\rho} &= \angles{\rho}, \bilock{\lollishrt{u}}{\transpshrt{u}}
	& (\ruleref{ff:ctx-forall:fmap})
\end{align*}
The rule \ruleref{ff:ctx-forall:nil} concerns $(\Gamma, \lollishrt{u}.()) = [\lollishrt{u}](\Gamma, u : \IU)$ which translates to $\accol \Gamma \sep \angles \Gamma, \bilock{\freshshrt{u}}{\lollishrt{u}}, \bilock{\lollishrt{u}}{\transpshrt{u}} \ctx$, which is isomorphic to $\accol \Gamma \sep \angles \Gamma \ctx$ by the 2-cell $(\id_\Gamma, \bikeytyped{\constsym_u}{\unmeridsym_u}{\lollishrt{u} \circ \freshshrt{u}}{\lollishrt{u} \circ \transpshrt{u}}{\id}{\id})$ because $\IU$ is $\top$-slice fully faithful (quantification \cref{thm:quantification}).
Naturality of $\bikey{\constsym_u}{\unmeridsym_u}$ models \ruleref{ff:ctx-forall:fmap:nil}.

\subsubsection{Telescope application}
Let $\Theta = (\Gamma, u : \IU, \delta : \Delta)$ with no shape variables in $\Delta$.
Then $\appsym_\Theta = (u/u, (\lambdabar u.\delta)\,u/\delta) : ([\lollishrt{u}]\Theta, u : \IU) = (\Gamma, \lollishrt{u}.(\delta : \Delta), u : \IU) \to \Theta$.
We translate this using the 2-cell
\begin{align*}
	&\accol{\Gamma}, u : \IU \sep \angles{\appsym_\Theta} = (\bikeytyped{\appsym_u}{\reindexsym_u}{\id}{\id}{\freshshrt{u} \circ \lollishrt{u}}{\transpshrt{u} \circ \lollishrt{u}}) : (\angles{\Theta}, \bilock{\lollishrt{u}}{\transpshrt{u}}, \bilock{\freshshrt{u}}{\lollishrt{u}}) \to \angles \Theta
	&(\ruleref{ff:ctx-app})
\end{align*}
naturality of which models \ruleref{ff:ctx-app:nat}.

If $\Delta$ is empty, then $\angles{\Theta} = (\angles{\Gamma}, \bilock{\freshshrt{u}}{\lollishrt{u}})$, so that
\begin{align*}
	\accol{\Gamma}, u : \IU &\sep \angles{\appsym_{(\Gamma, u : \IU)}} = (\bilock{\freshshrt{u}}{\lollishrt{u}}, \bikey{\appsym_u}{\reindexsym_u}) = (\bikey{\constsym_u\inv}{\unmeridsym_u\inv} , \bilock{\freshshrt{u}}{\lollishrt{u}}) \\
	&: (\angles{\Gamma}, \bilock{\freshshrt{u}}{\lollishrt{u}}, \bilock{\lollishrt{u}}{\transpshrt{u}}, \bilock{\freshshrt{u}}{\lollishrt{u}}) \to (\angles{\Gamma}, \bilock{\freshshrt{u}}{\lollishrt{u}})
\end{align*}
Now $\bikey{\constsym_u}{\unmeridsym_u}$ models the equation in \ruleref{ff:ctx-forall:nil} which well-typedness of \ruleref{ff:ctx-app:nil} relies on, so it can be regarded as the identity and the above
essentially models \ruleref{ff:ctx-app:nil}.

Similarly, the rule \ruleref{ff:ctx-forall:fmap:ctx-app} concerns $[\lollishrt{u/u}]\appsym_\Theta$, which translates to
\begin{align*}
	\accol{\Gamma} &\sep (\angles{\appsym_\Theta}, \bilock{\lollishrt{u}}{\transpshrt{u}}) = (\bikey{\appsym_u}{\reindexsym_u}, \bilock{\lollishrt{u}}{\transpshrt{u}}) = (\bilock{\lollishrt{u}}{\transpshrt{u}}, \bikey{\constsym_u\inv}{\unmeridsym_u\inv}) \\
	&: (\angles{\Theta}, \bilock{\lollishrt{u}}{\transpshrt{u}}, \bilock{\freshshrt{u}}{\lollishrt{u}}, \bilock{\lollishrt{u}}{\transpshrt{u}}) \to (\angles \Theta, \bilock{\lollishrt{u}}{\transpshrt{u}}),
\end{align*}
essentially modelling \ruleref{ff:ctx-forall:fmap:ctx-app}.

\subsubsection{Transpension type}
Let $\Theta = (\Gamma, u : \IU, \delta : \Delta)$ with no shape variables in $\Delta$.
The transpension type translates to a modal type:
\begin{align*}
	\inferencedeadl{\ruleref{ff:transp}}{
		\accol \Gamma \sep \angles \Theta, \bilock{\lollishrt{u}}{\transpshrt{u}} \sez \angles A \type
	}{
		\accol \Gamma, u : \IU \sep \angles \Theta \sez \Modifynovar{\transpshrt{u}}{\angles A} \type
	}{}
	\qquad
	\inferencedeadl{\ruleref{ff:transp:intro}}{
		\accol \Gamma \sep \angles \Theta, \bilock{\lollishrt{u}}{\transpshrt{u}} \sez \angles a : \angles A
	}{
		\accol \Gamma, u : \IU \sep \angles \Theta \sez \modifynovar{\transpshrt{u}}{a} : \Modifynovar{\transpshrt{u}}{\angles A}
	}{}
\end{align*}
The eliminator is translated using \cref{thm:projmod}.
\begin{align*}
	\inferencedeadl{\ruleref{ff:transp:elim}}{
		\accol{\Gamma}, u : \IU \sep \angles{\Gamma}, \bilock{\freshshrt{u}}{\lollishrt{u}} \sez t : \Modifynovar{\lollishrt{u}}{\angles{A}[\bikey{\constsym_u\inv}{\unmeridsym_u\inv}]}
	}{
		\accol{\Gamma} \sep \angles{\Gamma} \sez \unmeridsym_u \modappnovar{\lollishrt{u}} t : \angles{A}
	}{}
\end{align*}
Again $\bikey{\constsym_u\inv}{\unmeridsym_u\inv}$ just models \ruleref{ff:ctx-forall:nil} and can be ignored.
The $\beta$- and $\eta$-rules are the ones form \cref{thm:projmod},
and the naturality rules amount to
\begin{equation*}
	\paren{\modifynovar{\lollishrt{u}}{\angles{a}}}[\angles \rho]
	= \modifynovar{\lollishrt{u}}{\paren{\angles{a}\brac{\angles \rho, \bilock{\lollishrt{u}}{\transpshrt{u}}}}}, \qquad
	\Modifynovar{\lollishrt{u}}{\angles{A}}[\angles \rho]
	= \Modifynovar{\lollishrt{u}}{\angles{A}\brac{\angles \rho, \bilock{\lollishrt{u}}{\transpshrt{u}}}}.
\end{equation*}
This concludes the embedding for the selected typing rules in \cref{fig:ff}.

\subsection{Internal transposition} \label{sec:comparison:transposition}
In \cref{sec:ff:transposition}, we proved the isomorphism
\begin{equation}
	(\lollilong{u : \IU}.A) \to B
	\quad \cong \quad
	\lollilong{u : \IU}.(A \to \transpshrt{u} \codot B). \label{eq:comparison:can-aff-transposition}
\end{equation}
The \Msys{} equivalent,
\[
	(\tymod{\tf a}{\lollishrt{u}}{A}) \to B'
	\quad \cong \quad
	\Modify{\tf a}{\lollishrt{u}}{A \to \Modify{\tf t}{\transpshrt{u}}{B'[\bikey{\constsym_u\inv}{\unmeridsym_u\inv}]}}
\]
is not in general true (or even statable) but for $\top$-slice fully faithful multipliers it follows from the following instance of \cref{thm:transpose} (which holds in general)
\begin{equation}
	\Modify{\tf t}{\transpshrt{u}}{(\tymod{\tf a}{\lollishrt{u}}{A[\bikey{\appsym_u}{\reindexsym_u}]}) \to B}
	\quad \cong \quad
	\paren{A \to \Modify{\tf t}{\transpshrt{u}}{B}} \label{eq:comparison:transposition}
\end{equation}
by applying $\Modifynovar{\lollishrt{u}}{\loch}$ to both sides, redefining $B' = B[\bikey{\constsym_u}{\unmeridsym_u}]$ and using the quantification \cref{thm:quantification}.

\subsection{Higher-dimensional pattern matching} \label{sec:comparison:hdpm}
Deriving HDPM from internal transposition for general multipliers is a bit more involved than it was for \FFsys{} (\cref{sec:ff:hdpm}) because we have to use \cref{eq:comparison:transposition} instead of \cref{eq:comparison:can-aff-transposition}.
However, we can construct an isomorphism $i : \Modify{\tf a}{\lollishrt{u}}{A \uplus B} \cong \Modify{\tf a}{\lollishrt{u}}{A} \uplus \Modify{\tf a}{\lollishrt{u}}{B}$ directly by translating from \cref{sec:ff:hdpm}.
Again, the map to the left is trivial by pattern matching (which is still the original eliminator for modal types!).
The map to the right is given in either system by:
\begin{align*}
	&i : (\lollishrt{u}.A \uplus B) \to (\lollishrt{u}.A) \uplus (\lollishrt{u}.B) \qquad \qquad \qquad \qquad \framebox{\FFsys} \\
	&i\,\hat c = \unmeridsym\,\paren{  u.\case{\hat c\,u}{
		\inl\,a &\mapsto& \meridshrt{u} \codot (\inl\,(\lambdabar u.a)) \\
		\inr\,b &\mapsto& \meridshrt{u} \codot (\inr\,(\lambdabar u.b))
	}  } \\
	&i : \Modify{\tf a}{\lollishrt{u}}{A \uplus B} \to \Modify{\tf a}{\lollishrt{u}}{A} \uplus \Modify{\tf a}{\lollishrt{u}}{B} \qquad \qquad \qquad \,\, \framebox{\Msys{}} \\
	&i\,\hat c =
		\unmeridsym_u \modappnovar{\lollishrt{u}}{
			\case{(\appsym_u \modappnovar{\freshshrt{u}}{\hat c \varbikey{\dropsym_u}{\constsym_u}})}{
				\inl\,a \mapsto \modify{\tf t}{\transpshrt{u}}{(\inl\,(\modify{\tf a'}{\lollishrt{u}}{(a \varbikey{\appsym_u}{\reindexsym_u})}))} \\
				\inr\,b \mapsto \modify{\tf t}{\transpshrt{u}}{(\inr\,(\modify{\tf a'}{\lollishrt{u}}{(b \varbikey{\appsym_u}{\reindexsym_u})}))}
			}
		}.
\end{align*}

\section{Additional Typing Rules} \label{sec:add}
In this section, we add a few extensions to \Msys{} in order to reason about boundaries, and in order to recover all known presheaf operators in \cref{sec:recover}.

\subsection{Subobject classifier}
We add a universe of propositions (semantically the subobject classifier) $\Prop : \uni 0$, with implicit encoding and decoding operations \`a la Coquand.
This universe is closed under logical operators and weak DRAs \cite[\S 6.5]{nuyts-phd}.
This is necessary to talk about $\Psi$ and $\Phi$.
We identify all proofs of the same proposition.

\subsection{Boundary predicate} \label{sec:add:boundary}
We add the following shape context constructor:
\begin{equation*}
	\inferencel{shp-ctx-boundary}{
		\XX \shpctx \qquad
		\IU \shape
	}{
		\XX, u : \partial \IU \shpctx
	}{}
\end{equation*}
modelling $\interp{\XX, u : \partial \IU} = \interp{\XX} \multip \partial U$ (\cref{def:dir-boundary}).
Write $(u \in \partial \IU)$ for the presheaf morphism that includes $\interp{\XX, u : \partial \IU}$ in $\interp{\XX, u : \IU}$.
We add a predicate of the same name $\XX, u : \IU \sep \cdot \sez u \in \partial \IU : \Prop$ corresponding in the model to this subobject $\interp{\XX, u : \partial \IU} \subseteq \interp{\XX, u : \IU}$.
Note that, since the direct boundary was \emph{not} defined by pullback, the boundary predicate is not preserved by shape substitution $\sigma : \interp{\XX_1} \to \interp{\XX_2}$, i.e.\ $\Modify{\tf o}{\wknlong{\sigma, u := u}}{(u \in \partial \IU)_{\XX_2}}$ is not in general isomorphic to $(u \in \partial \IU)_{\XX_1}$.

If we had modal type formers for \emph{left} adjoints, then we could define the boundary predicate as $(\top, \bilock{\pairlong{u \in \partial \IU}}{\wknlong{u \in \partial \IU}})$.
However, \MTT{} does not support such type formers and we do not know how to do this%
\footnote{It is worth noting that $\pairpsh{}{\sigma}$ is a parametric right adjoint so the work by Gratzer et al.\ \cite{gratzer-pra} could be relevant.}
so we simply axiomatize the predicate by decreeing for every type $\XX, u : \IU \sep \Gamma \sez A \type$ an isomorphism
\begin{align}
	(u \in \partial \IU) \to A &\quad \cong \quad \Modify{\tf p}{\funclong{u \in \partial \IU}}{\Modify{\tf o}{\wknlong{u \in \partial \IU}}{A[\bikey{\droplong{(u \in \partial\IU)}}{\constlong{(u \in \partial\IU)}}]}}. \label{eq:boundary}
\end{align}
In practice, for concrete systems, we will want axioms based on our findings in \cref{sec:examples}, e.g.\ in a binary cubical system we would decree ${\cdot} \sep (i \in \partial \IX) \leftrightarrow (\idtp{\IX}{i}{0}) \vee (\idtp{\IX}{i}{1})$ where the latter two predicates could be axiomatized similarly to \eqref{eq:boundary}.

\subsection{Strictness axiom} \label{sec:strictness}
The strictness axiom \cite{orton-pitts-axioms} allows to extend a partial type $T$ to a total type if $T$ is isomorphic to a total type $A$, effectively strictifying the isomorphism:
\begin{equation*}
	\inference{
		\XX \sep \Gamma \sez \vfi : \Prop \qquad
		\XX \sep \Gamma \sez A : \uni \ell \qquad
		\XX \sep \Gamma, \_ : \vfi \sez T : \uni \ell \qquad
		\XX \sep \Gamma, \_ : \vfi \sez i : A \cong T \\
	}{
		\XX \sep \Gamma \sez \StrictNice{A}{\tysysclause{\vfi}{T}{i}} : \uni \ell \qquad
		\XX \sep \Gamma \sez \strictNice{\sysclause{\vfi}{i}} : A \cong \StrictNice{A}{\tysysclause{\vfi}{T}{i}} \\
		\infwhere \Gamma, \_ : \vfi \sez \StrictNice{A}{\tysysclause{\vfi}{T}{i}} = T : \uni \ell \qquad
		\Gamma, \_ : \vfi \sez \strictNice{\sysclause{\vfi}{i}} = i : A \cong T
	}{}
\end{equation*}

\section{Investigating the Transpension Type}\label{sec:structure}
In \cref{sec:ff:poles}, we have briefly investigated the structure of a fully faithful transpension type in \FFsys{}.
In this section, we investigate the structure of the general transpension type $\Modify{\tf t}{\transpshrt{u}}{A}$ in \Msys{}.

\subsection{Poles} \label{sec:structure:poles}
Our first observation is that on the boundary, the transpension type is trivial.
Let $\ismod{\top}{\XX_1}{\XX_2}$ be the modality, between any two modes, which maps any presheaf to the terminal presheaf.
We clearly have $\modshade{\top \circ \mu} = \modshade{\top}$ for any $\modshade \mu$, but also $\modshade{\mu \circ \top} \cong \modshade{\top}$ because all internal modalities are right adjoints and therefore preserve the terminal object.
\begin{theorem}[Pole] \label{thm:pole}
	We have $\modshade{\wknlong{u \in \partial \IU} \circ \transpshrt{u}} \cong \modshade{\top}$.
	We can thus postulate a term $\XX, u : \IU \sep \Gamma, \_ : u \in \partial \IU \sez \pole : \Modify{\tf t}{\transpshrt{u}}{T}$ for any $\XX \sep \Gamma, \lock{\tf t}{\transpshrt{u}} \sez T \type$, with an $\eta$-rule $\XX, u : \IU \sep \Gamma, \_ : u \in \partial \IU \sez t = \pole : \Modify{\tf t}{\transpshrt{u}}{T}$.
\end{theorem}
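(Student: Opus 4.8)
The plan is to argue on the lock side. Since in our mode theory every 2-cell is a natural transformation and a modality is determined up to isomorphism by its lock functor $\interp{\locknovar{\loch}}$ (the DRA being unique), and since by the observations preceding the theorem $\modshade{\top}$ is the constant operator at the terminal type — so that $\interp{\locknovar{\top}}$ is, up to isomorphism, the constant functor at the \emph{initial} presheaf — it suffices to show that $\interp{\locknovar{\wknsym(u \in \partial \IU) \circ \transpsym(u : \IU)}}$ is constant at the initial presheaf. By strict functoriality of locks this functor equals $\interp{\locknovar{\transpsym(u : \IU)}} \circ \interp{\locknovar{\wknsym(u \in \partial \IU)}}$, which I would compute by unwinding the two factors semantically.

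First, $\wknsym(u \in \partial \IU)$ is $\wknsym \iota$ for the subpresheaf inclusion $\iota : \Xi \multip \partial U \hookrightarrow \Xi \multip \yoneda U$ (recall $(\Xi, u : \IU), \_ : (u \in \partial \IU) \cong (\Xi, u : \partial \IU)$), so $\interp{\locknovar{\wknsym(u \in \partial \IU)}}$ is the left adjoint (left Kan extension) $\pairsym \iota$ of restriction along the fully faithful inclusion $\catW/(\Xi \multip \partial U) \hookrightarrow \catW/(\Xi \multip \yoneda U)$ of categories of elements; in particular the stalk $(\pairsym \iota\, Z)(\phi)$ is \emph{empty} as soon as no boundary cell $\psi$ — a cell of $\Xi \multip \yoneda U$ with $\pi_2 \circ \psi$ not dimensionally split (\cref{def:boundary}) — admits a morphism of elements $\phi \to \psi$. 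On the other side, from the adjoint chain built on $\freshslice U \Xi$ one reads off that $\freshsym(u : \IU) \dashv \lollisym(u : \IU) \dashv \transpsym(u : \IU)$ arise as the Kan extensions $(\freshslice U \Xi)_! \dashv (\freshslice U \Xi)^* \dashv (\freshslice U \Xi)_*$ (the further left adjoint $\sumsym(u : \IU) = (\sumslice U \Xi)_!$ being where quantifiability enters), so that $\interp{\locknovar{\transpsym(u : \IU)}} = \lollisym(u : \IU) = (\freshslice U \Xi)^*$, restriction along $\freshslice U \Xi : \catW/\Xi \to \catW/(\Xi \multip \yoneda U)$; concretely $((\freshslice U \Xi)^* Y)(\chi) = Y((\freshslice U \Xi)(\chi))$, the ``$\lambda u.{\loch}$'' of the na\"ive system.

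Combining, $\interp{\locknovar{\wknsym(u \in \partial \IU) \circ \transpsym(u : \IU)}}(Z)(\chi) \cong (\pairsym \iota\, Z)((\freshslice U \Xi)(\chi))$. The key step is that the indexing category of this colimit is empty: a morphism of elements $(\freshslice U \Xi)(\chi) \to \psi$ over $\Xi \multip \yoneda U$ exhibits $(\freshslice U \Xi)(\chi)$ as a restriction $\psi \cdot g$, and post-composing with the projection $\pi_2 : \Xi \multip \yoneda U \to \yoneda U$ — using that the $\yoneda U$-component of $(\freshslice U \Xi)(\chi)$ is by construction the projection $\pi_2 : (\mathrm{dom}\, \chi) \multip U \to U$ — yields $\pi_2 = (\pi_2 \circ \psi) \circ g$, so that $\pi_2 : (\mathrm{dom}\, \chi) \multip U \to U$ factors over $\pi_2 \circ \psi$; by \cref{def:dim-split} this makes $\pi_2 \circ \psi$ dimensionally split, contradicting that $\psi$ is a cell of $\partial \IU$. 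Hence no such $\psi$ exists, the stalk is empty, so $\interp{\locknovar{\wknsym(u \in \partial \IU) \circ \transpsym(u : \IU)}}$ is constant at the initial presheaf; taking DRAs gives $\modshade{\wknsym(u \in \partial \IU) \circ \transpsym(u : \IU)} \cong \modshade{\top}$.

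For the postulated term, \cref{thm:modify-functorial} together with this isomorphism gives that $\Modify{\tf o}{\wknsym(u \in \partial \IU)}{\Modify{\tf t}{\transpsym(u : \IU)}{T}}$ is the terminal type, over any $\Gamma$ and $T$ in shape context $(\Xi, u : \partial \IU) \cong (\Xi, u : \IU), \_ : (u \in \partial \IU)$. Since our modalities are genuine DRAs, $\Modifynovar{\wknsym(u \in \partial \IU)}{\loch}$ and $\modifysym{\wknsym(u \in \partial \IU)}$ are (\cref{rem:compute-wknsym}) an invertible substitution between the isomorphic shape-and-type contexts, so terms of $\Modify{\tf t}{\transpsym(u : \IU)}{T}$ over $(\Xi, u : \IU) \sep \Gamma, \_ : (u \in \partial \IU)$ are in bijection with terms of the terminal type; there is thus exactly one, which we name $\pole$, and its uniqueness is precisely the stated $\eta$-rule. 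The main obstacle is the key step above: using ``dimensionally split'' rather than ``split epimorphism'' in \cref{def:dim-split} is exactly what makes it go through uniformly, including for spooky base categories where the projections $W \multip U \to U$ need not be split epi; the remaining manipulations with the adjoint chain on $\freshslice U \Xi$ are routine.
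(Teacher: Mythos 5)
Your proof takes the same approach as the paper's: pass to the lock side, decompose the composite lock as $\lollisym(u : \IU) \circ \pairsym(u \in \partial \IU)$, and show it is constantly initial. The paper's sketch reduces this to the assertion that $\lollisym(u : \IU).(u \in \partial \IU)$ is false and defers the argument to the technical report; your key step --- that the $\yoneda U$-component of $(\freshslice U \Xi)(\chi)$ is the projection $\pi_2$, hence dimensionally split by definition, so no morphism of elements into a boundary cell can exist --- is exactly the concrete reason that proposition is false, correctly using the ``dimensionally split'' formulation so that the argument covers spooky base categories as well.
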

\begin{proof}[Sketch of proof]
	The left adjoints $\lmodshade{\lollilong{u : \IU} \circ \pairlong{u \in \partial \IU}}$ and $\lmodshade{\bot}$ of the concerned modalities are isomorphic because $\lollilong{u : \IU}.(u \in \partial \IU)$ is false.
	We give a full proof in the technical report \cite{transpension-techreport}.
\end{proof}
\Cref{def:boundary} of the boundary relied on the notion of dimensional splitness.
The following result shows that it was a good one: the transpension is \emph{only} trivial on the boundary:
\begin{theorem}[Boundary] \label{thm:boundary}
	In the model, we have \cite{transpension-techreport}
	\[
		\XX, u : \IU \sep \Gamma \sez (u \in \partial \IU) \cong \Modify{\tf t}{\transplong{u : \IU}}{\Empty}.
	\]
\end{theorem}

\subsection{Meridians}
As all our modalities are proper DRAs \cite{dra}, the modal introduction rule is invertible in the model.
This immediately shows that sections%
\footnote{By a section of a dependent type, we mean a dependent function with the same domain as the type.}
of the transpension type
\begin{align*}
	&\XX \sep \Gamma \sez f : \Modify{\tf a}{\lollilong{u : \IU}}{\Modify{\tf t}{\transplong{u : \IU}}{T}}
\end{align*}
(which we call meridians) are in 1-1 correspondence with terms
\begin{align*}
	&\XX \sep \Gamma, \bilock{\lollishrt{u} \circ \freshshrt{u}}{\lollishrt{u} \circ \transpshrt{u}} \sez t : T.
\end{align*}
If it were not for the locking of the context, this characterization in terms of poles and meridians would make the transpension type look quite similar to a dependent version of the suspension type in HoTT \cite{hottbook}, whence our choice of name.
If $\IU$ is $\top$-slice (hence presheafwise) fully faithful, then the applied locks are actually isomorphic to the identity lock (\cref{thm:quantification}).
In any case, regardless of the properties of $\IU$, \cref{thm:projmod} tells us that the $\name{let}$-rule for $\modshade{\transplong{u : \IU}}$ has the same power as
\begin{align*}
	\unmeridsym_u &: (
	\tymod{\tf a}{\lollilong{u : \IU}}{\Modify{\tf t}{\transplong{u : \IU}}{T}}
) \to T[\bikey{\constsym_u}{\unmeridsym_u}]
\end{align*}
which extracts meridians.
If $\IU$ is $\top$-slice fully faithful, then the 2-cell $\keyshade{\unmeridsym_u}$ is invertible (\cref{thm:quantification}) and we can also straightforwardly \emph{create} meridians from elements of $T[\bikey{\constsym_u}{\unmeridsym_u}]$.

\subsection{Pattern matching} \label{sec:structure:transp-elim}
The eliminator $\unmeridsym_u$ is only capable of eliminating \emph{sections} of the transpension type.
If the quotient \cref{thm:quotient} applies to $\IU$, we can eliminate locally by pattern matching:%
\begin{figure}
	\begin{equation*}
		\inferencel{transp:elim}{
			\XX, u : \IU \sep \Gamma \ctx \\
			\XX \sep \Gamma, \bilock{\lollishrt{u}}{\transpshrt{u}} \sez A \type \\
			\XX, u : \IU  \sep \Gamma, r : \Modify{\tf t}{\transpshrt{u}}{A} \sez C \type \\
			\XX, u : \IU \sep \Gamma, \_ : u \in \partial \IU \sez c_\pole : C[\pole/r] \\
			\XX, u : \IU \sep \Gamma, \bilock{\lollishrt{u}}{\transpshrt{u}}, x : A, \bilock{\freshshrt{u}}{\lollishrt{u}} \sez c_\merid : C
			\brac{
				\idsub_\Gamma, \bikey{\appsym_u}{\reindexsym_u},
				\modify{\tf t'}{\transpshrt{u}}{(x \varbikey{\constsym_u\inv}{\unmeridsym_u\inv})}/r
			} \\
			\XX, u : \IU \sep \Gamma, \bilock{\lollishrt{u}}{\transpshrt{u}}, x : A, \bilock{\freshshrt{u}}{\lollishrt{u}}, \_ : u \in \partial \IU \sez c_\merid = c_\pole[\idsub_\Gamma, \bikey{\appsym_u}{\reindexsym_u}] : C[\idsub_\Gamma, \bikey{\appsym_u}{\reindexsym_u}, \pole/r] \\
			\XX, u : \IU \sep \Gamma \sez t : \Modify{\tf t}{\transpshrt{u}}{A}
		}{
			\XX, u : \IU \sep \Gamma \sez c := \case{t}{
				\pole \mapsto c_\pole
				\sep
				\merid\,x \mapsto c_\merid
			} : C[t/r] \\
			\infwhere c[\pole/t] = c_\pole \\
			\infnowhere \XX, u : \IU \sep \Delta, \bilock{\freshshrt{u}}{\lollishrt{u}} \sez c =
			c_\merid[\idsub_\Delta, \bikey{\constsym_u}{\unmeridsym_u}, \unmeridsym_u \modappnovar{\lollishrt{u}} t/x, \bilock{\freshshrt{u}}{\lollishrt{u}}] : C
		}{}
	\end{equation*}
	\caption[Transpension: Pattern matching]{Transpension elimination by pattern matching (sound if $\IU$ is $\top$-slice fully faithful and shard-free). Recall \cref{eq:variance-compose}.}
	\label{fig:transp:elim}
\end{figure}
\begin{theorem} \label{thm:transp-elim}
	If $\IU$ is $\top$-slice (hence presheafwise) fully faithful and shard-free, then the rule \ref{rule:transp:elim} in \cref{fig:transp:elim} is sound \cite{transpension-techreport}.
\end{theorem}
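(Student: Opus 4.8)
The plan is to reason directly in the presheaf model, interpret the term $c := \case{t}{\pole \mapsto c_\pole \sep \meridsym\,v\,x \mapsto c_\merid}$, and verify its two computation rules together with naturality and stability under substitution. The guiding intuition is that, when $\IU$ is cancellative, affine and connection-free, the presheaf $\Modify{\tf t}{\transpsym(u:\IU)}{A}$ is exactly the dependent ``suspension-like'' type generated by a point $\pole$ over the stages that hit $\partial\IU$ and a meridian $\meridsym\,v\,x$ over the remaining stages, glued along $\partial\IU$; hence it ought to satisfy the stated elimination principle for the same reason that the suspension in HoTT satisfies its induction principle \cite{hottbook}. The steps below make this precise.

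\textbf{Semantic decomposition of the transpension (the crux).} I would first show that each cell of $\Modify{\tf t}{\transpsym(u:\IU)}{A}$ is, at every stage, either the pole or a meridian with a uniquely determined point. A stage amounts to a morphism whose $u$-component is some $\psi$ landing in $\yoneda U$; by \cref{def:dim-split} this $\psi$ is either not dimensionally split, i.e.\ it lies in $\partial\IU$, in which case \cref{thm:pole} forces the cell to be $\pole$; or it is dimensionally split, in which case connection-freedom (\cref{def:multip}: $\freshbase{U}$ essentially surjective onto such objects) places the stage, up to isomorphism, in the image of fresh weakening, where the meridian characterisation of \cref{sec:structure} --- combined with invertibility of $\keyshade{\unmeridsym_u}$ and $\keyshade{\constsym_u}$ for cancellative affine multipliers (\cref{thm:quantification}) --- identifies the cell with a unique point $x$, recoverable as $\unmeridsym(w.t[w/u])$. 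Cancellativity and affineness deliver uniqueness of this representative; connection-freedom delivers exhaustiveness of the two cases; and one checks, again via \cref{thm:pole}, that restriction maps carry an interior meridian crossing into $\partial\IU$ to $\pole$, so that the two strata are coherently glued. Packaging these three hypotheses into such a clean description of the presheaf is the principal obstacle.

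\textbf{The eliminator and its pole rule.} Given this decomposition, $\interp c$ is defined stagewise: on a boundary stage return $\interp{c_\pole}$; on an interior stage write $t$ as $\meridsym\,v\,x$ and return $\interp{c_\merid}$ at that $v$ and $x$. This is a natural transformation because the two clauses agree whenever an interior stage restricts to a boundary one: there the meridian collapses to $\pole$ by \cref{thm:pole}, and the hypothesis that $c_\merid$ equals $c_\pole[\ttrans{\reindexsym_u}{\ttriv}{\tf t \twhisk \tf a}]$ over $v \in \partial\IU$ says exactly that the two returned values coincide. The pole computation rule $c[\pole/t] = c_\pole$ is then immediate, since substituting $\pole$ restricts the context to one in which $u \in \partial\IU$, so every stage is a boundary stage and only the first clause fires.

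\textbf{The meridian rule and stability.} For the second computation rule one works in a context locked by $\lollisym\,u$, where the relevant stage is the generic dimensionally-split one; by the decomposition $t$ restricts there to a genuine meridian whose recovered point is $\projmod{\transpsym\,u}$ applied to the appropriate restriction of $t$, and unwinding $\interp c$ at that stage produces $c_\merid$ with $x$ instantiated to this point, which is precisely the right-hand side of the rule --- the $2$-cells $\reindexsym_u$ and $\unmeridsym_u$ recording the reindexing and the recovery of $x$. Finally, stability under substitution in $\Gamma$ and, since $\IU$ is affine, under the shape substitutions present, reduces to naturality of the adjunction isomorphisms for $\lollisym(u:\IU) \dashv \transpsym(u:\IU)$, of the equivalence induced by $\freshbase{U}$ on dimensionally-split objects, and of the isomorphism of \cref{thm:pole}; independence of the chosen meridian representative is once more \cref{thm:quantification}. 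A complete proof is given in the technical report \cite{transpension-techreport}.
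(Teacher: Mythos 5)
Your proposal matches the argument the paper sketches in the prose immediately following the theorem (and delegates in detail to the technical report): decompose cells by the dimensionally-split dichotomy, invoke \cref{thm:pole} on the boundary stratum and \cref{thm:quantification} on the meridian stratum, and rely on connection-freedom for exhaustiveness of the two cases, with cancellativity and affineness providing uniqueness and coherent gluing along $\partial\IU$. The extra care you take about well-definedness (independence of the lift through $\freshbase{U}$) and about interior-to-boundary restriction collapsing meridians to the pole is exactly the work the paper defers to \cite{transpension-techreport}, so the approach is the same.
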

The elimination rule is best understood by looking at the left names.
We get a context $\Gamma$ depending on $u : \IU$, a type $A$ depending on sections of $\Gamma$ (as represented by $\Gamma, \bilock{\lollishrt{u}}{\transpshrt{u}}$), a type $C$ depending on $u$ and $r : \Modify{\tf a}{\transpshrt{u}}{A}$, and an argument $t$ of type $\Modify{\tf a}{\transpshrt{u}}{A}$.
To obtain a value of type $C$, we need to give an action $c_\pole$ on the boundary, where $t$ is necessarily $\pole$ (pole \cref{thm:pole}), and a compatible action on sections of the transpension type, i.e.\ meridians, which live over sections of $\Gamma$ but are themselves essentially elements of $A$ (quantification \cref{thm:quantification}), producing sections of $C$ (but the quantifier $\modshade{\lollishrt{u}}$ has been brought to the left as $\bilock{\freshshrt{u}}{\lollishrt{u}}$).
Thanks to shard-freedom, we know that everything that is not a section\footnote{or a `dimensional section' in case of base categories that are not objectwise pointable}, is on the boundary, so this suffices.

The computation rule for meridians fires when all of $\Gamma$ is fresh for $u$. In this situation, the judgement for $t$ is $\XX, u : \IU \sep \Delta, \bilock{\freshshrt{u}}{\lollishrt{u}} \sez t : \Modify{\tf a}{\transpshrt{u}}{A}$ which by transposition boils down to $\XX \sep \Delta \sez t' : \Modify{\tf f}{\lollishrt{a}}{\Modify{\tf a}{\transpshrt{u}}{A}}$, i.e.\ it fires when $t$ can actually be seen as a full section of the transpension type, so that we can apply the action on sections given by $c_\meridsym$.
\begin{remark} \label{rem:close-subst}
This computation rule for \ruleref{transp:elim} is in a non-general context and needs to be forcibly closed under substitution.
We could not find a better way to phrase this computation rule at the time of writing, but while preparing the camera-ready version of this paper, we believe the rule can be mended using the fact that $\bilock{\freshshrt u}{\lollishrt u}$ has a further left adjoint $\bilock{\sumshrt u}{\freshshrt u}$ so that an arbitrary context $\Theta$ can be universally approximated in the image of $\bilock{\freshshrt u}{\lollishrt u}$ as $\paren{\Theta, \bilock{\sumshrt u}{\freshshrt u}, \bilock{\freshshrt u}{\lollishrt u}}$, not unlike how the modal formation and introduction rules of MTT itself were conceived (\cref{fig:mtt:wdra}).
\end{remark}

\section{Recovering Known Operators} \label{sec:recover}
In this section, we explain how to recover the amazing right adjoint $\amaze$ \cite{internal-universes}, BCM's $\Phi$ and $\Psi$ combinators \cite{moulin-param3,moulin}, $\Glue$ \cite{cubical,paramdtt}, $\Weld$ \cite{paramdtt} and $\mill$ \cite{psh-charting-design-space} and (without formal claims) locally fresh names \cite{freshmltt} from the transpension type, the strictness axiom \cite{orton-pitts-axioms} and certain pushouts.
\Cref{fig:recover} gives an overview of the dependencies.

\begin{figure}
	\[
	\hspace*{-1em}\resizebox{\textwidth}{!}{
		\xymatrix{
			&
			{\txt{\textbf{Transpension} \\
			$\lollishrt u \dashv {\transpshrt u}$ \\
			\cite{yetter}, this paper
			}}
				{\ar[ld]}
				{\ar@{-->}[ldd]}
				{\ar[dd]}
				{\ar[rdd]}
				{\ar[rrdd]}
			&&
			{\txt{%
			\textbf{\S\ref{sec:strictness} Strictness} \\
			$\Strict$ \\
			\cite{orton-pitts-axioms}
			}}
				{\ar[d]}
				{\ar[rd]}
				{\ar@/_{1em}/[ldd]}
			&
			\txt{%
			\textbf{Pushout} \\
			\cite[\S 6.3.3]{nuyts-phd}
			}
				{\ar[d]}
			\\
			\txt{
			\textbf{\S\ref{sec:recover:amaze} Amazing r.\ adj.} \\
			$(\IU \to \loch) \dashv (\IU \amaze \loch)$ \\
			\cite{internal-universes} \\
			for HoTT
			}
			&
			&
			&
			\txt{
			\textbf{\S\ref{sec:recover:psh} ${\Glue}$} \\
			\cite{cubical,paramdtt,weaver-licata-dua} \\
			for HoTT/DirTT/param.
			}
			&
			\txt{
			\textbf{\S\ref{sec:recover:psh} ${\Weld}$} \\
			\cite{paramdtt} \\
			for param.
			}
				{\ar[ld]}
			\\
			\txt{
			\textbf{\S\ref{sec:recover:names} Nominal TT} \\
			$\nameshrt{i}, \Angles{i}, {\nu[i]}$ \\
			\cite{freshmltt}
			}
			&
			\txt{
			\textbf{\S\ref{sec:recover:phi} ${\Phi}/{\name{extent}}$} \\
			\cite{moulin-param3} \\
			Relates functions
			}
			&
			\txt{
			\textbf{\S\ref{sec:recover:psi} ${\Psi}/{\Gel}$} \\
			\cite{moulin-param3} \\
			Relates types
			}
				\ar@/^{3em}/[ld]
			&
			\txt{
			\textbf{\S\ref{sec:recover:psh} ${\mill}$} \\
			\cite{psh-charting-design-space} \\
			Swaps $\Weld$ and $\forall\,i$
			}
			\\
			&
			\txt{
			\textbf{\S\ref{sec:recover:transpensive} Transpensivity} \\
			Poor man's $\Phi$/$\name{extent}$
			}
		}
	}
	\]
	\caption{Recovering known operators: Dependency graph}
	\label{fig:recover}
\end{figure}

\subsection{The amazing right adjoint\texorpdfstring{ $\surd$}{}} \label{sec:recover:amaze}
Licata et al.\ \cite{internal-universes} use presheaves over a cartesian base category of cubes and introduce $\surd$ as the right adjoint to the non-dependent exponential $\IX \to \loch$.
We generalize to \emph{semicartesian} base categories (indeed to copointed multipliers) and look for a right adjoint to $\IU \multimap \loch$, which decomposes as substructural quantification after cartesian weakening $\modshade{\lollilong{u : \IU} \circ \wknlong{u : \IU}}$.
Then the right adjoint is obviously $\modshade{\amaze_\IU} := \modshade{\funclong{u : \IU} \circ \transplong{u : \IU}}$.
The type constructor has type $\Modifynovar{\amaze_\IU}{\loch} : (\tymod{\tf r}{\amaze_\IU}{\uni \ell}) \to \uni \ell$ and the transposition rule is as in \cref{thm:transpose}.
This is an improvement in two ways: First, we have introduction, elimination and computation rules, so that we do not need to postulate functoriality of $\amaze_\IU$ and invertibility of transposition.
Secondly, we have no need for a global sections modality $\modshade \flat$.
Instead, we use the modality $\modshade{\amaze_\IU}$ to escape Licata et al.'s no-go theorems.

Our overly general mode theory does contain a global sections modality $\ismod{\flat}{\cdot}{\cdot}$ acting in the empty shape context, and we can use this to recover Licata et al.'s axioms for the amazing right adjoint.
Let us write $\iskeyadj{\spoildropsym_u}{\spoilconstsym_u}{\id}{\id}{\pairshrt{u} \circ \freshshrt{u}}{\lollishrt{u} \circ \wknshrt{u}}$ and $\iskey{\spoilunmeridsym_u}{\funcshrt{u} \circ \transpshrt{u}}{\id}$ for the 2-cells built by partial transposition from either $\keyadj{\hidesym_u}{\spoilsym_u}$ or $\keyshade{\cospoilsym_u}$ (\cref{thm:quantification}), which are each other's transposite. The following are isomorphisms:
\begin{align*}
	\keyshade \kappa := \keyshade{\spoilconstsym_u \whisk \id_{\modshade \flat}} &: \modshade{\flat} \cong \modshade{\lollishrt{u} \circ \wknshrt{u} \circ \flat}, \\
	\keyshade \zeta := \keyshade{\id_{\modshade \flat} \whisk \spoilunmeridsym_u} &: \modshade{\flat \circ \funcshrt{u} \circ \transpshrt{u}} \cong \modshade{\flat}.
\end{align*}
For $\keyshade \kappa$, this is intuitively clear from the fact that we are considering $\IU$-cells in a discrete presheaf produced by $\flat$. For $\keyshade \zeta$, this is similarly clear after taking the left adjoints:
\begin{equation*}
	\lkeyshade{\spoilconstsym_u \whisk \id_{\lmodshade{\shp}}} : \lmodshade{\shp} \cong \lmodshade{\lollishrt{u} \circ \wknshrt{u} \circ \shp}
\end{equation*}
where $\lmodshade{\shp}$, the left name of $\modshade{\flat}$, is semantically the connected components functor which also produces discrete presheaves.
Write $\iskeyadj \eta \eps \shp \flat \id \id$ for the co-unit of the comonad $\modshade \flat$.
We can define
\begin{align*}
	&\IU \amaze \loch : (\tymod{\tf b}{\flat}{\uni{}}) \to \uni{}
	&&\IU \multimap \loch : \uni{} \to \uni{} \\
	&\IU \amazevar{\tf b} A = \Modify{\tf p \twhisk \tf t}{\funcshrt{u} \circ \transpshrt{u}}{A[\key{\zeta\inv}][\bikey{\eta}{\eps}, \bilock{\lollishrt u \circ \wknshrt u}{\funcshrt u \circ \transpshrt u}]}
	&&\IU \multimap A = \Modify{\tf a \twhisk \tf o}{\lollishrt{u} \circ \wknshrt{u}}{A[\bikey{\spoildropsym_u}{\spoilconstsym_u}]}.
\end{align*}
Let two global types ${\cdot} \sep \Gamma, \lock{\tf b}{\flat} \sez A, B \type$ be given.
Applying the non-dependent version of \cref{thm:transpose} to the adjunction $\modshade{\lollishrt{u} \circ \wknshrt{u}} \dashv \modshade{\amaze_\IU} = \modshade{\funcshrt{u} \circ \transpshrt{u}}$ with unit $\keyshade{(1_{\modshade{\funcshrt{u}}} \whisk \reindexsym_u \whisk 1_{\modshade{\wknshrt{u}}})}$ $\iskey{{} \circ \constsym_u}{\id}{\funcshrt{u} \circ \transpshrt{u} \circ \lollishrt{u} \circ \wknshrt{u}}$ yields:
\begin{align*}
	&\paren{A[\bikey{\eta}{\eps}] \to \IU \amazevar{\tf b} B} \\
	&\cong \Modify{\tf p \twhisk \tf t}{\amaze_\IU}{\Modify{\tf a \twhisk \tf o}{\lollishrt{u} \circ \wknshrt{u}}{A[\bikey{\eta}{\eps}][\bikey{\dropsym_u}{\constsym_u}][\bilock{\wknshrt{u}}{\funcshrt{u}}, \bikey{\appsym_u}{\reindexsym_u}, \bilock{\pairshrt{u}}{\wknshrt{u}}]} \to B[\key{\zeta\inv}][\bikey{\eta}{\eps}, \bilock{\lollishrt u \circ \wknshrt u}{\funcshrt u \circ \transpshrt u}]} \\
	&= \Modify{\tf p \twhisk \tf t}{\amaze_\IU}{\Modify{\tf a \twhisk \tf o}{\lollishrt{u} \circ \wknshrt{u}}{A[\bikey{\spoildropsym_u}{\spoilconstsym_u}]}[\key{\zeta\inv}][\bikey{\eta}{\eps}, \bilock{\lollishrt u \circ \wknshrt u}{\funcshrt u \circ \transpshrt u}] \to B[\key{\zeta\inv}][\bikey{\eta}{\eps}, \bilock{\lollishrt u \circ \wknshrt u}{\funcshrt u \circ \transpshrt u}]} \\
	&= \Modify{\tf p \twhisk \tf t}{\amaze_\IU}{\paren{\Modify{\tf a \twhisk \tf o}{\lollishrt{u} \circ \wknshrt{u}}{A[\bikey{\spoildropsym_u}{\spoilconstsym_u}]} \to B}[\key{\zeta\inv}][\bikey{\eta}{\eps}, \bilock{\lollishrt u \circ \wknshrt u}{\funcshrt u \circ \transpshrt u}]} \\
	&= \IU \amazevar{\tf b} \paren{(\IU \multimap A) \to B}.
\end{align*}
Equality of the substitutions applied to $A$ is proven by transposing $\modshade{\lollishrt{u} \circ \wknshrt{u}}$ to the left as $\modshade{\funcshrt{u} \circ \transpshrt{u}}$. Then the unit $\iskey{\constsym_u \circ (\id_{\modshade{\funcshrt{u}}} \whisk \reindexsym_u \whisk \id_{\modshade{\wknshrt{u}}})}{\id}{\funcshrt{u} \circ \transpshrt{u} \circ \lollishrt{u} \circ \wknshrt{u}}$ becomes $\keyshade{\id_{\modshade{\funcshrt{u} \circ \transpshrt{u}}}}$, leaving just $\iskey \eps \flat \id$, whereas $\iskey{\spoilconstsym_u}{\id}{\lollishrt{u} \circ \wknshrt{u}}$ becomes $\iskey{\spoilunmeridsym_u}{\funcshrt{u} \circ \transpshrt{u}}{\id}$ and cancels against $\keyshade{\zeta\inv}$, again leaving just $\keyshade \eps$.

Applying the $\modshade \flat$ modality to both sides of the isomorphism and using $\keyshade \zeta$ to get rid of the amazing right adjoint on the right, yields transposition functions as given by Licata et al.\ \cite{internal-universes}.

We refer back to \cref{sec:car-multip} for the opposite construction: a transpension type for a cartesian multiplier can be constructed from an amazing right adjoint \cite{yetter}.

\subsection{The \texorpdfstring{$\Phi$}{Phi}-combinator} \label{sec:recover:phi}
\begin{figure}%
	\textbf{Binary and destrictified reformulation of the original $\Phi$-rule \cite{moulin-param3,moulin,cavallo-harper-paramhott-journal}:}
	\begin{equation*}
		\inference{
			\Delta, i : \IX \sez B \type \\
			\Delta, i : \IX, y : B \sez C \type \\
			\Delta, y : B[\epsilon/i] \sez c_\epsilon : C[\epsilon/i]
			& (\epsilon \in \accol{0, 1}) \\
			\Delta, h : \lollilong{i : \IX}.B, i : \IX \sez c_\lollisym : C[h\,i/y] \\
			\Delta, h : \lollilong{i : \IX}.B \sez c_\lollisym[\epsilon/i] = c_\epsilon[h\,\epsilon/y] : C[h\,\epsilon/y]
			& (\epsilon \in \accol{0, 1})
		}{
			\Delta \sez \Phi\,c_0\,c_1\,c_\lollisym : \lollishrt{i} . (y : B) \to C \\
			\infwhere \Phi\,c_0\,c_1\,c_\lollisym\,\epsilon\,b = c_\epsilon[b/y]
			& (\epsilon \in \accol{0, 1}) \\
			\infnowhere \Delta, i : \IX \sez \Phi\,c_0\,c_1\,c_\lollishrt{i}\,b = c_\lollisym[\lambda i.b/h]
		}{}
	\end{equation*}
	
	\figskip
	
	\textbf{$\Phi$-rule in \Msys{}} (recall \cref{eq:variance-compose})\textbf{:}
	\begin{equation*}
		\inferencel{phi}{
			\XX, u : \IU \sep \Gamma \sez C \type \\
     		\XX, u : \IU \sep \Gamma, \_ : u \in \partial \IU \sez c_\partial : C \\
			\XX, u : \IU \sep \Gamma,
				\bilock{\lollishrt{u}}{\transpshrt{u}},
				\bilock{\freshshrt{u}}{\lollishrt{u}}
				\sez c_\lollisym : C
				[\bikeytyped{\appsym_u}{\reindexsym_u}{\id}{\id}{\freshshrt{u} \circ \lollishrt{u}}{\transpshrt{u} \circ \lollishrt{u}}] \\
			\XX, u : \IU \sep \Gamma,
				\bilock{\lollishrt{u}}{\transpshrt{u}},
				\bilock{\freshshrt{u}}{\lollishrt{u}},
				\_ : u \in \partial \IU
				\sez %
				c_\lollisym = c_\partial[\bikey{\appsym_u}{\reindexsym_u}] : C
				[\bikey{\appsym_u}{\reindexsym_u}]
		}{
			\XX, u : \IU \sep \Gamma \sez 
			\Phi_u\,c_\partial\,c_\lollisym
			: C \\
			\infwhere \XX, u : \IU \sep \Gamma, \_ : u \in \partial \IU \sez \Phi_u\,c_\partial\,c_\lollisym = c_\partial : C \\
			\infnowhere \XX, u : \IU \sep \Delta, 
				\bilock{\freshshrt{u}}{\lollishrt{u}}
				\sez\Phi_u\,c_\partial\,c_\lollisym = c_\lollisym[\bikeytyped{\constsym_u}{\unmeridsym_u}{\lollishrt{u} \circ \freshshrt{u}}{\lollishrt{u} \circ \transpshrt{u}}{\id}{\id}, \bilock{\freshshrt{u}}{\lollishrt{u}}] : C
		}{}
	\end{equation*}
	\caption{The $\Phi$-rule (sound if $\IU$ is $\top$-slice fully faithful and shard-free).}
	\label{fig:phi}
\end{figure}
In \cref{fig:phi}, we \emph{state} BCM's $\Phi$-rule \cite{moulin-param3,moulin}, also known as $\name{extent}$ \cite{cavallo-harper-paramhott-journal}; both a slight reformulation adapted to \FFsys{} and the rule \ref{rule:phi} adapted to \Msys{}.

In the binary version of the BCM system, or in \FFsys{} with an interval shape as in cubical type theory, the $\Phi$-combinator allows us to define functions of type $\lollishrt{i}. (y : B\,i) \to C\,i\,y$ from an action $c_\epsilon$ at every endpoint $\epsilon$ and a compatible action $c_\lollisym$ on sections $\lollishrt{i}. B\,i$.
When the resulting function $\Phi\,c_0\,c_1\,c_\lollisym$ is applied to an endpoint $\epsilon : \IX$ it just reduces to the corresponding action $\lambda y.c_\epsilon$.
When it is applied to an interval variable $i : \IX$ and expression $b$ that depends only on $i$ and variables fresh for $i$, then the variable $i$ can be captured in $b$ yielding a section $\lambda i.b : \lollishrt{i}. B\,i$ to which we can apply the action on sections $c_\lollisym$.

A few remarks are necessary in order to move to \Msys{}.
First of all, note that the fully applied conclusion of the $\Phi$-rule is $\Delta, i : \IX, y : B \sez \Phi\,c_0\,c_1\,c_\lollishrt{i}\,y : C$.
Of course the endpoints together constitute the boundary of the interval, so if we want to generalize away from cubical type theory, we can expect something like $\Delta, u : \IU, y : B \sez \Phi\,c_\partial\,c_\lollishrt{u}\,y : C$.
We will assume that the shape $\IU$ is $\top$-slice fully faithful and shard-free.
In order to translate the context $(\Delta, u : \IU, y : B)$ to \Msys{}, recall that in \FFsys{} (as well as in the BCM system) the variables to the left of $u$ are fresh for $u$, whereas those to the right need not be.
In \Msys{} we put the shape variables in the shape context, so the context translates to $(\Delta, \bilock{\freshshrt{u}}{\lollishrt{u}}, y : B)$.
In \Msys{} we will treat this entire thing as a single abstract context $\Gamma$, so we do not grant the domain of the $\Phi$-function any special status; in a way all of $\Gamma$ takes the role of $B$.

Then, completely analogously to BCM, we need a type $C$ in context $\Gamma$, an object $c_\partial : C$ whenever $u$ is on the boundary, and a compatible action $c_\lollisym$ that acts on sections of $\Gamma$ and produces sections of $C$, but the quantifier $\modshade{\lollishrt{u}}$ has been brought to the left as $\bilock{\freshshrt{u}}{\lollishrt{u}}$.
Again the resulting term $\Phi_u\,c_\partial\,c_\lollisym$ reduces to $c_\partial$ when on the boundary, and to $c_\lollisym$ when all variables in use are fresh for $u$.
Again, the computation rule for sections is in a non-general context and needs to be forcibly closed under substitution, but see \cref{rem:close-subst}.

Note that the \FFsys{}/BCM substitutions $h\,i/b$ and $h\,\epsilon/b$ (i.e.\ $h\,i/b$ where $i$ is on the boundary) involving applications, all turn into usages of $\bikey{\appsym_u}{\reindexsym_u}$ whereas the variable capturing substitution $\lambda i.b/h$ has turned into $(\bikey{\constsym_u}{\unmeridsym_u}, \bilock{\freshshrt{u}}{\lollishrt{u}}) : (\Delta, \bilock{\freshshrt{u}}{\lollishrt{u}}) \to (\Delta, \bilock{\freshshrt{u}}{\lollishrt{u}}, \bilock{\lollishrt{u}}{\transpshrt{u}}, \bilock{\freshshrt{u}}{\lollishrt{u}})$ which for $\top$-slice fully faithful multipliers is inverse to $(\bilock{\freshshrt{u}}{\lollishrt{u}}, \bikey{\appsym_u}{\reindexsym_u})$ and therefore denotes an inverse of application: variable capture.
Furthermore, regarding the context of $c_\forall$, we remark that for $\top$-slice fully faithful multipliers there is an isomorphism of contexts
\begin{align}
	(\Delta, \bilock{\freshshrt{u}}{\lollishrt{u}}, y : B, \bilock{\lollishrt{u}}{\transpshrt{u}})
	&\cong_{\text{\ref{thm:transpose}}}
	\paren{\Delta, \bilock{\freshshrt{u}}{\lollishrt{u}}, \bilock{\lollishrt{u}}{\transpshrt{u}}, \ctxmod{\tf a}{\lollishrt{u}}{y}{B\brac{\bilock{\freshshrt{u}}{\lollishrt{u}}, \bikey{\appsym_u}{\reindexsym_u}}}} \nonumber \\
	&\cong_{\text{\ref{thm:quantification}}}
	(\Delta, \ctxmod{\tf a}{\lollishrt{u}}{y}{B}, \bilock{\freshshrt{u}}{\lollishrt{u}}) \label{eq:recover:telescope-quantification}
\end{align}
so, recalling from \cref{sec:comparison:embedding:shp} that $i : \IX$ translates to $\bilock{\freshshrt{i}}{\lollishrt{i}}$ in the internal context, there really is a close correspondence to what happens in the BCM system.

We remark that if $\IU$ is $\top$-slice fully faithful but not necessarily shard-free, then the $\Phi$-rule remains valid for creating terms \emph{of a transpension type} $C = \Modify{\tf t}{\transpshrt{u}}{D}$.
Indeed, using pole \cref{thm:pole} we can then define:
\begin{align*}
	\Phi_u\,\pole\,c_\lollisym
	&:= \modify{\tf t}{\transpshrt{u}}{(\unmeridsym_u \modappnovar{\lollishrt{u}} c_\lollisym)} : \Modify{\tf t}{\transpshrt{u}}{D}.
\end{align*}
The non-trivial computation rule follows from the $\eta$-rule for projections (\cref{thm:projmod}) and quantification \cref{thm:quantification}.

\begin{theorem} \label{thm:phi}
	If $\IU$ is $\top$-slice fully faithful and shard-free, then the $\Phi$-rule (\cref{fig:phi}) is sound and indeed derivable from \ref{rule:transp:elim} for all $C$.
\end{theorem}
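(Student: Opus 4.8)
The plan is to obtain the $\Phi$-rule of \cref{fig:phi} as a particular instance of the transpension pattern-matching rule \ref{rule:transp:elim}, whose soundness under exactly these hypotheses (cancellative, affine, connection-free) is \cref{thm:transp-elim}. Derivability then entails soundness, and the two computation rules of $\Phi$ will be read off from the $\beta$-law and the section-computation rule of \ref{rule:transp:elim}. Note that the special case $C = \Modify{\tf t}{\transpsym\,u}{D}$ was already handled above using only pole (\cref{thm:pole}) and quantification (\cref{thm:quantification}); the real content is an \emph{arbitrary} motive $C$, which is precisely what \ref{rule:transp:elim} provides.

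The first move is to absorb $B$ into the context and argue generically: it suffices to derive the rule for $b = y$ in context $\Xi, u : \IU \sep \Gamma, y : B$, since for a general $b : B$ one precomposes with $(b/y)$, and both $\Phi$-computation rules are stable under this substitution. Absorbing $B$ is in fact forced: in \ref{rule:transp:elim} the transpended fibre $A$ lives at mode $\Xi$ over $\lollisym(u : \IU).\Gamma$ and therefore cannot depend on $u$, whereas $B$ does, so $B$ must be part of the base context and the fibre $A$ is taken to be the unit type $\mathbf{1}$. Concretely I would instantiate \ref{rule:transp:elim} with: base context $(\Gamma, y : B)$; fibre $A := \mathbf{1}$ over $\lollisym(u : \IU).(\Gamma, y : B)$; motive $C^{\mathrm{elim}} := C[\wknvar r]$, i.e.\ the given $C$ weakened by the now-irrelevant variable $r : \Modify{\tf t}{\transpsym\,u}{\mathbf{1}}$; boundary case $c_\pole := c_\partial$; meridian case $c_\merid := c$, the $\Phi$-datum, with the spurious $x : \mathbf{1}$ discarded and the section of $B$ supplied through the $\lollisym$-quantified context $(\Gamma, y : B)$; eliminand $t := \modify{\tf t}{\transpsym\,u}{\star}$ with $\star : \mathbf{1}$. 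Because $C^{\mathrm{elim}}$ ignores $r$, the substitutions $C^{\mathrm{elim}}[\pole/r]$, $C^{\mathrm{elim}}[\meridsym\,v\,x/r]$ and $C^{\mathrm{elim}}[t/r]$ are all (reindexings of) $C$, and with $\mathbf{1}$-$\eta$ the premises of \ref{rule:transp:elim} reduce to exactly those of the $\Phi$-rule; in particular its boundary-compatibility premise becomes verbatim that of $\Phi$. We then \emph{define} $\Phi_u\,(y.c_\partial)\,(y.v.c)\,b := \paren{\case{t}{\pole \mapsto c_\partial \sep \meridsym\,v\,x \mapsto c}}[b/y]$.

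It remains to validate the two computation rules. For the boundary rule: in context $\Xi, u : \IU \sep \Gamma, y : B, \_ : u \in \partial\IU$ the pole $\eta$-rule (\cref{thm:pole}) gives $t = \pole$, so the $\beta$-law of \ref{rule:transp:elim} yields $\case{t}{\ldots} = c_\partial$, and substituting $(b/y)$ gives $\Phi_u\,(y.c_\partial)\,(y.v.c)\,b = c_\partial[b/y]$ as required. For the section rule: in a freshened context $\freshsym\,u.\Delta$ the section-computation rule of \ref{rule:transp:elim} rewrites $\case{t}{\ldots}$ to $c_\merid[u/v, \unmeridsym\,(w.t[w/u])/x]$ with $c_\merid := c$; here $\unmeridsym\,(w.t[w/u])$ collapses to $\star$ by the $\beta$-law $\unmeridsym\,(w.\meridsym\,w\,\star) = \star$, and $x : \mathbf{1}$ is irrelevant, so this is $c[u/v]$; substituting $(b/y)$ and observing that in the freshened context $b$ is independent of $u$ (by cancellativity and affineness, via \cref{thm:quantification}, so $b\restr{u}{\loch} = \lambda\_.b$) turns this into $c[u/v, \lambda\_.b/y\restr{v}{\loch}]$, matching the claimed rule.

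The main obstacle is the bookkeeping for this last rule: \ref{rule:transp:elim}'s section-computation rule is stated only in the non-generic context $\freshsym\,u.\Delta$ and "forcibly closed under substitution", so one must check carefully that, after substituting $(b/y)$ and pushing it through the $\freshsym\,v.\lollisym(u : \IU).(\loch)$ wrapping the context of $c$, one lands on $\Phi$'s section rule on the nose — this requires unfolding how $(b/y)$ acts on the quantified component $y\restr{u}{\loch}$ and again invoking \cref{thm:quantification}. The remaining work is index-level hygiene: threading the choices $A := \mathbf{1}$, $C^{\mathrm{elim}} := C[\wknvar r]$, $t := \modify{\tf t}{\transpsym\,u}{\star}$ through the reindexing $2$-cells $\keyshade{\reindexsym_u}$ and $\keyshade{\unmeridsym_u}$ decorating \ref{rule:transp:elim}. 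Soundness of the $\Phi$-rule (including its computation rules) is then immediate from \cref{thm:transp-elim}.
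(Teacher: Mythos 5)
Your proof is correct and takes essentially the same route as the paper's: absorb $B$ into $\Gamma$, instantiate \ref{rule:transp:elim} with the unit type as fibre, case-eliminate on $\modify{\tf t}{\transpsym\,u}{\unit}$ with the pole branch giving $c_\partial$ and the meridian branch giving $c$, and read off the $\Phi$-computation rules from the pole $\eta$-rule and the section-computation rule. The paper's proof is terser (it just writes down the two $\name{case}$-expressions), while you additionally spell out why absorbing $B$ is forced and why $C$ needs weakening by the dummy $r$; that extra bookkeeping is sound and in fact implicit in the paper.
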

\begin{proof}
	Use the $\name{case}$-eliminator for $\Modify{\tf t}{\transpshrt{u}}{\name{Unit}}$ (\cref{thm:transp-elim}):
	\begin{align*}
		\Phi_u\,c_\partial\,c_\lollisym
		&:= \case{(\modify{\tf t}{\transpshrt{u}}{\unit})}{
			\pole \mapsto c_\partial \sep
			\merid\,\_ \mapsto c_\lollisym
		}. \tag*{\qedhere}
	\end{align*}
\end{proof}

\subsection{The \texorpdfstring{$\Psi$}{Psi}-type} \label{sec:recover:psi}
\begin{figure}%
	\small
	\textbf{Binary and destrictified reformulation of the original $\Psi$-type \cite{moulin-param3,moulin,cavallo-harper-paramhott-journal}:}
	\begin{equation*}
		\inference{
			\Delta \sez A_\epsilon \type \quad (\epsilon \in \accol{0, 1}) \\
			\Delta, x_0 : A_0, x_1 : A_1 \sez R \type
		}{
			\Delta, i : \IX, \theta : \Theta \sez \Psi_i\,A_0\,A_1\,(x_0.x_1.R) \type \\
			\infwhere \Psi_\epsilon\,A_0\,A_1\,R = A_\epsilon \quad (\epsilon \in \accol{0, 1})
		}{}
	\end{equation*}
	\begin{equation*}
		\inference{
			\Delta \sez a_\epsilon : A_\epsilon \quad (\epsilon \in \accol{0, 1}) \\
			\Delta \sez r : R[a_0/x_0, a_1/x_1]
		}{
			\Delta, i : \IX, \theta : \Theta \sez \inPsisym_i\,a_0\,a_1\,r : \Psi_i\,A_0\,A_1\,(x_0.x_1.R) \\
			\infwhere \inPsisym_\epsilon\,a_0\,a_1\,r = a_\epsilon \quad (\epsilon \in \accol{0, 1}) \\
			\infnowhere \Delta, i : \IX \sez q = \inPsisym_i\,q[0/i]\,q[1/i]\,(\outPsi(j.q[j/i]))
		}{}
		\quad
		\inference{
			\Delta, i : \IX \sez q : \Psi_i\,A_0\,A_1\,(x_0.x_1.R)
		}{
			\Delta \sez \outPsi(i.q) : R[q[0/i]/x_0, q[1/i]/x_1] \\
			\infwhere \outPsi(i.\inPsisym_i\,a_0\,a_1\,r) = r
		}{}
	\end{equation*}
	
	\figskip

	\textbf{$\Psi$-type in \FFsys{}:}
	\begin{equation*}
		\inference{
			\Delta, \theta : \Theta[\epsilon/i] \sez A_\epsilon \type \quad (\epsilon \in \accol{0, 1}) \\
			\Delta, \lollishrt{i}.(\theta : \Theta), x_0 : A_0[(\lambdabar i.\theta)\,0/\theta], x_1 : A_1[(\lambdabar i.\theta)\,1/\theta] \sez R \type
		}{
			\Delta, i : \IX, \theta : \Theta \sez \Psi_i\,A_0\,A_1\,(x_0.x_1.R) \type \\
			\infwhere \Psi_\epsilon\,A_0\,A_1\,R = A_\epsilon \quad (\epsilon \in \accol{0, 1})
		}{}
	\end{equation*}
	\begin{equation*}
		\inference{
			\Delta, \theta : \Theta[\epsilon/i] \sez a_\epsilon : A_\epsilon \quad (\epsilon \in \accol{0, 1}) \\
			\Delta, \lollishrt{i}.(\theta : \Theta) \sez \\
			\qquad r : R[a_0[(\lambdabar i.\theta)\,0/\theta]/x_0, a_1[(\lambdabar i.\theta)\,1/\theta]/x_1]
		}{
			\Delta, i : \IX, \theta : \Theta \sez \inPsisym_i\,a_0\,a_1\,r : \Psi_i\,A_0\,A_1\,(x_0.x_1.R) \\
			\infwhere \inPsisym_\epsilon\,a_0\,a_1\,r = a_\epsilon \quad (\epsilon \in \accol{0, 1}) \\
			\infnowhere q = \inPsisym_i\,q[0/i]\,q[1/i]\,(\outPsi(j.q[j/i, (\lambdabar i.\theta)\,j / \theta]))
		}{}
		\quad
		\inference{
			\Delta, i : \IX \sez q : \Psi_i\,A_0\,A_1\,(x_0.x_1.R)
		}{
			\Delta \sez \outPsi(i.q) : R[q[0/i]/x_0, q[1/i]/x_1] \\
			\infwhere \outPsi(i.\inPsisym_i\,a_0\,a_1\,r) = r
		}{}
	\end{equation*}
	
	\figskip
	
	\textbf{$\Psi$-type in \Msys{}:}
	\begin{equation*}
		\inferencel{psi}{
			\XX, u : \IU \sep \Gamma, \_ : u \in \partial \IU \sez A \type \\
			\XX \sep \Gamma, \hat x : (\_ : u \in \partial \IU) \to A, \bilock{\lollishrt{u}}{\transpshrt{u}} \sez R \type
		}{
			\XX, u : \IU \sep \Gamma \sez \Psitype u {\hat x} A {\tf t} R \type \\
			\infwhere \_ : u \in \partial \IU \sez \Psitype u {\hat x} A {\tf t} R = A
		}{}
	\end{equation*}
	\begin{equation*}
		\inferencel{psi:intro}{
			\XX, u : \IU \sep \Gamma, \_ : u \in \partial \IU \sez a : A \\
			\XX \sep \Gamma, \bilock{\lollishrt{u}}{\transpshrt{u}} \sez r : R[\lambda \_ .a/\hat x, \bilock{\lollishrt{u}}{\transpshrt{u}}]
		}{
			\XX, u : \IU \sep \Gamma \sez \inPsi u a {\tf t} r : \Psitype u {\hat x} A {\tf t} R \\
			\infwhere \_ : u \in \partial \IU \sez \inPsi u a {\tf t} r = a \\
			\infnowhere q = \inPsi{u}{q}{\tf t}{\paren{\outPsi \modappnovar{\lollishrt{u}}{q[\bikey{\appsym_u}{\reindexsym_u}]}}}
		}{}
		\qquad
		\inferencel{psi:elim}{
			\XX, u : \IU \sep \Delta, \bilock{\freshshrt{u}}{\lollishrt{u}} \sez q : \Psitype u {\hat x} A {\tf t} R
		}{
			\XX \sep \Delta \sez \outPsi \modappnovar{\lollishrt{u}} q : R
				[\lambda \_ . q/\hat x, \bilock{\lollishrt{u}}{\transpshrt{u}}]
				[\bikey{\constsym_u}{\unmeridsym_u}] \\
			\infwhere \outPsi \modappnovar{\lollishrt{u}}{\inPsi u a {\tf t} r} = r
				[\bikey{\constsym_u}{\unmeridsym_u}]
		}{}
	\end{equation*}
	\caption[Known operators: The $\Psi$-type]{Typing rules for the $\Psi$-type.}
	\label{fig:psi}
\end{figure}
BCM's $\Psi$-combinator (\cref{fig:psi}, also known as $\name{Gel}$ \cite{cavallo-harper-paramhott-journal}) constructs a line $\lollilong{i : \IX}.\uni{}$ in the universe with endpoints $A_\epsilon$ from a relation $R : A_0 \to A_1 \to \uni{}$.
A section of the $\Psi$-type with endpoints $a_\epsilon$ is a proof of $R\,a_0\,a_1$.
The constructor $\inPsisym$ creates a section $\lollilong{i : \IX}.\Psi_i\,A_0\,A_1\,(x_0.x_1.R)$ from the expected inputs.
The disappearance of $\Theta$ in the premises of $\Psi$ and $\inPsisym$ is entirely analogous to the shape application rule \ruleref{ff:forall:elim} in \cref{fig:ff}.
The eliminator $\outPsi$ extracts from a section of the $\Psi$-type the proof that its endpoints satisfy the relation $R$.

In fact, using the typing rules of \FFsys{} and a strictness axiom as in \cref{sec:strictness} we can already implement a stronger $\Psi$-type, also given in \cref{fig:psi}, where $\Theta$ does not disappear but gets universally quantified.
This is done by strictifying the right hand sides below:\footnote{For the identity type, we use pattern-matching abstractions to abbreviate the usage of the J-rule. We are in an extensional type system anyway.}
\begin{align*}
	\alpha : \Psi_i\,A_0\,A_1\,(x_0.x_1.R) &:\cong (\hat x_0 : (\refl : \idtp{\IX}{i}{0}) \to A_0) \times (\hat x_1 : (\refl : \idtp{\IX}{i}{1}) \to A_1) \times {} \\
	&\qquad \qquad \transpshrt{i} \codot (R[(\lambdabar i.\hat x_0)\,0\,\refl/x_0, (\lambdabar i.\hat x_1)\,1\,\refl/x_1]) \\
	\inPsisym_i\,a_0\,a_1\,r &:= \alpha\inv(\lambda \refl.a_0, \lambda \refl.a_1, \meridshrt{i}\,r) \\
	\outPsi(i.q) &:= \unmeridsym(i.\pi_3(\alpha(q)))
\end{align*}
The fact that this construction is isomorphic to $A_\epsilon$ at endpoint $\epsilon$ follows from our findings about poles in \cref{sec:ff:poles}.

When we move to \Msys{}, once again we translate the context $(\Delta, u : \IU, \Theta)$ to $(\Delta, \bilock{\freshshrt{u}}{\lollishrt{u}}, \Theta)$, which we treat as a single abstract context $\Gamma$.
By a reasoning identical to that in \cref{eq:recover:telescope-quantification}, applying $\bilock{\lollishrt{u}}{\transpshrt{u}}$ only affects the non-fresh part $\Theta$ if the shape $\IU$ is $\top$-slice fully faithful.
This leads to the \Msys{} rules listed in \cref{fig:psi}.
Note again how substitutions $(\lambdabar i.\theta)\,j/\theta$ in \FFsys{} give rise to usages of $\bikey{\appsym_u}{\reindexsym_u}$ in \Msys{}.
The usages of $\bikey{\constsym_u}{\unmeridsym_u}$ are entirely absent in \FFsys{}, but for $\top$-slice fully faithful multipliers, this is an isomorphism anyway.

The eliminator $\outPsi$ only eliminates sections.
For $\top$-slice fully faithful and shard-free multipliers, the $\Phi$-rule provides a pattern-matching eliminator which lets us treat the boundary and section cases separately.

\begin{theorem}
	For \emph{any} multiplier, the $\Psi$-type in \cref{fig:psi} is implementable from the transpension type and the strictness axiom.
\end{theorem}
\begin{proof}
	We strictify the right hand sides below:
	\begin{align*}
		\alpha : \Psitype{u}{\hat x}{A}{\tf t}{R} &:\cong (\hat x : (\_ : u \in \partial U) \to A) \times \Modify{\tf t}{\transpshrt{u}}{R}, \\
		\inPsi u a {\tf t} r &:= \alpha\inv(\lambda \_ . a, \modify{\tf t}{\transpshrt{u}}{r}), \\
		\outPsi \modappnovar{\lollishrt{u}} q &:= \unmeridsym_u \modappnovar{\lollishrt{u}} \pi_2(\alpha(q)).
	\end{align*}
	The fact that this is isomorphic to $A$ on the boundary follows from pole \cref{thm:pole}
\end{proof}
Obviously then the transpension type $\Modify{\tf t}{\transpshrt{u}}{T}$ is also implementable from the $\Psi$-type as $\Psitype{u}{\_}{\Unit}{\tf t}{T}$.

\subsection{Transpensivity} \label{sec:recover:transpensive}
The $\Phi$-rule is extremely powerful but not available in all systems.
However, when the codomain $C$ is a $\Psi$-type, then the $\name{in}\Psi$-rule is actually quite similar to the $\Phi$-rule if we note that sections of the $\Psi$-type are essentially elements of $R$.
As such, we take an interest in types that are very $\Psi$-like.
We have a monad (idempotent if $\IU$ is $\top$-slice fully faithful)
\begin{align*}
	\bar \Psi_u A &:= \Psitype{u}{{\hat x}}{(\_.A)}{\tf t}{\Modify{\tf a}{\lollishrt{u}}{\extension{A}{\sysclause{u \in \partial \IU}{{\hat x}\,\_}}[\bikey{\appsym_u}{\reindexsym_u}]}},
\end{align*}
where $\extension A {\sysclause{\vfi}{a}}$ is the type of elements of $A$ that are equal to $a$ when $\vfi$ holds:
\begin{equation*}
	\extension A {\sysclause{\vfi}{a}} := (x : A) \times ((\_ : \vfi) \to (\idtp A x a)).
\end{equation*}
\begin{definition}
	A type is \textbf{transpensive over $u$} if it is a monad-algebra for $\bar \Psi_u$.
\end{definition}
For $\top$-slice fully faithful and shard-free multipliers, $\Phi$ entails that all types are transpensive.
For other systems, the universe of $u$-transpensive types will still be closed at least semantically under many interesting type formers, allowing to eliminate \emph{to} these types in a $\Phi$-like way.

\subsection{\texorpdfstring{\textsf{Glue}, \textsf{Weld}, \textsf{mill}}{Glue, Weld, mill}} \label{sec:recover:psh}
$\GlueNice{A}{\tysysclause{\vfi}{T}{f}}$ and $\WeldNice{A}{\tysysclause{\vfi}{T}{g}}$ are similar to $\Strict$ but extend unidirectional functions.
Orton and Pitts \cite{orton-pitts-axioms} already show that $\Glue$ \cite{cubical,paramdtt} can be implemented by strictifying a pullback along $A \to (\vfi \to A)$ \cite{psh-charting-design-space} which is definable internally using a $\Sigma$-type.
Dually, $\Weld$ \cite{paramdtt} can be implemented if there is a type former for pushouts along $\vfi \times A \to A$ where $\vfi : \Prop$ \cite[\S 6.3.3]{nuyts-phd}, which is sound in all presheaf categories.

Finally, $\mill$ \cite{psh-charting-design-space} states that $\modshade{\lollilong{u : \IU}}$ preserves $\Weld$ and is provable by higher-dimensional pattern matching (where $\circledast$ is the applicative operation of the modal type):
\begin{align*}
	&\mill : \Modify{\tf a}{\lollishrt{u}}{\WeldNice{A}{\tysysclause{\vfi}{T}{g}}} \\
	&\qquad \to \WeldNice{\Modify{\tf a}{\lollishrt{u}}{A}}{\tysysclause{\Modify{\tf a}{\lollishrt{u}}{\vfi}}{\Modify{\tf a}{\lollishrt{u}}{T}}{(\modify{\tf a}{\lollishrt{u}}{g}) \circledast \loch}} \\
	&\mill\,\hat w = \unmeridsym_u \modappnovar{\lollishrt{u}}{{}} \\
	&\qquad {
			\case{(\appsym_u \modappnovar{\freshshrt{u}}{(\hat w \varbikey{\dropsym_u}{\constsym_u})})}{
				\weld\,a \mapsto \modify{\tf t}{\transpshrt{u}}{(\weld\,(\modify{\tf a'}{\lollishrt{u}}{(a \varbikey{\appsym_u}{\reindexsym_u})}))} \\
				\sysclause{\vfi}{t} \mapsto \modify{\tf t}{\transpshrt{u}}{(\modify{\tf a'}{\lollishrt{u}}{(t \varbikey{\appsym_u}{\reindexsym_u})})}
			}
		}.
\end{align*}

In the first clause, we get an element $a : A$ and can proceed as in \cref{sec:comparison:hdpm}.
In the second clause, we are asserted that $\vfi$ holds (call the witness $p$) so that the left hand $\Weld$-type equals $T$, and we are given $t : T$.
Then inside the meridian constructor $\modifynovar{\transpshrt{u}}{}$ we know that $\Modify{\tf a}{\lollishrt{u}}{\vfi}$ holds as this is proven by $\modify{\tf a'}{\lollishrt{u}}{(p \varbikey{\appsym_u}{\reindexsym_u})}$; hence the $\Weld$-type in the codomain simplifies to $\Modify{\tf a}{\lollishrt{u}}{T}$.
When $\vfi$ holds and $t = \weld\,a = g\,a$, then the $\weld$-constructor of the right hand $\Weld$-type reduces to $(\modify{\tf a}{\lollishrt{u}}{g}) \circledast \loch$ which effectively applies $g$ under the $\modifynovar{\lollishrt{u}}{}$-constructor, so that both clauses match as required.

\subsection{Locally fresh names} \label{sec:recover:names}
Nominal type theory is modelled in the Schanuel topos \cite{nominal-transp} which is a subcategory of nullary affine cubical sets $\Psh(\ary{\cubecat_\affine}{0})$ (\cref{ex:multip:affine-cubes}).
As fibrancy is not considered in this paper, we will work directly in $\Psh(\ary{\cubecat_\affine}{0})$.
Names can be modelled using the multiplier $\loch * (i : \IX)$.
Interestingly, the fresh weakening functor $\freshbase{(i : \IX)}$ is then \emph{inverse} to its left adjoint $\sumbase{(i : \IX)}$.
By consequence, we get $\modshade{\sumshrt{i}} \cong \modshade{\lollishrt{i}} =: \modshade{\nameshrt{i}}$ (the fresh name quantifier) with inverse $\modshade{\freshshrt{i}} \cong \modshade{\transpshrt{i}}$.
For consistency, we will only use $\modshade{\nameshrt{i}}$ and $\modshade{\freshshrt{i}}$, and these will be each other's left names.
The relevant 2-cells are $\iskey{\appsym_i}{\freshshrt{i} \circ \nameshrt{i}}{\id}$ and its inverse $\iskey{\copysym_i}{\id}{\freshshrt{i} \circ \nameshrt{i}}$ (these are each other's left names), as well as $\iskey{\constsym_i}{\id}{\nameshrt{i} \circ \freshshrt{i}}$ and its inverse $\iskey{\dropsym_i}{\nameshrt{i} \circ \freshshrt{i}}{\id}$ (these are each other's left names).

The nominal dependent type system FreshMLTT \cite{freshmltt} used in Pitts's examples of interest \cite{nominal-transp} is substantially different from ours:
\begin{itemize}
	\item It features a name swapping operation that is semantically \emph{not} merely a substitution.
	\item Freshness for a name $i$ is not a modality or a type, but a judgement that can be derived for an expression $t$ if and only if $t$ is invariant under swapping $i$ with a newly introduced name $j$. As a consequence, freshness propagates through type and term constructors.
	\item Many equalities are strict where we can only guarantee an isomorphism.
\end{itemize}
For these reasons, we do not try to formally state that we can support locally fresh names in the sense of FreshMLTT.
Nevertheless, in \cref{fig:freshmltt} we give at least a heuristic $\interp \loch$ for translating programs in a subsystem of FreshMLTT to programs in the current system.
This subsystem does not feature name swapping, but it does feature the \emph{non-binding abstractions} originally defined in terms of it, as well as locally fresh names.
\begin{figure}%
	\footnotesize
	\begin{tabular}{l || c | c}
		name & FreshMLTT & \Msys{}
		\\ \hline \hline
		name quantification
		& $\namesym[i : N].T$
		& $\Modify{\tf a}{\namelong{i : \IX}}{\interp T}$
		\\
		name abstraction
		& $\alpha[i : N].t$
		& $\modify{\tf a}{\namelong{i : \IX}}{\interp t}$
		\\
		name application
		& $t @ i$
		& $\appshrt{i} \modappnovar{\freshshrt{i}} \interp t$
		\\ \hline
		non-binding quant.
		& $\Angles{i : N}.T$
		& $\Modify{\tf f}{\freshshrt{i}}{\Modify{\tf a}{\nameshrt{i}}{\interp T[\bikey{\appshrt{i}}{\copyshrt{i}}]}}$
		\\
		non-binding abs.
		& $\angles{i : N}.t$
		& $\begin{array}{l}
			\copyshrt{i}\,\interp t := \\
			\modify{\tf f}{\freshshrt{i}}{(\modify{\tf a}{\nameshrt{i}}{(\interp t[\bikey{\appshrt{i}}{\copyshrt{i}}])})}
		\end{array}$
		\\ \hline
		locally fresh name
		& $\nu[i : N].t$
		& $\dropshrt{i} \modappnovar{\nameshrt{i}} (\modifynovar{\freshshrt{i}}{\interp t})$
	\end{tabular}
	\caption{A heuristic for translating FreshMLTT \cite{freshmltt} to the current system.}
	\label{fig:freshmltt}
\end{figure}

Ordinary name quantification is simply translated to the modality $\modshade{\nameshrt{i}}$, and as usual application corresponds to the modal projection function (\cref{thm:projmod}).
The non-binding abstraction in FreshMLTT abstracts over a name that is already in scope, \emph{without} shadowing, i.e.\ it is a variable capturing operation.
This is translated essentially to the 2-cell $\iskey{\copyshrt{i}}{\id}{\freshshrt{i} \circ \nameshrt{i}}$, which is inverse to $\keyshade{\appshrt{i}}$ as we have seen earlier for a variable capturing operation (\cref{sec:recover:phi}).
Finally, a locally fresh name abstraction $\nu[i : N].t$ brings a name $i$ into scope in its body $t$, but requires that $t$ be fresh for $i$; in our system we would say that $t$ is a subterm of modality $\modshade{\nameshrt{i} \circ \freshshrt{i}}$.
The type of $\nu[i : N].t$ is the same as the type of $t$, which we can justify with the isomorphism $\keyshade{\dropshrt{i}} : \modshade{\nameshrt{i} \circ \freshshrt{i}} \cong \modshade{\id}$.
This isomorphism is essentially the content of the modal projection function of $\modshade{\freshshrt{i}}$, which we use to translate locally fresh name abstractions.

Note that for general multipliers, $\modshade{\freshshrt{i}}$ does not have an internal left adjoint and hence not a modal projection function either.
For the nullary affine interval, however, $\modshade{\freshshrt{i}} \cong \modshade{\transpshrt{i}}$, so the projection function is essentially $\keyshade{\unmeridsym_i}$!

\begin{example}
Consider Pitts's implementation of higher dimensional pattern matching \cite{nominal-transp}:
\begin{align*}
	&j : (\namesym[i : N].A \uplus B) \to (\namesym[i : N] . A) \uplus (\namesym[i : N] . B) \\
	&j\,\hat c = \nu[i : N] . \case{\hat c @ i}{
		\inl\,a &\mapsto& \inl\,(\angles{i : N}.a) \\
		\inr\,b &\mapsto& \inr\,(\angles{i : N}.b)
	}
\end{align*}
A brainless translation using \cref{fig:freshmltt} yields a type mismatch, because the non-binding abstractions will put the freshness constructor inside the coproduct constructors, whereas the translation of the locally fresh name abstraction mentions it again on the outside.
This is related to our earlier remark that in FreshMLTT, freshness silently propagates through type and term constructors, so here we have to manually intervene.
This is also necessary to insert invertible 2-cell substitutions in some places,
and to check whether we need a modal argument (i.e.\ the modality is handled at the judgemental level) or we need to explicitly use the modal constructor as is always done in \cref{fig:freshmltt}.
Doing so, we find
\begin{align*}
	&j : \Modify{\tf a}{\nameshrt{i}}{A \uplus B} \to \Modify{\tf a}{\nameshrt{i}}{A} \uplus \Modify{\tf a}{\nameshrt{i}}{B} \\
	&j\,\hat c =
		\dropshrt{i} \modappnovar{\nameshrt{i}}{
			\case{(\appshrt{i} \modappnovar{\freshshrt{i}}{(\hat c \varbikey{\dropsym_i}{\constsym_i})})}{
				\inl\,a \mapsto \modify{\tf f}{\freshshrt{i}}{(\inl\,(\modify{\tf a'}{\nameshrt{i}}{(a \varbikey{\appsym_i}{\copysym_i})}))} \\
				\inr\,b \mapsto \modify{\tf f}{\freshshrt{i}}{(\inr\,(\modify{\tf a'}{\nameshrt{i}}{(b \varbikey{\appsym_i}{\copysym_i})}))}
			}
		},
\end{align*}
which is \emph{exactly} what we found in \cref{sec:comparison:hdpm} adapted to our convention that we only want to mention $\modshade{\freshshrt{i}}$ and $\modshade{\lollishrt{i}}$, $\keyshade{\appshrt{i}}$, $\keyshade{\copyshrt{i}}$, $\keyshade{\constshrt{i}}$ and $\keyshade{\dropshrt{i}}$.
\end{example}
\begin{example}
	Consider Pitts et al.'s \emph{implementation} \cite[ex.\ 2.2]{freshmltt} of what is essentially BCM's $\Phi$-rule \cite{moulin,moulin-param3} since the boundary is empty:
	\begin{align*}
		&g : ((\namesym[i : N].A) \to \namesym[i : N].B) \to \namesym[i : N].(A \to B) \\
		&g\,f = \alpha[i : N].(\lambda x.f\,(\angles{i : N}.x)\,@\,i.
	\end{align*}
	We can translate this to the current system using \cref{fig:freshmltt}:
	\begin{align*}
		&g : (\Modify{\tf a}{\nameshrt{i}}{A} \to \Modify{\tf a}{\nameshrt{i}}{B}) \to \Modify{\tf a}{\nameshrt{i}}{A \to B} \\
		&g\,f = \modify{\tf a}{\nameshrt{i}}{\paren{ \lambda x.\appshrt{i} \modappnovar{\freshshrt{i}} (f[\bikey{\dropsym_i}{\constsym_i}]\,(\modify{\tf a'}{\nameshrt{i}}{(x\varbikey{\appsym_i}{\copysym_i}))}) }}.
	\end{align*}
\end{example}
The effect of conflating $\bilock{\sumshrt{i}}{\freshshrt{i}}$ with $\bilock{\lollishrt{i}}{\transpshrt{i}}$ is that affine function application no longer renders the non-fresh part of the context inaccessible using $\bilock{\sumshrt{i}}{\freshshrt{i}}$ but instead universally quantifies it using $\bilock{\lollishrt{i}}{\transpshrt{i}}$, so that we can capture variables as in FreshMLTT.
Remarkably, we do \emph{not} need the $\Phi$-rule (\cref{fig:phi}) for this nor pattern-matching for the transpension type (\cref{fig:transp:elim}), although these rules hold as the affine cubical interval is $\top$-slice fully faithful and shard-free (\cref{ex:multip:affine-cubes}).

\section{Conclusion} \label{sec:discussion}
To summarize, the transpension type can be defined in a broad class of presheaf models and generalizes previous internalization operators.
For now, we only present an extensional type system without an algorithmic typing judgement.
The major hurdles towards producing an intensional version with decidable type-checking, are the following:
\begin{itemize}
	\item We need to decide equality of 2-cells. Solutions may exist in the literature on higher-dimensional rewriting. Alternatively, we need to extend MTT with a language to reason about 2-cell equality \cite{abstract-mtt-lc}.
	\item The substitution modality should ideally reduce like ordinary substitution. \Cref{rem:compute-wknsym} explores what is needed for this to work.
	\item We need a syntax-directed way to close the section computation rules of $\Phi$ (\cref{fig:phi}) and transpension elimination (\cref{sec:structure}) under substitution, but see \cref{rem:close-subst}.
	\item We need to be able to decide whether the boundary predicate, or any similar predicate about shape variables such as $\idtp \IX i 0$ in cubical type theory, is true.
    This problem has been dealt with in special cases, e.g. in implementations of cubical type theory \cite{agda-cubical}.
\end{itemize}
Applications include all applications (discussed in \cref{sec:intro}) of the presheaf internalization operators recovered from the transpension type in \cref{sec:recover}.
Moreover, our modal approach to shape variables via multipliers allows the inclusion of Pinyo and Kraus's twisted prism functor \cite{pinyo-twisted} as a semantics of an interval variable, which we believe is an important advancement towards higher-dimensional directed type theory.

\section*{Acknowledgements}
We thank
	Jean-Philippe Bernardy,
	Lars Birkedal,
	Daniel Gratzer,
	Alex Kavvos,
	Magnus Baunsgaard Kristensen,
	Daniel Licata,
	Rasmus Ejlers M\o{}gelberg
	and
	Andrea Vezzosi
for relevant discussions,
and the anonymous reviewers for their feedback which has been a great guidance in improving the clarity of this paper.

\appendix
\section{Changelog} \label{sec:changelog}
The first preprint of this paper appeared in 2020 and is subsumed in \cite[ch.\ 7]{nuyts-phd}.
Since then, there have been significant changes, primarily terminological ones.
To help out readers coming back to this paper after having consulted earlier versions (or associated presentations), we list the most important changes here.

\subsection{Terminology}
\subsubsection{\Cref{def:multip}}
\begin{itemize}
	\item \textbf{Copointed} multipliers were formerly called \textbf{semicartesian},
	\item Multipliers that are \textbf{comonads} were formerly called \textbf{3/4-cartesian},
	\item \textbf{$\top$-slice faithful} multipliers were formerly called \textbf{cancellative},
	\item \textbf{$\top$-slice full} multipliers were formerly called \textbf{affine},
	\item \textbf{$\top$-slice shard-free} multipliers were formerly called \textbf{connection-free}, and \textbf{shards} were formerly called \textbf{connections},
	\item \textbf{$\top$-slice right adjoint} multipliers were formerly called \textbf{quantifiable}.
\end{itemize}

\subsubsection{\Cref{def:pointable}}
\begin{itemize}
	\item \textbf{Unpointable} objects were formerly called \textbf{spooky},
	\item \textbf{Not objectwise pointable} categories were formerly called \textbf{spooky}.
\end{itemize}

\subsubsection{\Cref{def:act-elements}}
\textbf{Presheafwise} faithful/full/shard-free/right adjoint multipliers were formerly called \textbf{providently} cancellative/affine/connection-free/quantifiable.

A previous version of the paper featured a different definition of shards/connections and presheafwise shard-freedom / provident connection-freedom, that was based on the \emph{indirect} boundary and \emph{indirectly} dimensionally split morphisms (\cref{sec:log:dir-dim-split}).
These notions are obsolete and are retained in the technical report \cite{transpension-techreport} under the names \textbf{indirect shard} and \textbf{indirect shard-freedom} (alongside the direct notions from \cref{def:act-elements}) solely for consistency with \cite[ch.\ 7]{nuyts-phd}.

\subsubsection{\Cref{def:dir-dim-split}} \label{sec:log:dir-dim-split}
A previous version of this paper had a different definition of boundary and of dimensionally split cells which was defined using the pullback of $\Xi \multip \yoneda U \to \yoneda U \supseteq \partial U$.
These notions are obsolete and are retained in the technical report \cite{transpension-techreport} under the name \textbf{indirect boundary} and \textbf{indirectly dimensionally split} cells (alongside the direct notions) solely for consistency with \cite[ch.\ 7]{nuyts-phd}.

As the notions coincide for $\Xi = \top$, \cref{thm:boundary} holds w.r.t.\ the indirect boundary if $\Xi = \top$.
By construction, the indirect boundary is respected by substitution, while the transpension type is generally not if the multiplier is not $\top$-slice fully faithful (\cref{sec:ff:rules:discussion}).
Hence, the indirect notions generally diverge from the direct ones and the indirect version of \cref{thm:boundary} breaks down if $\Xi \neq \top$ and the multiplier is not $\top$-slice fully faithful.

\subsubsection{\Cref{thm:quotient}}
The \textbf{quotient theorem} was formerly called \textbf{kernel theorem}.

\subsection{Ticks}
A previous version of this paper assigned names to locks, called \emph{ticks}, after the tick of the (c)lock in \cite{clocks-ticking}.
The usage of this notation is orthogonal to the introduction of \Msys{}; a version of the original \MTT{} paper \cite{mtt-paper} in tick notation is included in \cite[\S 5.3]{nuyts-phd}.
An improved notation is proposed in \cite{abstract-mtt-lc}.

\subsection{Lockless notation}
A previous version of this paper supplemented the \MTT{} notation with a so-called \emph{lockless} notation in which $(\Gamma, \bilock{\kappa}{\mu})$ was denoted as $\lmodshade{\kappa}(\Gamma)$.
This notation was abolished in favour of \emph{left adjoint reminders} (\cref{sec:mtt:left}).

\bibliographystyle{alphaurl}
\bibliography{lmcs-refs.bib}

\end{document}